\begin{document}

\makeatletter
\let\stashsmartcomma\sm@rtcomma
\let\sm@rtcomma,
\makeatletter
\title{Expanding the Class of Free Fermions via Twin-Collapse Methods}

\author{Jannis Ruh}
\email{jannis.ruh1@uts.edu.au}
\affiliation{Centre for Quantum Software and Information, School of Computer Science, Faculty of Engineering \& Information Technology, University of Technology Sydney, NSW 2007, Australia}

\author{Samuel J. Elman}
\email{samuel.elman@uts.edu.au}
\affiliation{Centre for Quantum Software and Information, School of Computer Science, Faculty of Engineering \& Information Technology, University of Technology Sydney, NSW 2007, Australia}


\begin{abstract}
    We present a novel graph-theoretic approach to simplifying generic many-body Hamiltonians. Our primary result introduces a recursive twin-collapse algorithm, leveraging the identification and elimination of symmetric vertex pairs (twins), as well as line-graph modules, within the frustration graph of the Hamiltonian. This method systematically block-diagonalizes Hamiltonians, significantly reducing complexity while preserving the energetic spectrum. Importantly, our approach expands the class of models that can be mapped to non-interacting fermionic Hamiltonians (free-fermion solutions), thereby broadening the applicability of classical solvability methods. Through numerical simulations on spin Hamiltonians arranged in periodic lattice configurations and Majorana Hamiltonians, we demonstrate that the twin-collapse increases the identification of simplicial and claw-free graph structures, which characterize free-fermion solvability. Finally, we extend our framework by presenting a generalized discrete Stone-von Neumann theorem. This comprehensive framework provides new insights into Hamiltonian simplification techniques, free-fermion solutions, and group-theoretical characterizations relevant for quantum chemistry, condensed matter physics, and quantum computation.
\end{abstract}

\maketitle

\makeatletter
\let\sm@rtcomma\stashsmartcomma
\makeatother

Many-body Hamiltonians are of interest from a fundamental and applied perspective, for example in quantum chemistry simulations and condensed matter ~\cite{Nguyen2024Quantum,Watts24}. Particularly interesting is the technique of studying the complexity of solving Hamiltonians based on the commutation relations between the terms in the Hamiltonian. For example, while the complexity of the local Hamiltonian problem in general is known to be \acs{qma}-complete~\cite{Kempe2006Complexity}, the complexity of the commuting variant of the local Hamiltonian problem was studied in Refs.~\cite{bravyi2005Commutative,schuch_complexity_commuting+ham} and shown to be \acs{np}-complete. Similarly, the complexity of the non-contextual local Hamiltonian problem, defined by a restricted commutation structure, was also shown to be \acs{np}-complete in Ref.~\cite{Kirby2020}, also a reduction in complexity from the general case. References~\cite{Patel2024Exactly,Patel2024Extension} have also studied classical algorithms for solving Hamiltonians with reduced complexity based on commutation structure.
Furthermore, in special cases it is known that the commutation structure of a Hamiltonian allows for integrability, or even efficient solutions by classical means. One such example is the case where a many-body Hamiltonian admits a description in terms of non-interacting fermions.

The Jordan-Wigner transformation~\cite{jordan_wigner_transformation}, and its generalisations (see Ref.~\cite{chapman_solvable_spin_models} and references therein), provide a map between spin and fermionic representations for many-body models. This family of maps represent monomial-to-monomial transformations in the sense that one Pauli string is mapped to one fermionic string. In the special case when such transformations result in a fermionic Hamiltonian that is quadratic in fermionic operators, such that the model is described by a system of non-interacting fermions, the complexity of the solution is exponentially reduced, thus allowing a tractable solution on a classical machine~\cite{kaufman1949crystal,onsager1944crystal,jordan_wigner_transformation,schultz1964two,kitaev2006anyons,mandal2009exactly,fradkin1989jordan, wang1991ground, huerta1993bose, batista2001generalized, bravyi2002fermionic, verstraete2005mapping, nussinov2012arbitrary, chen2018exact, chen2019bosonization, backens2019jordan, tantivasadakarn2020jordan}. 

More recently, \citet{fendley2019free} presented an example of a model which admits a free-fermionic solution despite provably admitting no monomial-to-monomial map from spins to bilinear fermions. This example was then generalised to a whole family of models first in one dimension~\cite{elman_free_fermions_behind_the_disguise} and then arbitrary dimensions~\cite{chapman_unified_free_fermions} using graph theoretic principles based on the commutation structure of the Pauli terms. The binary commutation relations of Pauli operators allow the application of graph-theoretical methods to find free-fermionic solutions of the spin Hamiltonians, but also to characterise Pauli Lie algebras~\cite{kokcu2024classification,Wiersema2024,aguilar2024full}, which helps to further our understanding of the complexity of simulating such models~\cite{Pozsgay2024}.

In this work, we consider generic Hamiltonians written in any basis (spin or particle) that has a binary commutation rule. Our first main result is a simplification of Hamiltonians in terms of a reduction of number of terms, based the recognition and `collapsing' of graph theoretic structures known as twins. This twin-collapse technique allows to block diagonalise the Hamiltonians into symmetric subspaces, resulting in a simpler Hamiltonian within each block allowing to apply further analysis techniques. More specifically, we show that there exists a set of orthogonal complete projectors that effectively remove all terms in the Hamiltonian that correspond to a recursive collapse of all twins in the frustration graph. The projectors commute with the Hamiltonian and thus, we find that the simplified Hamiltonian is the same up to the projectors and the weights within each block. These results can be seen as a direct generalisation of Ref.~\cite{Kirby2020}. Furthermore, the twin-collapse technique allows us to not just eliminate twins, but also other structures, including modules that are line graphs.

We use the block-diagonalisation-by-collapse technique to extend the class of free-fermion models presented in Ref.~\cite{chapman_solvable_spin_models} and then numerically evaluate the ubiquity of such a solution method for several classes of Hamiltonians. Specifically, we implement the twin-collapse technique to simplify the frustration graph of various random spin Hamiltonians, as well as for generic Majorana Hamiltonians, and then employ a graph structure detection algorithm to determine the likelihood that the resulting graph has the requisite properties to admit a solution via Fendley's method~\cite{fendley2019free}.

Lastly, we discuss generalisations of the Pauli group, and similar groups to which our  previous results are applicable, for example, the Majorana group. This leads to a variation of the Stone-von Neumann theorem~\cite{heinrich_stabiliser_techniques}, showing the conditions under which two members of this family are equivalent to each other by unitary conjugation.

The paper has the following structure: In \cref{s"twin_block_diag}, we discuss our main result of how twin symmetries in the frustration graph lead to a block-diagonalisation of Hamiltonians. We then apply this technique to expand the class of free-fermion models in \cref{s"res_free_fermions}, followed by numerical examples. Finally, in \cref{s"res_stone_von_neumann} we discuss generalisations of the Pauli group and the discrete Stone-von Neumann theorem. \Cref{s"modular_decomposition,s"app_graphs} cover technical details on the modular decomposition of graphs relevant to the numerical simulations. In \cref{s"app_block_diag} we give a full constructive proof of the block diagonalisation, and in \cref{s"more_scf_examples} we extend the discussion on the free-fermion models. In \cref{s"app_isomorphisms} we fully proof the generalised discrete Stone-von Neumann theorem.

\section{Preliminaries}\label{s"preliminaries}

\subsection{Graph Theory}

A graph $G=(V,E)$ is a set of vertices $V$, and an edge set $E \subset V^{\times 2}$. We consider only simple graphs (undirected graphs with no self-loops). For a given $X \subseteq V$, the \textit{induced subgraph} $G[X] = (X, E \cap (X \times X))$ is the graph that contains vertices $X$ and all edges that have \textit{both} endpoints in $X$.
The \emph{order} of a graph $\Xi$, is the cardinality of its vertex set.

The \textit{open neighbourhood} of a vertex $x\in V$ is the set ${\neigh(x) = \set{y \in V}{(x, y)\in E}}$, with the \textit{complementary} open neighbourhood defined as ${\neigh^{c}(x)=\set{y \in V}{y\neq x, (x, y)\notin E}}$. The \textit{closed neighbourhood} is defined as $\neigh[x]=\neigh(x)\cup \{x\}$ with \textit{complementary closed neighbourhood} defined analogously ($\neigh^c[x]=V\setminus\neigh(x)$).

An \emph{independent set} $S\subseteq V$ is a subset of vertices with no edges between them. A \textit{clique} or complete subgraph $K\subseteq V$ is a subset of vertices such that all vertices in $K$ are pairwise connected. A \emph{simplicial clique}, $K_s$ is a non-empty clique such that, for every vertex $x \in K_s$, $\neigh(x)$ induces a clique in $G[V{\setminus}K_s]$. The \textit{claw} $K_{1,3}$ is the complete bipartite graph on one and three vertices, consisting of a central vertex neighbouring to every vertex in an independent set of order three.

A subset of vertices, $X \subseteq V$ are true (respectively false) siblings if for all $x, y \in X$ we have $\neigh[x] = \neigh[y]$ ($\neigh(x)=\neigh(y)$); in the special case of $\abs{X} = 2$, these vertices are referred to as true (false) \textit{twins}. A graph $G$ is called a \textit{cograph} if, and only if every non-trivial induced subgraph contains at least one pair of twins, false or true. 
A natural extension of siblings are modules defined as:
\begin{definition}[Modules]\label{.modules}
    Let $G = (V, E)$ be a graph. A module $X \subseteq V$ is defined through the following equivalent definitions:
    \begin{itemize}
        \item For all $y \in V{\setminus}X$ it holds
        \begin{equation}
            \exists x \in X: y \in \neigh(x) \iff \forall x \in X: y \in \neigh(x)\;.
        \end{equation}
        \item For all $x, y \in X$ it holds
        \begin{equation}
            \neigh(x) {\setminus} X = \neigh(y) {\setminus} X\;.
        \end{equation}
    \end{itemize}
\end{definition}
Recursively partitioning a graph into modules, i.e., constructing the modular decomposition tree of the graph (cf. \cref{s"modular_decomposition}), allows us to apply efficient algorithms to recursively detect and remove twins from a graph as well as detect whether a graph is claw-free and if so, if it contains simplicial cliques~\cite{chudnovsky_growing_without_cloning}.

A graph $G$ is a \textit{line graph} if there exists a root graph $R$, such that every edge in $R$ maps bijectively onto a vertex in $G$, and there is an edge in $G$ if, and only if, the two corresponding edges are incident to the same vertex in $R$.

\subsection{Linear Algebra}

Let $Y$ be a vector space of dimension \en[N] over a field $K$. We write ${\M_N(K) \cong \ls(Y)}$ for the set of all $N \times N$ matrices with entries in $K$, or linear mappings in $Y$. We denote by ${\GL_N(K) \cong \GL(Y)}$ the multiplicative group of all invertible matrices, or invertible linear mappings. If $K = \C$, we denote the set of unitary matrices (or operators acting on the vector space $Y$) by $\U(N) \cong \U(Y)$.

For even $N$, the standard symplectic form (also if $K$ is just a ring) is defined as
\begin{equation}
  \mathrm{\Omega}_N = \matp{0 & \1_{N/2} \\ -\1_{N/2} & 0}\;.
\end{equation}
Given $A, B \in \M_N(K)$, we define the Hilbert-Schmidt inner product as $\braket{A,
B}_{\mathrm{HS}} = \tr\w(B^\da A)$.

Given an operator $A$ and a projector $P$, that commutes with $A$, projecting into a space $U$, we denote the restriction of $A$ into $U$ by $\rst{A}{U} = \rst{A}{P}$.

\subsection{Group and Ring Theory}

Let $(J, \cdot)$ be a group that acts on a set $M$ from left and right. We write the
(group) commutator as
\begin{equation}
    \begin{split}
        \tbk{\cdot, \cdot} : J \times J &\to J,
  \\
  (g, h) &\mapsto ghg^{-1}h^{-1}\;,
    \end{split}
\end{equation}
and the conjugation action, $g*x$, is defined as
\begin{equation}
    \begin{split}
        * : J \times M &\to M,\\
        (g, x) &\mapsto gxg^{-1} \;.
    \end{split}
\end{equation}
For a subset $X \subseteq J$, $\braket{g \in X}$ denotes the subgroup generated by the
elements in $X$.

Given a ring $(R, +, \cdot)$, we define $R^\times$ to be the multiplicative group formed by set of units in $R$, i.e.,
invertible elements.

For the group $(J, \cdot)$ with representation $\mu: J \to \GL(Y)$, for some finite dimensional vector space $Y$ over a field $K$, the \emph{character} $\chi$ of the representation is the trace of the representation, i.e.,
\begin{equation}
    \chi = \tr \circ \mu: J \to K\;.
\end{equation}

\section{Twin-Symmetry Block-Diagonalisation} \label{s"twin_block_diag}

In this section, we describe our first result, and outline how to simplify Hamiltonians based on collapsing twins in the frustration graph. Mathematical details can be found in \cref{s"mathematical_details,s"app_block_diag}.


Let $\hi$ be a finite-dimensional complex Hilbert space. We denote by $S\subset \U(\hi)$ a group of unitary operators on $\hi$ such that for all $g,h\in S$ ${\tbk{g, h}\in \{-1,+1\}}$.
Given a hermitian subset $V \subseteq S$, we study generic Hamiltonians of the form
\begin{equation}
    H = \sum_{g \in V} w_g g \;, \label{e"hamiltonian}
\end{equation}
with $w_g \in \R{\setminus}\bc{0}$ for all $g \in V$. Without loss of generality, we assume that for all $g \in V$ there is no $h \in V$ with $g \propto h$.

One example of such a group $S$ is the Pauli group which provides an orthonormal hermitian basis of $\M_{2^n}(\C)$; another example is the Majorana group. In~\cref{s"res_stone_von_neumann}, we study a generalised family of such groups and show the conditions under which they are equivalent to each other. The characteristic binary commutation relations of the operators in $S$ allow us to study the Hamiltonian with graph-theoretical methods. To this end we define the frustration graph as follows:
\begin{definition}[Frustration graph]
\label{def"frustration_graph}
The \emph{frustration graph} of $H$ is defined as $G \coloneqq \frust{H} \coloneqq
\frust{V} \coloneqq (V, E)$, where the edge set is given by
\begin{equation}
    E = \set{(g, h) \in V \times V}{\tbk{g, h} = -1}\;,
\end{equation}
that is, vertices are neighbouring if, and only if, the corresponding operators
anticommute.
\end{definition}

Let $\bc{g, h} \subseteq V$ be twins in the frustration graph $G$ of $H$. For all $x \in V{\setminus}\bc{g, h}$ we have $\tbk{gh, x} = 1$; furthermore if $\bc{g, h}$ are false twins, then we also have $\tbk{gh, g} = \tbk{gh, h} = 1$; thus, the product of the false twins, $gh$, defines a symmetry of the Hamiltonian. When $\bc{g,h}$ are false twins, it is possible to find a projector with the same property, effectively removing one of the twins, as we show below.
When $\bc{g,h}$ are true twins, one can find a unitary operator that merges the twins in \cref{e"hamiltonian} into a single vertex, but leaves the rest of the Hamiltonian invariant. Despite the fact that only the false twins represent true symmetries of the Hamiltonian, we abuse terminology and refer to all twins (both false and true) collectively as \textit{twin symmetries}, since we are able to remove both without affecting the spectrum.

By repeatedly applying these projections and rotations to all sibling sets within the frustration graph, we can simplify the Hamiltonian while preserving its spectrum. Specifically, the process involves first rotating all primitive true siblings so that each true-sibling set collapses into a single vertex. This step is then followed by a projection that further condenses the graph by merging each distinct set of false siblings into single vertices. The entire procedure is recursively repeated until no sibling structures remain in the graph, leading us to the following result:
\begin{theorem}[Informal]\label{.result_informal}
Let $H$ be a Hamiltonian with frustration graph $G = (V, E)$, and let $X = \bc{-1, +1}^m$, for some \en[m], be a parameter space. Then there exists a complete set of commuting, orthogonal projectors $\bc{P(x)}_{x \in X}$, which commute with $H$, and set of unitary rotations $\bc{U(x)}_{x \in X}$, such that
\begin{equation}\label{e"rotated_blocks}
    H = \sum_{x \in X} P(x) \rst{H}{P(x)} P(x)
\end{equation}
and
\begin{equation}
    \rst{H}{P(x)} = \rst{\w(U\ur(x) * H_C(x))}{P(x)}\;,
\end{equation}
with
\begin{equation}
    H_C(x) = \sum_{g \in V'} w'_g\w(x) g \;\;\;\;\; (w'_g \in \R)
\end{equation}
where $V'$ is the vertex set obtained by recursively collapsing all twins in $G$. Furthermore $P(x)$ commutes with $U\ur(x) * H_C(x)$ for all $x \in X$.
\end{theorem}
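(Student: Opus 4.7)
The plan is to prove \cref{.result_informal} by induction on $\abs{V}$, unfolding the recursive twin-collapse into a finite sequence of elementary operations: one projection per false-twin pair and one rotation per true-twin pair. The base case is a frustration graph with no twins, in which case $V' = V$, $X$ is a singleton, and one takes $P(x) = U(x) = \1$ and $H_C = H$. For the inductive step I would pick any twin pair $\bc{g,h} \subseteq V$, apply the appropriate primitive collapse below, and invoke the inductive hypothesis on the resulting Hamiltonian, whose frustration graph has one fewer vertex.

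For a \emph{false}-twin pair ($\neigh(g) = \neigh(h)$), $g$ and $h$ commute, and for every other $x \in V$ either both of $g, h$ anticommute with $x$ or neither does, so $\tbk{gh, x} = \tbk{g, x}\tbk{h, x} = +1$. Hence $gh$ commutes with every term of $H$; since $(gh)^2$ is a scalar, this yields (up to an overall phase) a complementary pair of orthogonal projectors $P_\pm = (\1 \pm gh)/2$ that commute with $H$. On the $\pm$ eigenspace $h = \pm g$, so $w_g g + w_h h$ collapses to the single term $(w_g \pm w_h)g$ while every other term is untouched. The parameter $x$ gains one extra $\pm 1$ coordinate.

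For a \emph{true}-twin pair ($\neigh[g] = \neigh[h]$), $g$ and $h$ anticommute, but the same neighbourhood argument gives $\tbk{gh, x} = +1$ for every $x \in V \setminus \bc{g, h}$. Consequently the unitary $R(\theta) = \exp(\theta\, gh/2)$ acts trivially by conjugation on every other term of $H$ while rotating the anticommuting pair: choosing $\theta$ with $\tan\theta = w_h / w_g$ gives $R(\theta)(w_g g + w_h h) R(\theta)^\da = \sqrt{w_g^2 + w_h^2}\, g$, fusing the twin pair into a single rescaled vertex. No projector arises at this step; $R(\theta)$ is absorbed into $U(x)$.

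The step I expect to require the most care is the bookkeeping that threads the two primitive moves together: projections and rotations are interleaved, and the rotation angles of later true-twin steps depend on the eigenvalue choices made by earlier false-twin steps. The cleanest organisation is to run the whole algorithm in the rotated frame and pull everything back through the composed unitary $U(x)$: the global projector $P(x)$ is then the pull-back of the product of the elementary false-twin projectors $(\1 \pm g_i h_i)/2$ selected by the bits of $x \in X = \bc{-1,+1}^m$, where $m$ is the number of false-twin collapses executed. Each elementary projector commutes with the Hamiltonian at the stage at which it is introduced, and pulling back through $U(x)$ preserves this commutation with the original $H$, so $\bc{P(x)}_{x \in X}$ is a complete, mutually orthogonal, commuting family that commutes with $H$ and block-diagonalises it. The restricted identity $\rst{H}{P(x)} = \rst{(U^\da(x) * H_C(x))}{P(x)}$ and the commutation of $P(x)$ with $U^\da(x) * H_C(x)$ are then immediate from the construction.
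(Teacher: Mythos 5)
Your construction is essentially the paper's own: your two primitives are exactly \cref{.symmetry_restriction} (false-twin projectors $(\1 \pm gh)/2$) and \cref{.merge_paulis} (true-twin rotations $\ee^{\theta gh/2}$), and your idea of running the algorithm in the rotated frame and pulling the elementary projectors back through the accumulated rotations is precisely the paper's definition of the rotated projectors $P_R^{j} = \w(U^{j-1 \gps})\ur * P^{j}$ in \cref{.rotated_projectors}; the only organisational difference is that you collapse one twin pair at a time and induct on $\abs{V}$, whereas the paper collapses all maximal sibling sets per round and inducts along the collapse sequence of \cref{.main_collapse_graph}.

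The genuine gap is in the step you yourself flag as delicate and then declare immediate. The claim that ``pulling back through $U(x)$ preserves this commutation with the original $H$'' is false for an individual elementary projector: a pulled-back projector commutes with the collapsed Hamiltonian of its stage, equivalently with $H$ only on the block selected by the earlier projectors, not with $H$ globally, because its generator $g_i h_i$ generically anticommutes with vertices removed in earlier rounds, and those terms are still present in $H$. What is true --- and is exactly what \cref{.result_full} proves inductively, together with \cref{.rotated_projectors_commute} --- is that the nested products $P(x)$ commute with $H$, are mutually commuting and orthogonal, and that conjugating each elementary projector by the full $U(x)$ rather than only by the rotations preceding it is harmless. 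All of this hinges on an observation your sketch never states: each false-twin product commutes with every vertex operator that survives to its stage, hence with all later rotation exponents and later projector generators (\cref{.projectors_commute}); the resulting chain of interchanges, carried out in the correct nested order, is the bulk of \cref{s"app_block_diag}, not a corollary of the construction. Two smaller repairs: take an independent generating set of the false-twin symmetry group and exclude $-1$ as in \cref{.false_twin_symmetries}, since in a pair-by-pair recursion later products $g_i h_i$ can be dependent on earlier ones, leaving some sign branches empty and making your bit count $m$ incorrect; and since a collapsed weight $w_g \pm w_h$ can vanish, define the collapse sequence on the graph $G$ itself (the paper allows $w'_g = 0$) rather than re-reading the frustration graph off each branch's Hamiltonian, otherwise the two branches of a single false-twin projection can yield different $V'$ and different parameter spaces, breaking the uniform statement of \cref{.result_informal}.
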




\Cref{e"rotated_blocks} describes a block diagonalisation of the Hamiltonian, where each block represents a distinct symmetric subspace of the projectors, $P(x)$. Within each symmetric subspace, the Hamiltonian may be rotated independently such that the Hamiltonian may be described by a reduced Hamiltonian. We refer to this process as \textit{collapsing} the twins, and the reduced Hamiltonian as a \textit{collapsed Hamiltonian}. Notably, the frustration graph is the same for all symmetric subspaces. The reduction of the frustration graph corresponds to the removal of summands in the Hamiltonian, simplifying the model. Note that this process of collapsing cannot be done in one step in general, i.e., multiple alternating rounds of collapsing false twins and true twins may be necessary as the example in \cref{f"minimal_alternating_tree} shows. This iterative process can be implemented recursively on the modular decomposition tree of the frustration graph as we describe in \cref{s"modular_decomposition}.

In the special case where $G$ is a cograph, the Hamiltonian may be collapsed to a single operator (\cref{.cograph_collapse}), for example as the graph in \cref{f"minimal_alternating_tree}, i.e., the Hamiltonian is effectively diagonalised; however, it should be noted that the parameter space $X$ may be exponential in size. In this way, we can see this special case as an extension of the result from Ref.~\cite[Lemma~1]{Kirby2020}.

In \cref{s"app_block_diag}, we discuss how the sequence of twin-collapses can be extended to also allow collapsing modules that are not cographs but line graphs (we refer to these modules which are line graphs as \textit{line-graph modules}), if the group $S$ is unitarily equivalent to the Pauli group. We discuss a set of unitarily equivalent Pauli groups in \cref{s"res_stone_von_neumann}; one example would be the Majorana group.
\Cref{.result_informal} can then be extended as follows:
\begin{corollary}\label{.full_result_informal}
In \cref{.result_informal}, we can set $V'$ to be the vertex set obtained by recursively collapsing all twins \emph{and line-graph modules} in $G$ if the group $S$ is unitarily equivalent to the Pauli group.
\end{corollary}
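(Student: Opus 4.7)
The plan is to augment the recursive collapse of \cref{.result_informal} with an additional reduction rule triggered by line-graph modules, and then weave both rules into a single recursion on the modular decomposition tree of $G$. The hypothesis that $S$ is unitarily equivalent to the Pauli group is exactly what permits the local introduction of Majorana-like auxiliary generators inside each line-graph module---a step that has no analogue for an arbitrary group with binary commutation.

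First I would localise to a single line-graph module $M \subseteq V$ with root graph $R$, so that each $g_e \in M$ corresponds to an edge $e = \{u,v\} \in E(R)$. Using the Pauli-group equivalence, I would import the construction of Refs.~\cite{fendley2019free, chapman_solvable_spin_models} to produce anticommuting generators $\{\eta_v\}_{v \in V(R)}$ inside the subalgebra generated by $M$ such that $g_e = i\,\eta_u \eta_v$ reproduces the frustration pattern of $G[M]$. The module property guarantees that every $h \in V \setminus M$ has a uniform commutation relation with $M$ as a whole, which lifts to a globally consistent action of $h$ on the auxiliary $\eta$-algebra that commutes with any parity-preserving orthogonal transformation of the $\eta$'s. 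Consequently $H|_M = \sum_e w_e\, i \eta_u \eta_v$ is a quadratic Majorana Hamiltonian, and a Bogoliubov rotation $U_M$ diagonalises it into decoupled single-mode terms, which form the collapsed vertex set replacing $M$.

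The second step is to fold this line-graph collapse into the recursion of \cref{.result_informal}: at each internal node of the modular decomposition tree, cograph quotients trigger a twin-collapse while line-graph quotients trigger the reduction just described, and the procedure recurses to the leaves. This yields the claimed vertex set $V'$, together with the combined sector-dependent unitaries $U(x)$ and projectors $P(x)$. The hard part will be the compatibility of \emph{nested} collapses: the Bogoliubov rotation $U_M$ on an inner line-graph module must commute with the projectors inherited from outer twin-collapses, and the twin-collapse rotations applied at higher levels must preserve the $\eta$-subalgebra as a whole. Both requirements should follow from the uniform commutation guaranteed by the module property---since every outside operator interacts with $M$ through a single sign---but making the induced action of each collapse on successive modular quotient graphs fully explicit, so that the overall product of $P(x)$'s and $U(x)$'s is well-defined and commutes with $H$, is the principal technical bookkeeping the proof would need.
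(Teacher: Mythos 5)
Your overall architecture (collapse line-graph modules by free-fermionising them locally, then fold this into the recursive collapse on the modular decomposition tree) matches the paper's strategy, which extends \cref{.result_full} via the general mechanism of \cref{.extension}: any unitary may be inserted into the collapse sequence provided it acts locally on the module \emph{and} commutes with all earlier false-twin projectors and with every operator outside the module. However, the step you defer as ``technical bookkeeping'' is precisely where your argument breaks, and it is the point the paper spends its proof on. The module property only guarantees that an outside operator $h$ commutes with \emph{all} module terms or anticommutes with \emph{all} of them; it does not give $h$ a definite commutation relation with the generators of your Bogoliubov rotation. Writing $g_e \propto i\eta_u\eta_v$, a bilinear $\eta_u\eta_w$ between endpoints of a path of odd length $k$ in the root graph is (up to phase) a product of an \emph{odd} number of module terms, so an outside $h$ that anticommutes with every vertex of the module anticommutes with that generator, and the rotation it generates does not leave $h$ (nor the false-twin projectors built from such operators) invariant. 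The same problem afflicts single modes $\eta_v$, which in general do not even lie in the algebra generated by $M$. So the claim that ``uniform commutation through a single sign'' lifts to commutation with all parity-preserving transformations of the $\eta$'s is false, and without it your $U_M$ cannot be legally inserted into the sequence of \cref{.result_full}.

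The paper closes exactly this gap, and it is here (not in constructing the $\eta$'s) that the hypothesis of unitary equivalence to the Pauli group is really used: working WLOG in the Pauli group, take the hermitian Pauli generators that map a line-graph module to an independent set \cite{chapman_solvable_spin_models}; whenever such a generator $g$ anticommutes with an operator $h$ outside the module (or with a Pauli string $p$ appearing in an earlier projector $(\1+p)/2$), adjoin an ancilla qubit $i$, replace $h \mapsto h\otimes Z_i$ and $g \mapsto g\otimes X_i$, and recover the original Hamiltonian as $P H' P$ with $P=(\1+Z_i)/2$. The modified Hamiltonian has the same frustration graph, the modified unitary now commutes with everything outside the module and with the prior projectors, and the ancillas are projected out at the end; the resulting independent set is then absorbed by the ordinary false-sibling projections. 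Your proof would need either this ancilla construction or a genuine replacement for it (e.g.\ a proof that one can always choose collapsing generators that are even products of module terms, which is not true in general), so as it stands the compatibility of the line-graph rotations with the outer collapse structure is a real gap rather than bookkeeping.
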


More generally, further techniques may be applied to the simplified model to characterise the solvability of the model.

\begin{figure}[t]
  \centering
 \includegraphics{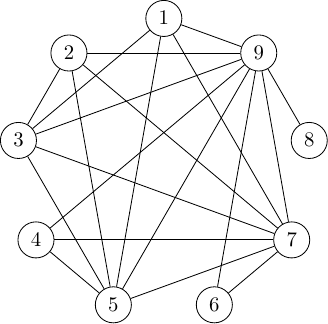}
  \caption{
    Example of a graph that requires multiple alternating rounds of false and true twins collapses. The graph can be fully collapse onto a single vertex by collapsing the following false and true twins in that order: $\bc{1, 2} \mapsto \bc{2}$, $\bc{2, 3} \mapsto \bc{3}$, $\bc{3, 4} \mapsto \bc{4}$, $\bc{4, 5} \mapsto \bc{5}$, $\bc{5, 6} \mapsto \bc{6}$, $\bc{6, 7} \mapsto \bc{7}$, $\bc{7, 8} \mapsto \bc{8}$, $\bc{8, 9} \mapsto \bc{9}$. Note that this order is strict since, for example, $\{3, 4\}$ is not a twin of any kind until $\{1, 2, 3\}$ have been merged.
    }\label{f"minimal_alternating_tree}
  \vspace{-2ex}
\end{figure}

\subsection{Mathematical Details}\label{s"mathematical_details}

\subsubsection{Twin Symmetries}\label{s"twin_symmetries}

The goal is to block-diagonalise $H$ by recognising graphical structures; specifically, we shall use false-twin symmetry projections and true-twin rotations to simplify the frustration graph $G$ of $H$. Our results are based on the following proposition, which we shall quickly prove here for completeness: First, let us deal with the false-twin symmetries.
\begin{lemma}[\cite{chapman_solvable_spin_models}]\label{.false_twin_symmetries}
    Let $G = (V, E)$ be the frustration graph of a Hamiltonian $H$. Define the group
    \begin{equation}
      L' = \groupset{gh}{(g, h) \in V \times V \text{ are false twins}}\;.
    \end{equation}
    If $-1 \notin L'$, set $L = L'$. Otherwise let the group $L$ be generated by representatives of $L'/\braket{-1}$. $L$
    is abelian, $-1 \notin L$ and its group elements commute with each term in $H$ and are hermitian and unitary; specifically, they define symmetries.
\end{lemma}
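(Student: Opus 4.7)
The plan is to extract from the definition of false twins a strong constraint on how $g$ and $h$ commute with the rest of $V$, and then propagate this constraint through the generating set of $L'$.

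First, I would unwind the definitions. If $\bc{g,h}\subseteq V$ are false twins then $\neigh(g)=\neigh(h)$, so in particular $g\notin\neigh(h)$ and $h\notin\neigh(g)$, giving $\tbk{g,h}=+1$ and hence $gh=hg$. Moreover, for any $x\in V\setminus\bc{g,h}$, the condition $\neigh(g)=\neigh(h)$ yields $\tbk{g,x}=\tbk{h,x}$. Consequently,
\[
    (gh)\,x\,(gh)^{-1} = g\bigl(hxh^{-1}\bigr)g^{-1} = \tbk{g,x}\tbk{h,x}\,x = x,
\]
while $(gh)\,g\,(gh)^{-1}=g$ and $(gh)\,h\,(gh)^{-1}=h$ follow directly from $gh=hg$. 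Thus every generator $gh$ of $L'$ centralises every element of $V$, and hence every term in $H$.

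Second, I would propagate the properties from generators to all of $L'$. Any two generators $g_1h_1$ and $g_2h_2$ commute, because $g_1h_1$ commutes with each of $g_2$ and $h_2$ separately, so $L'$ is abelian and commutation with $H$ passes to arbitrary products. For hermiticity and unitarity, note that each $g\in V$ is a hermitian unitary and therefore an involution ($g^2=1$); combined with $gh=hg$ this gives $(gh)^\da = hg = gh$ and $(gh)^2 = g(hg)h = g^2 h^2 = 1$, and the class of mutually commuting hermitian involutions is closed under products, so the property extends to all of $L'$.

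Finally, I would handle the $-1\notin L$ condition. The scalar $-1$ is central, hermitian, and unitary, so passing from $L'$ to a set of representatives of $L'/\braket{-1}$ multiplies each element by at most a sign and therefore preserves abelianness, commutation with $H$, hermiticity, and unitarity, while by construction removing $-1$ from the group. I do not foresee any substantive obstacle: the only point requiring care is the bookkeeping of this $\pm 1$ ambiguity when $-1\in L'$, but all of the arguments above are insensitive to an overall sign, so the lemma follows directly.
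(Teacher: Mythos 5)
The first two-thirds of your argument are fine and essentially match the paper's (more terse) reasoning: each generator $gh$ of $L'$ centralises every element of $V$ and hence every term of $H$, and $L'$ is an abelian group of commuting hermitian unitary involutions, so all of these properties pass to arbitrary products and survive multiplication by signs. The genuine gap is your final step. Writing that passing to representatives of $L'/\braket{-1}$ removes $-1$ ``by construction'' is not correct: $L$ is \emph{generated} by those representatives, and even though no single chosen representative equals $-1$, a product of several of them can equal $-1$, so $-1\notin L$ is a claim that requires proof, not sign bookkeeping. In fact, for a careless choice of representatives it is simply false: if $a\braket{-1}$, $b\braket{-1}$ and $ab\braket{-1}$ are distinct cosets and one picks $a$, $b$ and $-ab$ as representatives, then $a\cdot b\cdot(-ab)^{-1}=-1$ lies in the generated group. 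This is precisely the only nontrivial content of the lemma, and it is the part your proposal waves away.

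The paper closes this hole by fixing representatives $g_1,\ldots,g_m$ of an \emph{independent} generating set of $L'/\braket{-1}$ and arguing by contradiction: if $-1=g_1\cdots g_l$ for some $2\le l\le m$ (after reordering), then in the quotient one gets $g_2\braket{-1}\cdots g_l\braket{-1}=\w(g_1\braket{-1})^{-1}=g_1\braket{-1}$, contradicting the independence of the chosen generators. To repair your proof you need both ingredients: the restriction to representatives of an independent generating set of the quotient, and an argument that no nontrivial product of these representatives can equal $-1$. The rest of your proposal (abelianness, hermiticity, unitarity, and commutation with $H$ being insensitive to an overall sign and hence inherited by $L$) is correct as stated.
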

\begin{proof}
    $L'$ is clearly a unitary, abelian group and its group elements commute with each term in $H$ (there is always an even number of minus signs for the commutators). Furthermore, its generators are hermitian, since they are products of two commuting hermitian operators. With that and the abelianess, all elements in $L'$ are Hermitian. These properties transfer analogously to $L$ and it remains to show $-1 \notin L$. Without loss of generality, let $-1 \in L'$. Let $L'/\braket{-1}$ be independently generated by $g_1\braket{-1}, \ldots g_m\braket{-1}$, for some $g_1, \ldots, g_m \in L'$, \en[m]. Assume $-1 \in L$. Then it must be $-1 = g_1\cdots g_l$ for some \en[l] with $2 \leq l \leq m$, where we assume w.l.o.g. that the generators are ordered accordingly (note that it cannot be $\abs{L'} = 1$). But then, it is $g_2\braket{-1}\cdots g_l\braket{-1} = \w(g_1\braket{-1})^{-1} = g_1\braket{-1}$, which contradicts the independence of the generators.
\end{proof}
As with all symmetries of a Hamiltonian, we are able to project the Hamiltonian into the subspace of the symmetries identified by false twins leading to the following proposition:
\begin{proposition}[\cite{chapman_solvable_spin_models}]\label{.symmetry_restriction}
    Let $G = (V, E)$ be the frustration graph of a Hamiltonian $H$ and $L = \braket{g_1, \ldots, g_m}$ as in \cref{.false_twin_symmetries}, where $\bc{g_1, \ldots, g_m} \subset L$ are independent generators, \en[m]. For $x \in \bc{0, 1}^m$ define the stabiliser group $L_x = \braket{(-1)^{x_1} g_1, \ldots, (-1)^{x_m}g_m}$ and the stabilised space $\hi_x = \set{y \in \hi}{\forall g \in L_x : gy = y}$. Furthermore, define the mapping
    \begin{align}
        \begin{split}
            \beta_x : \bc{\text{false twins of $G$}} &\to \bc{\pm 1}, \\
            (g, h) &\mapsto \sign\w(\rst{\w(gh)}{\hi_x}).
        \end{split}
    \end{align}
    Then, the graph $G_x = \frust{\rst{H}{\hi_x}}$ contains no false twins. More specifically, let $M$ be the set of maximal false siblings sets where we allow $\abs{T} = 1$ for $T \in M$, and fix one $g_T \in T$ for all $T \in M$; then it holds
    \begin{equation}
        \rst{H}{\hi_x} = \sum_{T \in M} \w(\sum_{h \in T} \beta_x(g_T, h) w_h) \rst{g_T}{\hi_x}\;.
    \end{equation}
\end{proposition}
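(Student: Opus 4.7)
The plan is to exploit the abelian symmetry group $L$ identified in \cref{.false_twin_symmetries}: simultaneously diagonalise it, observe that on each joint eigenspace every false twin pair rescales onto a single representative, and then check that the resulting frustration graph contains no new false twins.

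First, since the generators $g_1, \ldots, g_m$ of $L$ are commuting Hermitian unitaries with $-1 \notin L$, they simultaneously diagonalise into joint $\pm 1$-eigenspaces $\hi_x$ indexed by $x \in \bc{0, 1}^m$, yielding the orthogonal decomposition $\hi = \bigoplus_x \hi_x$. Since $L$ commutes termwise with $H$ and with every $g \in V$, each $\hi_x$ is invariant under $H$ and under every $g$, so the restrictions $\rst{g}{\hi_x}$ are well-defined. For any false twin pair $(g, h)$, the product $gh$ lies in $L'$, hence $\rst{(gh)}{\hi_x} = \beta_x(g, h)\, \1_{\hi_x}$ by the very definition of $\beta_x$. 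Multiplying by $g$ and using $g^2 = \1$ yields $\rst{h}{\hi_x} = \beta_x(g, h)\, \rst{g}{\hi_x}$, so each false twin pair becomes proportional after restriction. Since ``having the same open neighbourhood'' is an equivalence relation on $V$, the maximal false sibling sets partition $V$; fixing representatives $g_T \in T$ and substituting the rescalings into $\rst{H}{\hi_x} = \sum_{g \in V} w_g \rst{g}{\hi_x}$ produces the claimed explicit formula directly.

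It remains to verify that $G_x$ carries no false twins. Arguing by contradiction, suppose two distinct surviving representatives $g_T$, $g_{T'}$ form false twins in $G_x$. Then they commute (non-adjacency) and share commutation relations with every other surviving representative. Since every vertex inside a class $T''$ shares the same external commutation behaviour as $g_{T''}$, this would extend to equal commutation with every original vertex in $V \setminus (T \cup T')$, forcing $T \cup T'$ to be a false sibling set of $G$ and contradicting the maximality of $T, T'$. The main obstacle is that $G_x$ may have strictly fewer vertices than $\bc{g_T : T \in M}$, because a class coefficient $\sum_{h \in T} \beta_x(g_T, h) w_h$ can vanish for specific $x$; if a class $T''$ that originally distinguished $g_T$ from $g_{T'}$ drops out, the contradiction breaks. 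Handling this cleanly will likely require either a genericity assumption on the weights, or proving the no-false-twins claim first in the \emph{abstract} representative graph that retains zero-weight vertices, from which the corresponding claim for $G_x$ follows by restricting to the subset of surviving classes.
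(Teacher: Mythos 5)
Your derivation of the restricted Hamiltonian is essentially the paper's own proof: both restrict to the joint stabilised spaces, use that $gh=\pm s$ for some $s\in L_x$ so that $gh$ acts on $\hi_x$ as the scalar $\beta_x(g,h)$ (this is exactly what makes $\beta_x$ well defined, which the paper states slightly more explicitly), and then rescale each false sibling onto its class representative using $h^2=\1$. Concerning the no-false-twins claim --- which the paper's proof does not argue at all --- your maximality argument is the right one, and the obstacle you flag is resolved by your second option, not by genericity: in the paper's formal treatment the collapsed vertex set is fixed combinatorially and zero weights are permitted (note $w'_g\in\R$ and $\frust{V'}=G^{2i}$ in \cref{.result_full}), so a representative $g_T$ is retained even when $\sum_{h\in T}\beta_x(g_T,h)w_h=0$; since two non-adjacent representatives of distinct maximal classes are always distinguished by some vertex outside both classes, hence by that class's representative, the representative graph contains no false twins and no assumption on the weights is required.
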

\begin{proof}
    Let $x \in \bc{0, 1}^m$. Firstly, note that since $L$ is abelian with $-1 \notin L$, $L_x$ is indeed a stabiliser group and $\hi_x$ is not trivial. Now let $(g, h) \in V^2$ be false twins. Since either $+gh \in L$ or $-gh \in L$ it follows that $gh = a s$ for some $s \in L_x$ and $a \in \bc{\pm 1}$. But then we have $\rst{\w(gh)}{\hi_x} = a \rst{s}{\hi_x} = a$ which shows that $\beta_x$ is well-defined. Furthermore, we have
    \begin{subalign}
        \rst{H}{\hi_x} &= \sum_{T \in M} \sum_{h \in T} w_h \rst{h}{\hi_x}\\
        &= \sum_{T \in M} \sum_{h \in T} w_h \beta_x(g_T, h)\rst{(g_Th)}{\hi_x}\rst{h}{\hi_x}\\
        &= \sum_{T \in M} \sum_{h \in T} w_h \beta_x(g_T, h)\rst{g_T}{\hi_x}\;.
    \end{subalign}
\end{proof}
\Cref{.symmetry_restriction} describes how the frustration $G$ can be simplified by removing false twins by projecting onto the stabilised spaces $\hi_x$, more specifically, the projection causes each set of false siblings to collapse into a single vertex, respectively. Next, we show how to use true twin rotations to further simplify the graph.
\begin{lemma}\label{.rotation}
    Let $g, h \in S$ be hermitian and anticommuting, and $a, b \in \R$. The operator $U = \ee^{\theta gh / 2}$, $\theta \in \R$, is unitary and it holds
\begin{align}
    U * \w(ag + bh) &= g \w(a \cos \theta + b \sin \theta) \nonumber\\
    &\, + h \w(b \cos \theta - a \sin \theta)\label{e"y_rotation}
\end{align}
\end{lemma}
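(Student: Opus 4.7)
The plan is to reduce the exponential $U = e^{\theta gh/2}$ to a closed-form expression using the fact that $gh$ squares to $-1$, and then to carry out the conjugation of $g$ and $h$ directly using only the anticommutation and the involution relations.

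First I would establish the basic algebraic properties of the generator. Since $g, h \in S$ are hermitian and unitary, they are involutions, so $g^2 = h^2 = \1$. Using the anticommutation $gh = -hg$, a one-line calculation gives $(gh)^2 = ghgh = -g^2h^2 = -\1$, and hermiticity plus anticommutation gives $(gh)^\dagger = h g = -gh$, so $gh$ is anti-hermitian. This immediately shows $U$ is unitary, and moreover, because $(gh)^2 = -\1$, the power series for the exponential collapses in exactly the same way as Euler's formula:
\begin{equation}
  U = \ee^{\theta gh/2} = \cos(\theta/2)\,\1 + \sin(\theta/2)\, gh,
\end{equation}
with $U^\dagger = \cos(\theta/2)\,\1 - \sin(\theta/2)\, gh$.

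Next, I would compute the conjugation action on $g$ by direct multiplication. Writing $c = \cos(\theta/2)$ and $s = \sin(\theta/2)$, expanding $U g U^\dagger = (c + s\, gh) g (c - s\, gh)$ and repeatedly applying $ghg = -h$, $g\cdot gh = h$, and $h \cdot gh = -g$, the cross terms combine using the double-angle identities $c^2 - s^2 = \cos\theta$ and $2cs = \sin\theta$ to give $U g U^\dagger = g\cos\theta - h\sin\theta$. An entirely analogous computation for $h$ yields $U h U^\dagger = g\sin\theta + h\cos\theta$. Taking the real linear combination $a \cdot (U g U^\dagger) + b \cdot (U h U^\dagger)$ then reproduces exactly \cref{e"y_rotation}.

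I do not expect a genuine obstacle here; the entire content is that the two-dimensional real algebra spanned by $g$ and $h$ carries a natural $\mathrm{SO}(2)$ action generated by $gh$, and the proof is essentially Euler's formula in disguise. The only place requiring a small amount of care is the bookkeeping of signs when reducing products like $ghg$ and $hgh$, which must be tracked consistently using the anticommutation of $g$ and $h$ before combining the four terms of the expansion.
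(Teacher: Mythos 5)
Your proposal is correct and follows essentially the same route as the paper: both exploit that $gh$ squares to $-\1$ (the paper phrases this via the hermitian unitary $\rho = -igh$ with $\rho^2 = \1$) to collapse the exponential into the Euler-type form $\cos(\theta/2)\,\1 + \sin(\theta/2)\,gh$, and then compute the conjugation by direct expansion using $g^2 = h^2 = \1$ and anticommutation. Conjugating $g$ and $h$ separately and taking the real linear combination, as you do, is only a cosmetic reorganisation of the paper's single computation on $ag + bh$.
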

\begin{proof}
    Set $\rho = -i gh$. $\rho$ is obviously hermitian and unitary; therefore $U = \ee^{i \theta \rho / 2}$ is unitary. Since $\rho^2 = \1$ it holds $U = \cos \frac{\theta}{2} + i \rho \sin \frac{\theta}{2}$, and it follows
    \begin{subalign}
        U * \w(ag + bh) &= \w(\cos^2 \frac{\theta}{2} - \sin^2 \frac{\theta}{2}) \w(ag + bh)
        \\
        &\, + i \cos \frac{\theta}{2} \sin \frac{\theta}{2} \w(a i h - i b g + i a h - i b
        g) \nonumber\\
        &= \w(ag + bh) \cos \theta + \w(- a h + b g)  \sin \theta\;,
    \end{subalign}
    which is the statement after sorting the terms.
\end{proof}
We can use the rotation in \cref{.rotation} to merge true twins while leaving the rest of the graph invariant; the projectors in the following are place-holders for the false twin projections:
\begin{proposition}\label{.merge_paulis}
    Let $G = (V, E)$ be the frustration graph of a Hamiltonian $H$, $g, h, f \in V$ such that $g$ and $h$ are true twins, and $a, b \in \R$. Furthermore let $P$ be a projector, such that $\bk{P, p} = 0$ for all $p \in \bc{g, h, f}$. Set $U = \ee^{\theta gh / 2}$ with $\theta =\arctan\w(b/a))$. Then we have
    \begin{align}
        U*(agP + bhP) &= \sqrt{a^2 + b^2} gP\;,\\
        U*(fP) &= fP\;.
    \end{align}
\end{proposition}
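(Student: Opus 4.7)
The plan is to reduce the proposition to a direct application of \cref{.rotation}, using the commutation properties of true twins to handle $f$ and the fact that $P$ commutes with everything in sight.

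First, I would verify that $U$ commutes with $P$. Since $[P,g]=[P,h]=0$, we have $[P,gh]=0$, and because $U$ is an analytic function of $gh$ it follows that $[P,U]=0$. This lets us factor the projector out of the conjugation in both claims, reducing them to the statements $U * (ag + bh) = \sqrt{a^2+b^2}\, g$ and $U * f = f$ on the ambient space.

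For the first identity, I would invoke \cref{.rotation}. Its hypotheses require $g$ and $h$ to be hermitian and anticommuting. Hermiticity comes for free since $V$ is a hermitian subset of the unitary group $S$, and anticommutativity follows from the true-twin condition: $\neigh[g] = \neigh[h]$ combined with $g \neq h$ forces $g \in \neigh(h)$, i.e., $\tbk{g,h} = -1$. Plugging $\theta = \arctan(b/a)$ into \cref{e"y_rotation} with $\cos\theta = a/\sqrt{a^2+b^2}$ and $\sin\theta = b/\sqrt{a^2+b^2}$, the coefficient of $g$ collapses to $(a^2+b^2)/\sqrt{a^2+b^2} = \sqrt{a^2+b^2}$, while the coefficient of $h$ vanishes identically. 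For the second identity, I would use that, as true twins, $g$ and $h$ share a common closed neighbourhood, so any $f \neq g,h$ either commutes with both or anticommutes with both; in either case $[f, gh] = 0$, hence $[f, U] = 0$ and $U * f = f$.

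There is no genuine obstacle here once \cref{.rotation} is in hand; the argument is essentially a commutation bookkeeping exercise. The only subtlety worth flagging is the branch choice in $\arctan(b/a)$: a careless sign convention leaves an overall $\mathrm{sgn}(a)$ on the final coefficient, which can be absorbed either by assuming $a > 0$ without loss of generality (flipping the sign of $g$ if necessary) or by reading $\arctan$ as the two-argument $\mathrm{atan2}(b,a)$ so that $(\cos\theta,\sin\theta)$ point in the direction of $(a,b)$.
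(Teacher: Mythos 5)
Your proof is correct and follows essentially the same route as the paper: apply \cref{.rotation} with $\theta=\arctan(b/a)$ to get $U*(ag+bh)=\sqrt{a^2+b^2}\,g$, and use that $gh$ commutes with $P$ and with $f$ (the latter because true twins share a closed neighbourhood, so $f$ commutes or anticommutes with both) to pull the conjugation through the projector. Your flag about the branch of $\arctan$ is a fair refinement rather than a deviation --- the paper's own evaluation of $a\cos\theta+b\sin\theta$ implicitly assumes $a>0$, so absorbing the $\mathrm{sgn}(a)$ or using the two-argument arctangent is slightly more careful than the original.
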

\begin{proof}
    It holds $b \cos \theta - a \sin \theta = 0$ and
    \begin{equation}
        a \cos \theta + b \sin \theta = \frac{a + \frac{b^2}{a}}{\sqrt{1 + \frac{b^2}{a^2}}} = \sqrt{a^2 + b^2}\;.
    \end{equation}
    Therefore, we have $U * (ag + bh) = \sqrt{a^2 + b^2} g$, according to \cref{.rotation}. Since $\bk{gh, P} = \bk{gh, f} = 0$ it follows
    \begin{subalign}
        U * (agP + bhP) &= \w(U * (ag + bh))(U * P)\\
        &= \sqrt{a^2 + b^2} g P
    \end{subalign}
    and analogously $U * (fP) = fP$.
\end{proof}
Applying \cref{.merge_paulis} iteratively allows us to collapse sets of true siblings, leaving the rest of the graph and Hamiltonian invariant.

\subsubsection{Block-Diagonalisation}\label{s"block_diag}

We now give a proof sketch of \cref{.result_informal}.
The idea is to apply the methods of twin collapse from \cref{s"twin_symmetries} alternately to simplify the graph. That is, we first collapse all sets of false siblings; then on the new graph, we collapse all sets of true siblings. We recursively continue this process until both the set of false and true siblings are empty. The detailed, constructive version of the proofs can be found in \cref{s"app_block_diag}. First, we need the graph sequence that defines the sets of siblings:
\begin{definition}[Twin Collapse]\label{.main_collapse_graph}
    Let $G = (V, E)$. Set $G^0 = G$. We define the following graph sequence $\w(G^i)_{0 \leq i \leq c}$, \en[c]:
    \begin{itemize}
        \item For odd \en[i]: For all maximal sets $T$ of false siblings in $G^{i-1}$, fix one of the siblings and remove the other vertices.
        \item For even \en[i], $i \geq 2$: For all maximal sets $T$ of true siblings, fix one of the siblings and remove the other vertices.
        \item Set \en[c] even and minimal, such that $G^c = G^{c+1}$, and $r = c/2 - 1$.
    \end{itemize}
\end{definition}
\begin{proof}[Proof sketch of \Cref{.result_informal}]
    The full proof can be found in \cref{s"app_block_diag}.

    Given a Hamiltonian $H$ with frustration graph $G = (V, E)$, define the graph sequence $\w(G^i)_{0 \leq i \leq c}$ as in \cref{.main_collapse_graph}.

    The idea is to define the projectors $\bc{P(x)}_x$ appropriately, such that they describe the alternation of false sibling projections and true sibling rotations according to the graph sequence $\w(G^i)_{0 \leq i \leq c}$. More specifically, they are constructed by conjugating  the projectors that describe the stabilised spaces by the false twins as in \cref{.symmetry_restriction}, with the true twin rotations from \cref{.merge_paulis}. One then shows that these projectors are orthogonal, complete and that they commute using the graph symmetries. The proof then follows inductively by showing that these projectors commute with the Hamiltonian.
\end{proof}

\section{Expansion of the Free-Fermions Class}\label{s"res_free_fermions}

We extend the graph-theoretic framework introduced in Ref.~\cite{chapman_unified_free_fermions} to characterise spin-$\tfrac{1}{2}$ Hamiltonians solvable via mappings to free fermions, in two distinct ways. The first generalization is in the subtle change in the definition of the frustration graph in \Cref{def"frustration_graph}. This generalisation lifts the prior restriction to Pauli operators, instead encompassing Hamiltonians expressed in any operator basis satisfying the characteristic binary commutation relations. This allows us to apply Fendley's solution method to Hamiltonians written in a broader family of representations without the need for fermion-to-qubit mappings including quantum chemistry Hamiltonians of interest written as interacting  fermions~\cite{gorman_electronic_structure_qma_complete}.

The second generalisation is how the block diagonalisation of \cref{s"twin_block_diag} expands class of the graphs that admit a free-fermion solution, which we discuss below.

We then numerically compare the free-fermionic solvability before and after applying our collapsing technique; specifically, we study Hamiltonians on a two-dimensional lattice with random interactions, as well as generic Majorana Hamiltonians, and apply graph-theoretical algorithms based on the modular decomposition tree \cite{habib_survey_modular_decomposition} of the frustration graph to remove twins and find \acl{scf} Hamiltonians.

\subsection{Free-Fermionic Solvability}

In Ref.~\cite{chapman_unified_free_fermions} it was show that a many-body quantum spin system
allows a description of non-interacting fermions (free-fermion) if the Hamiltonian is \acf{scf}; a
Hamiltonian is \ac{scf} if, and only if, its frustration graph contains no claws but at
least one simplicial clique (for each maximal connected subgraph). As discussed, our definition extends to Hamiltonians written in any basis with a binary commutation relation.
Our block diagonalisation method allows us to expand further the class of models which may be solved via a mapping to free fermions, as we may now increase the class of graphs to which Fendley's solution method may be applied~\cite{fendley2019free}.
\begin{corollary}
\label{.free_fermion}
A Hamiltonian $H$ is \textit{generically} free fermion if, after recursively collapsing all twin symmetries, as well as line-graph modules if $S$ is unitarily equivalent to the Pauli group, the frustration graph is simplicial and claw free.
\end{corollary}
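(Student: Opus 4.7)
My plan is to invoke the block-diagonalisation machinery and then reduce the problem to the known free-fermion characterisation of Ref.~\cite{chapman_unified_free_fermions} applied block-by-block. Concretely, I would start by applying \cref{.result_informal} (and \cref{.full_result_informal} when $S$ is unitarily equivalent to the Pauli group, so that line-graph modules may also be collapsed) to write
\begin{equation}
    H \;=\; \sum_{x \in X} P(x)\, \rst{\big(U\ur(x) * H_C(x)\big)}{P(x)}\, P(x)\;,
\end{equation}
where the $\bc{P(x)}_{x\in X}$ are orthogonal complete projectors commuting with $H$, and the collapsed Hamiltonian $H_C(x) = \sum_{g \in V'} w'_g(x)\, g$ lives on the collapsed vertex set $V'$. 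Since $\bc{P(x)}$ is a complete resolution of the identity and each $P(x)$ commutes with $H$, the spectrum of $H$ is the disjoint union of the spectra of the blocks $\rst{H}{P(x)}$, so it suffices to solve each block.

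Next, I would argue that each block is solvable by a free-fermion mapping. Since $U(x)$ is unitary, $\rst{H}{P(x)}$ has the same spectrum (in the block) as $\rst{H_C(x)}{P(x)}$, which in turn is a restriction of a Hamiltonian of the canonical form in \cref{e"hamiltonian} on the vertex set $V'$. By hypothesis, $G[V']$ is claw-free and contains a simplicial clique in every connected component. The plan is then to apply Fendley's free-fermion solution as formulated graph-theoretically in Ref.~\cite{chapman_unified_free_fermions}: a Hamiltonian whose frustration graph is \ac{scf} admits a free-fermion description, and its spectrum is computed from the roots of the independence polynomial (or equivalently a quadratic fermionic Hamiltonian) associated with $G[V']$. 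Applying this to each block yields a free-fermion solution of $\rst{H}{P(x)}$ for every $x \in X$, and therefore of $H$.

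The main obstacle — and the reason for the word \emph{generically} — is ensuring that the frustration graph of $H_C(x)$ is actually $G[V']$, rather than a proper spanning subgraph of it. This could fail if some effective coupling $w'_g(x)$ vanishes for a particular $x$, in which case the corresponding vertex drops out and the resulting graph might lose its simplicial vertex or acquire a forbidden structure. From \cref{.symmetry_restriction,.merge_paulis} the $w'_g(x)$ are signed sums and Pythagorean combinations of the original weights $w_g$, so they are polynomial (in fact piecewise analytic) in the $w_g$ and vanish only on a measure-zero subvariety of weight space. I would therefore close the argument by noting that, outside this measure-zero set, every $H_C(x)$ retains the full collapsed frustration graph $G[V']$, the \ac{scf} property is preserved, and the free-fermion solvability of \cite{chapman_unified_free_fermions} applies uniformly across all blocks — which is precisely the generic statement claimed by the corollary.
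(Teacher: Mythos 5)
Your proposal follows essentially the same route the paper intends for this corollary: block-diagonalise via \cref{.result_informal} (or \cref{.full_result_informal} when line-graph modules are also collapsed), observe that each block is governed by the collapsed Hamiltonian whose frustration graph is the collapsed graph on $V'$, and invoke the simplicial, claw-free criterion of Ref.~\cite{chapman_unified_free_fermions} block-by-block, with ``generically'' meaning the conclusion depends only on the commutation structure and holds for generic weights. Your measure-zero caveat is harmless but not actually needed: if some effective weight $w'_g(x)$ vanishes, the block's frustration graph is merely an induced subgraph of $G[V']$, and such subgraphs remain claw-free trivially and simplicial by heredity (\cref{.preserve_simpliciality}), so the free-fermion solvability of every block is not threatened.
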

Note that this result is generic in the weights of the Hamiltonian, i.e., it only depends
on the binary commutation relations of the operators.

\begin{figure}[b]
  \centering
  \includegraphics{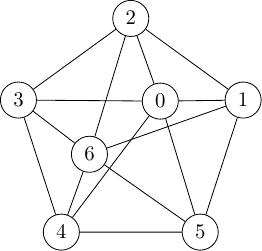}
  \caption{
    Example of a graph that is only simplicial after removing a twin vertex. Up to labelling, the non-empty cliques are $\bc{0}, \bc{1}, \bc{1, 2}, \bc{0, 1}, \bc{0, 1, 2}$, which are not simplicial. However, by removing, the vertex $6$, which is a false twin of $0$, the graph has, for example, the simplicial clique $\bc{1, 2}$.
  }\label{f"create_simplicial_clique}
  \vspace{-2ex}
\end{figure}

It is immediately obvious that collapsing all twins recursively does indeed extend the
class of free-fermion Hamiltonians: It is clear that collapsing twins does not introduce
new claws; further, the collapsing algorithm may remove claws. For example, let $G = K_{1,
3}$, then $G$ collapses to a single vertex: all leaves of the claw are false twins, after
collapsing the false twins, we are left with a complete graph on two vertices, since these
vertices are true twins, we may then collapse them to a single vertex using a unitary
rotation. It is also true that the collapsing algorithm strictly increases the probability
of a graph being simplicial. It is known that that simpliciality in claw-free graphs is
hereditary~\cite{chudnovsky_claw_free_independence_polynomial}, (for completeness we
provide a full proof of this in \cref{.preserve_simpliciality}), thus collapsing twins can
not remove a simplicial clique. However, removing twins may in fact introduce a simplicial
clique. Consider the graph depicted in \Cref{f"create_simplicial_clique}. The graph does
not contain a simplicial clique, however, the vertices $\{0,6\}$ are false twins. After
collapsing the false twins the graph is simplicial and claw free with the simplicial
clique being, for example, the subgraph induced by the labelled vertices $K_s=\{1,2\}$.

We focus here on the class of \ac{scf} Hamiltonians, as this class covers other classes like the class of line graphs~\cite{chapman_solvable_spin_models} or (even-hole, claw)-free graphs~\cite{elman_free_fermions_behind_the_disguise,fukai_free_fermions_with_claws}. However, we note that the twin-collapsing method may also be applied to these classes. For example, in the case of line graphs, collapsing twins does expand the class, as the potential removal of forbidden claws shows. Furthermore, the collapsing method can also remove symmetries induced by cycles in the root graph of the line graph (e.g., consider a triangle in the root graph), which may further simplify the solution (at least in the case of true twins). Regarding (even-hole, claw)-free graphs, it is clear that here too collapsing twins can expand the class as the simple example of a square as an even hole in the frustration graph shows.

\subsection{Numerical Simulations}\label{s"numerics}

We now quantify by how much the collapsing algorithm expands the class of Hamiltonians solvable by free-fermion methods. We do this by random sampling Hamiltonians for different Hamiltonian classes. Since the result of \cref{.free_fermion} pertains to Hamiltonians that are generically free, the coefficients in the Hamiltonians are not considered but only the operators in the Pauli basis. Given a Hamiltonian, we check whether its frustration graph is \ac{scf} or not. The algorithms are based on the modular decomposition tree of a graph as described in \cref{s"modular_decomposition}. The relevant code can be found at the \hr{https://github.com/taeruh/free_fermions}[taeruh/free\_fermion] repository (the plotted data was commited at \hr{https://github.com/taeruh/free_fermions/tree/91b7044326ec56ebf608dfa0a66c7e7930bdd6df}[91b7044]) \cite{complementarycode,complementarydata}; for the modular decomposition algorithm we used the library provided by~\cite{modular_decomposition_library,spinner_evaluation_of_modular_decomposition_algorithms}.

\subsubsection{Erdös R\'enyi Graphs}

As first example we discuss the ubiquitous Erdös R\'enyi graph model, or $\gnp$ model. The distribution of $\gnp(n, p)$ graphs, with \en and $p \in \bk{0, 1}$, is the distribution of graphs on $n$ vertices where each edge is drawn with probability $p$. For every $\gnp$ graph $G$ it is possible to construct a Hamiltonian such that its frustration graph is $G$, however, these models may not be of direct physical relevance.

The results for the numerical simulation are presented in \cref{f"gnp}.
We plot the probability $p_{\mathrm{SCF}}$ that a given Hamiltonian is \ac{scf} against the edge probability $p$, as well as the change in the order of the graph, $\Delta \Xi,$ and the increase in probability of a given model being free fermion due to the collapsing algorithm, $\Delta p_{\mathrm{SCF}}$. We see that $\pscf$ is close to symmetric around $p = 1/2$. This can be explained qualitatively by the following argument: Given four vertices, the expected number of claws is $4 p^3 (1-p)^3$. It is clear, this number approaches $0$ symmetrically for $p \to 0$ and $p \to 1$, in which case the graph is claw-free. For small $p$, the graph is likely to be sparsely connected. Thus, for any clique $K \subseteq \vertm(G)$, the neighbourhood $\neigh(v){\setminus} K$ for any $v \in K$ is likely to be small, and therefore likely to be fully connected, meaning $K$ is simplicial. Similarly, for $p$ close to $1$, the graph is likely to be densely connected, and therefore any neighbourhood is likely to be fully connected. This explains the symmetric form of $\pscf$. Similar arguments explain the symmetric form of the number of collapsed twins.
\begin{figure}[t]
    \centering
    \includegraphics{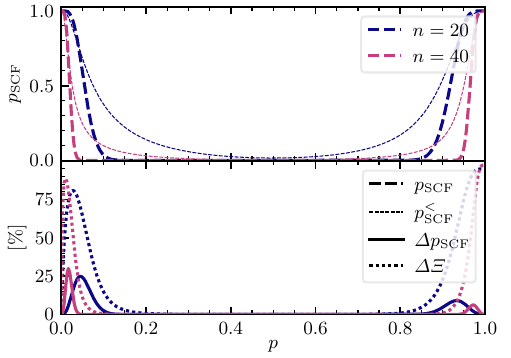}
    \caption{Probability, $p_{\mathrm{SCF}}$, that the $\gnp(n, p)$ Hamiltonians are \ac{scf}, and the effect of the block diagonalisation. The $x$-axis is the edge probability $p$. In the upper plot, the thicker dashed line, $p_{\mathrm{SCF}}$, shows the probability that the according Hamiltonian is \ac{scf} after the block diagonalisation. The thinner dashed line, $p_{\mathrm{SCF}}^<$, is the analytical upper bound on $p_{\mathrm{SCF}}$ before the block diagonalisation. In the lower plot, the dotted line, $\Delta \Xi$, shows how many vertices have been removed by the block diagonalisation (independent vertices in the frustration graph were ignored). The solid line, $\Delta p_{\mathrm{SCF}}$, shows the difference 
    between the number of Hamiltonians that are \ac{scf} before and after the block diagonalisation.} \label{f"gnp}
\end{figure}

The simplicity of the $\gnp$ model allows finding \ac{scf} lower and upper bounds, at least before the block diagonalisation, via the probabilistic first and second moment methods. The calculations are in \cref{s"erdos_renyi_bounds}. For small $p$, we have
\begin{equation}
    \pscf \geq \w(1 - \binom{n}{4} 4 p^3 (1-p)^3) (1-p)^{2np}\;.
\end{equation}
The first term in the parentheses is a lower bound for the probability that a $\gnp$ graph is claw-free, and the second term is a lower bound for the probability that a claw-free graph is simplicial. One can show that in the limit $n \to \infty$, if $p < \order[n^{-3/4}]$ or $1-p < \order[n^{-3/4}]$ the graph is almost surely claw-free and simplicial (cf. \cref{s"erdos_renyi_bounds}). As upper bound, for any $p$, we have
\begin{align}
    \pscf & \leq 1 - \binom{n}{4} \Big/ \bcdot{\binom{n-4}{4} + 4 \binom{n-4}{3}}\nonumber\\
    &+ \frac{3}{2} \binom{n-4}{2} \frac{1}{p(1-p)} + \frac{1}{4 p^3 (1-p)^3}\nonumber\\
    &+ \dotbc{\frac{n-4}{4} \w(\frac{3}{p^2 (1-p)} + \frac{1}{(1-p)^3})}\;.
\end{align}
This bound is plotted in \cref{f"gnp}; we see that for small and large $p$ the block diagonalisation overcomes this bound.

\subsubsection{Two-Dimensional Spin Lattices}

Next, we consider a 2-local spin Hamiltonian on a two-dimensional periodic brick lattice lattice, depicted in \cref{f"brick_lattice}. We define the Hamiltonian by assigning to each link of the lattice a linear combination of 2-local Pauli interaction, drawn uniformly at random from the set of all 2-local Pauli terms, with probability $p$. It is sufficient to only consider the lattice of the size shown, since larger lattices would repeat the same pattern. The chosen lattice is large enough to prevent periodic boundary structures in the frustration graph. This can be checked by drawing the line graph of the lattice, which depicts the range of neighbourhoods in the frustration graph: Potential operators are located on the vertices of the line graph, and potential anticommutators are located on the edges of the line graph; therefore, by ensuring that in the line graph a neighbourhood at the periodic boundaries does not overlap with itself, we ensure that this does not happen in the frustration graph.
\begin{figure}[t]
    \centering
    \includegraphics{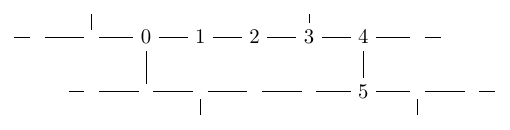}
    \caption{Two-dimensional periodic brick lattice modelling the Hamiltonians. The spins are located on the vertices and we randomly draw 2-local Pauli interactions between neighboured spins. Each of the nine Pauli interactions is accepted with a certain probability $p \in (0, 1]$; the probability is the same for all edges, and we enforce that each edge has at least one non-trivial interaction. We draw the interactions separately on the edges between the vertices $1$ to $5$ and then extend the lattice periodically with periodic boundary conditions. } \label{f"brick_lattice}
\end{figure}
\begin{figure}
    \centering
    \includegraphics{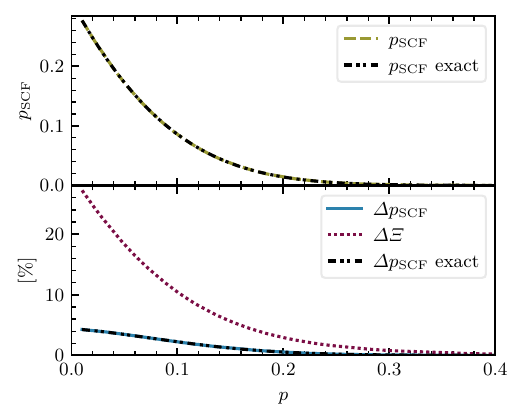}
    \caption{\ac{scf} probability for Hamiltonians on a periodic brick lattice. The $x$-axis is the probability $p$ that a Pauli interaction is accepted on an edge (cf. \cref{f"brick_lattice}). $p_{\mathrm{SCF}}$, $\Delta p_{\mathrm{SCF}}$ and $\Delta \Xi$ are as in \cref{f"gnp}. The dash-dotted lines are exact results for $p_{\mathrm{SCF}}$ and $\Delta p_{\mathrm{SCF}}$, respectively (cf. \cref{s"brick_exact}). For higher densities than the ones shown here, all plots are $0$ or close to $0$.}\label{f"numerical_results}
\end{figure}

The results for the numerical simulations for the lattice are presented in \Cref{f"numerical_results}. We see that the probability of a given Hamiltonian being \ac{scf} decreases with increasing interaction probability $p$; this is expected as one can easily see that the number of claws increases with $p$. Interestingly, we also observed that for all the Hamiltonians we drew, if the frustration graph was claw-free, then it also had a simplicial clique; it would be interesting to investigate this correspondence further for different physical lattice structures in future work. Furthermore, we see that with increasing $p$, relatively fewer terms are removed by the collapsing algorithm, and correspondingly the expansion of the class of free-fermion Hamiltonians is less effective; again, it is easy to see that as the probability approaches 1 the frustration graph does not contain any twins. When the interactions are sparse, the removal of twins expands the free-fermion class by approximately $4\%$.

As the unit cell of periodic brick lattice is small enough, it is possible to calculate exact \ac{scf} probabilities (cf. \cref{s"brick_exact}). We see that these probabilities are nearly identical with the sampled results, verifying the accuracy of the sampling method.

While the results for the periodic brick lattice are intuitive, the considered model itself may be argued to be artificial. Next we consider the model of a periodic square lattice. We draw the Hamiltonian similarly as in \cref{f"brick_lattice}, however on a periodic square lattice. Additionally, each vertex is coupled to an additional local spin (via a two-local interaction), e.g., to a local nuclei. The calculated results are shown in \cref{f"square_lattice}. We see that the model is very unlikely to be \ac{scf}; only for small probabilities $p$ there is a small chance that the model is \ac{scf} ($p_{\mathrm{SCF}} \sim 10^{-2}$). This is because the edges in the physical square lattice are on both ends connected to vertices with at least degree $4$, which is very likely to produce claws. In contrast, the edges in brick lattice are connected to vertices that have on average a lower degree.
\begin{figure}[t]
    \centering
    \includegraphics{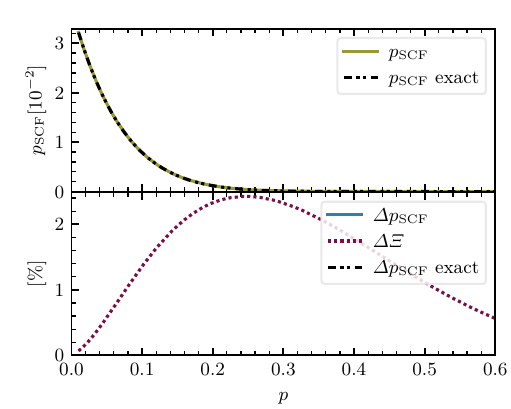}
    \caption{\ac{scf} probability for a Hamiltonian on a periodic square lattice with local nuclei. The Hamiltonian is drawn as in \cref{f"brick_lattice} but on square lattice instead of the brick lattice. As in \cref{f"brick_lattice} we enforce that the model is two-dimensional. We include an additional interaction for each vertex to a local spin, e.g., a nuclei (also drawn with $p$); the interaction is the same for all vertices. $p_{\mathrm{SCF}}$, $\Delta p_{\mathrm{SCF}}$ and $\Delta \Xi$ are as in \cref{f"gnp}; the dash-dotted lines are exact results for $p_{\mathrm{SCF}}$ and $\Delta p_{\mathrm{SCF}}$, respectively (cf. \cref{s"square_exact}). $\Delta p_{\mathrm{SCF}}$ is always $0$ and the other lines are also close to $0$ for higher densities.}\label{f"square_lattice}
\end{figure}

\begin{figure}[t]
  \centering  \includegraphics{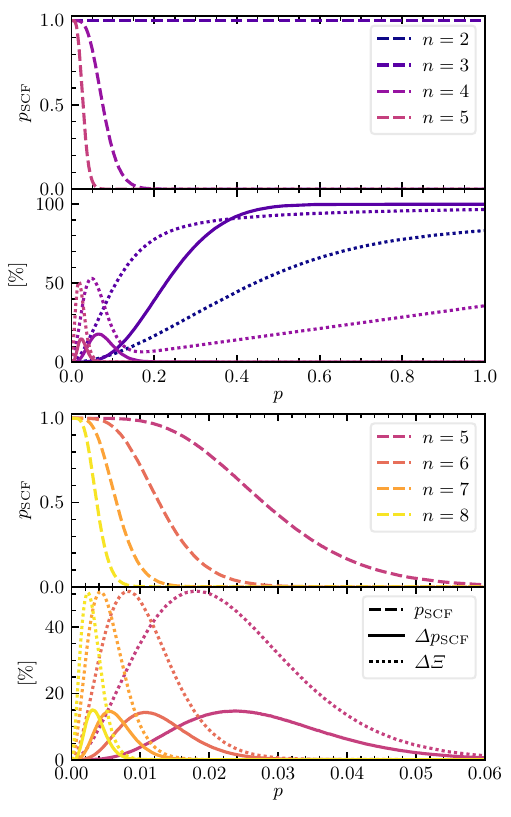}
  \caption{%
    Probabilty, $p_{\mathrm{SCF}}$, that the Majorana Hamiltonian in \cref{e"majorana_hamiltonian} is \ac{scf} depending on the interaction probability $p$. The plots are drawn for different numbers of orbitals $n$ and we label them analogously as in \cref{f"numerical_results}. Note that the solid line for $n=2$ is constantly $0$.}\label{f"e_structure}
\end{figure}

\subsubsection{Electronic Structure Hamiltonians}

As previously alluded to, our results are not restricted to spin systems. We can also, for example, apply the graph-theoretic formalism directly to Hamiltonians written in the Majorana basis. Majorana fermions are self-adjoint operators, defined as
\begin{equation}
    \gamma_{2j-1}=\frac{1}{2}(c_j^\dagger+c_j), \qquad\gamma_{2j}=\frac{-i}{2}(c_j^\dagger-c_j),\label{e"majorana_operators}
\end{equation}
where $\{c_j^{(\dagger)}\}_{j\in \Z_n}$ for some \en, are complex fermionic creation and annihilation operators. Majoranas obey the canonical anticommutation relations
\begin{equation}
    \{\gamma_i,\gamma_j\}=2\delta_{ij},
\end{equation}
meaning the graph-theoretic framework is appropriate for Majorana models. Furthermore, they are unitarily equivalent to the Pauli group (cf. \cref{s"conjugation_isomorphism}). As a showcase, we run similar simulations on the following Majorana Hamiltonian:
\begin{equation}
  H_{\mathrm{M}} = \frac{i}{2}\sum_{a, b = 1}^{2n} w_{ab} \gamma_a \gamma_b + \sum_{a, b, c, d =
  1}^{2n} w_{abcd} \gamma_a \gamma_b \gamma_c \gamma_d\;,\label{e"majorana_hamiltonian}
\end{equation}
where \en is the number of complex fermionic orbitals and the weights $w_{ab}, w_{abcd} \in \R{\setminus}\{0\}$ are non-zero with a probability $p \in \bk{0, 1}$. This Hamiltonian encompasses, for example the electronic-structure Hamiltonian which, in general, has been shown to be \acs{qma}-complete \cite{gorman_electronic_structure_qma_complete}. Here, we investigate the likelihood of the model being free fermion as a function of the total number of orbitals $n$, as well as the probability $p$ of drawing a given Majorana string (\cref{f"e_structure}).

As expected, with increasing complexity of the model (that is, increasing number of
orbitals and interaction probability), the probability that the $H_{\text{M}}$ is \ac{scf}
decreases. We also observe the effect of the twin collapse is smaller for increased $n$.
For smaller orbital numbers however, especially $n \le3$, we observe that,
counter-intuitively, the reduction in the order of the graph $\Delta\Xi$, increases with
$p$. For $n = 2$ the frustration graph is always an induced subgraph of the octahedral
graph which is \ac{scf} (ignoring the independent vertex due to
the four-body interaction, since it necessarily commutes with all other terms). Moreover, the octahedral graph is a \textit{cograph} and therefore fully collapses to a single vertex (see \cref{f"octahedral}), which explains why $\Delta \Xi$ converges to $\frac{5}{6}$ as $p$ approaches 1.

For $n = 3$ we observe a large difference in $p_{\mathrm{SCF}}$ due to twin collapse, in fact, after the twin collapse the Hamiltonian is always \ac{scf}. This can be explained as follows: firstly consider the frustration created by all two-body interaction, $\sum_{a, b = 1}^{6}\gamma_a \gamma_b$; this graph has no twins, however, it is a line graph by definition~\cite{chapman_solvable_spin_models} and therefore \ac{scf}~\cite{chudnovsky_claw_free_independence_polynomial,elman_free_fermions_behind_the_disguise}, as are all induced subgraphs. More specifically, our block-diagonalisation algorithm even collapses the line graph into a single vertex. Now, if we have a four-local term in the Hamiltonian, e.g., $\gamma_1\gamma_2\gamma_3\gamma_4$, it is easy to show that this is a false twin of $\gamma_5\gamma_6$. Therefore, after the twin collapse, this term has been removed (or can be effectively replaced by $\gamma_5\gamma_6$ in the frustration graph). After removing all four-local terms, the Hamiltonian is a line graph and therefore collapses further onto a single vertex. This also explains why we see $\Delta \Xi \to \frac{29}{30}$ for $p \to 1$.

In \cref{s"majorana_analytical}, we argue that in the limit $n \to \infty$, the Hamiltonian $H_{\mathrm{M}}$ is almost surely simplicial, claw-free, if the number of operators in $H_{\mathrm{M}}$ is upper bounded by $\order[n^{3/4}]$.

\begin{figure}[t]
  \centering
  \includegraphics{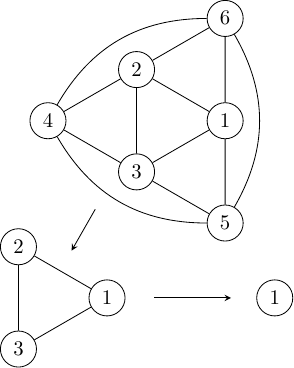}
  \caption{
    Recursive twin collapse of the octahedral graph \cite{weisstein_octahedral_graph}. The shown graph with six vertices has three pairs of false twins, $\bc{1, 4}$ $\bc{2, 5}$ and $\bc{3, 6}$. After collapsing those, the new graph consists only of a single true sibling sets. Collapsing this set results in a single vertex. 
  }\label{f"octahedral}
  \vspace{-2ex}
\end{figure}

In \cref{s"uniform_random_paul_strings}, we discuss models where we draw uniformly random Pauli Strings.

\section{A Variation of the Discrete Stone-von Neumann Theorem}
\label{s"res_stone_von_neumann}

We were able to apply the full block-diagonalisation technique in \cref{.full_result_informal} on Hamiltonians written in terms of Majorana operators since the Majorana group is unitarily equivalent to the Pauli group as stated by the discrete Stone–von Neumann theorem~\cite{heinrich_stabiliser_techniques} (cf.~\cref{x"pauli_to_majorana}). Below, we consider a generalisation of this theorem, characterising the conditions under which groups --- such as the Pauli group or, more generally, the Weyl–Heisenberg group --- are isomorphic, potentially via unitary conjugation. A comprehensive treatment, including proofs, generalisations and further mathematical details, is provided in \cref{s"app_isomorphisms}.

Let us first give a definition of the groups we are interested in, and then show under which conditions there exists a unitary conjugation between them.

\subsection{The Polar Commutator Group}

Let \en[d,~n,~N], with $d$ prime, and $\omega \in \C$ be a primitive $d$-th root of unity. Let $F \subset \C^\times$ be a fixed set of representatives of the multiplicative quotient group $\C^\times\!/\braket{\omega}$; for example, ${F = \set{r \ee^{i\phi}}{r \in \R{\setminus}\bc{0}, \phi \in [0, 2\pi/d)}}$. Without loss of generality let $1 \in F$, and let $r: \C^\times \to F$, $u: \C^\times \to \Z_d$ such that $a = r(a) \omega^{u(a)}$ for all $a \in \C^\times$. We now define the polar commutator group as:

\begin{definition}[The polar commutator group]\label{.main_polar_group}
Let ${W \in \M_n\w(\Z_d)}$ and set ${\Omega = W - W\ut}$. The \emph{polar commutator group} is the tuple ${\gk_d^n(W) = \w(F, \Z_d, \Z_d^n, \cdot)}$ with multiplication defined as
\begin{align}
    \cdot: \gk_d^n \times \gk_d^n &\to \gk_d^n\;,\\
    (a, p, x), (b, q, y) &\mapsto (r(ab), u(ab) + p + q + x\ut W y, x + y)\nonumber
\end{align}
\end{definition}
We may then define the \textit{polar commutator representation}
\begin{equation}
    \begin{split}
        \mu: \gk_d^n(W) &\to \GL\w(\C^N),\\
        (a, p, x) & \mapsto a \omega^p\tau(x),
    \end{split}
    \label{e"main_representation}
\end{equation}
where $\tau : \Z_d^n \to \GL\w(\C^N)$ is a mapping into the general linear group of $\C^N$
such that $\mu$ is a multiplicative monomorphism. Where clear by context, we drop indices and argument and write $\gk$ instead of the full form, $\gk_d^n(W)$. As shorthand, we write $\mu(\cdot, \cdot, \cdot) \coloneqq \mu((\cdot, \cdot, \cdot))$ and we call $\mu(\gk)$ the representation of $\gk$ via $\tau$.

Let us now consider the form of elements from $\gk_d^n(W)$, $\w(a, p, x)$. The first component allows us to include scalar factors up to multiples of $\omega$ in the representation; here, we allow the scalars to be any complex numbers, but one can also generalise it to multiplicative subgroups of $\C^\times$ (see \cref{s"app_isomorphisms}). The second component, $p\in\Z_d$ then accounts for multiples of $\omega$ as well as the commutation rules via $W$. The third component describes the group elements in the representation via $\tau$. Characteristically, the commutator of the representation is always a scalar, more specifically, it is restricted to the roots of unity, hence the name of the group: For $(a, p, x), (b, q, y) \in \gk$, we have (see \cref{s"app_isomorphisms})
\begin{equation}
  \tbk{\mu(a, p, x), \mu(b, q, y)} = \omega^{x\ut \Omega y},
\end{equation}
and the commutator Lie bracket is given by
\begin{equation}
  \bk{\mu(a, p, x), \mu(b, q, y)} = \w(1 - \omega^{-x\ut \Omega y}) \mu(a, p, x)\mu(b, q,
  y)\;.
\end{equation}
Note that because of this, closing a subset $X$ of $\mu(\gk)$ under the Lie bracket is equivalent, up to scalar factors, to simply closing $X$ under multiplication.


Given a group which equals the representation $\mu\w(\gk(W))$ for some $W \in \M_n(\Z_d)$, $\gk$ allows us to handle this group in the $\Z_d^n$ vector space additionally with $\Z_d$ to capture the commutation rules (instead of $\GL\w(\C^N)$).

We can describe the Pauli group as representation of polar commutator group: For some
\en[m], set $N = 2^m$, $n = 2m$, $d = 2$, $W = \smatp{0 & 0 \\ -\1^{m\times m} & 0} \in
\M_n\w(\Z_d)$ and define $\tau$ as
\begin{equation}
\begin{split}
  \tau: \Z_d^n \cong \Z_d^m \times \Z_d^m &\to \GL\w(\C^{2^m}),\\ (z, x) &\mapsto
  \bigotimes_{j=1}^m Z_j^{z_j} X_j^{x_j}.
  \end{split}
\end{equation}
The representation $\mu$ as in \cref{e"main_representation} then gives us the Pauli group with complex prefactors. This description of the Pauli group is also known as the tableau or Heisenberg description; or rather, the Pauli group is a representation of the Heisenberg group $\hei_{2m}(\Z_2)$~\cite{gross_hudsons_theorem_finite_quantum_system,heinrich_stabiliser_techniques}. In \cref{s"weyl_heisenberg_and_parafermions}, we describe the more general Weyl-Heisenberg group in terms of $\gk$ and $\mu$. The Pauli group, with $\Omega$ being the standard symplectic form, is the canonical example of the polar commutator group, for $d = 2$, corresponding to physical spin-$\frac{1}{2}$ systems.

Another group that we can describe is the group of Majorana operators (or more general,
the parafermions; cf \cref{s"weyl_heisenberg_and_parafermions}): Again, for some \en[m],
we set $N = 2^m$, $n = 2m$, and $d = 2$, but now $W = P$, where $P$ is the parity matrix,
i.e, $P_{ij} = 1$ if $i > j$ and $0$ otherwise for all $i, j \in \bc{1, \ldots, n}$, and
\begin{equation}
    \begin{split}
        \tau: \Z_d^n &\to \GL\w(\C^{2^m}),\\
        x &\mapsto\gamma_1^{x_1} \cdots \gamma_n^{x_n},
    \end{split}
    \label{e"majorana_representation}
\end{equation}
with the Majorana operators define as in \cref{e"majorana_operators}.
We see that the only difference between the Pauli group and the Majorana group is the Matrix $W$. 

\subsection{The Conjugation Isomorphism}\label{s"conjugation_isomorphism}

We now discuss under which requirements different polar commutator groups are isomorphic to each other. The proof is in \cref{s"app_isomorphisms}. As explicit example we show that there is a unitary equivalence between the Pauli and the Majorana group. This equivalence is already known as a special case of the Stone–von Neumann theorem~\cite{heinrich_stabiliser_techniques}, under which both groups form irreducible representations of the Heisenberg group (compare, for example, Ref.~\cite{room_synthesis_of_clifford_matrices}, which implicitly defines the Majorana group as a representation of the Heisenberg group; cf.~\cref{x"pauli_to_majorana}). Nonetheless, we present it here as an illustrative application of \cref{.main_isomorphism} due to its familiarity (in \cref{s"weyl_heisenberg_and_parafermions} we discuss the general case of the Weyl–Heisenberg and parafermion groups).
\begin{theorem}\label{.main_isomorphism}
    Let $N_1,N_2\in \N$, $W_1,W_2 \in \M_n\w(\Z_d)$, and $\gk_d^n(W_1)$, $\gk_d^n(W_2)$ with representations $\mu_1$ and $\mu_2$, respectively. If $\Omega_i = W_i - W_i\ut$, $i \in \bc{1, 2}$, both have full rank, i.e., are symplectic, then there exists an isomorphism
    \begin{equation}
      \phi: \gk = \gk_d^n(W_1) \to \gk_d^n(W_2)\;.
    \end{equation}
    Furthermore, if the representations $\mu_1$ and $\mu_2 \circ \phi$ have the same character, i.e, the same trace of the representation, and it exists a set $B \subseteq \gk$ such that $\mu_1(B)$ is a set of hermitian (or unitary) generators of the vector space $\M_{N_1}(\C)$ and $\mu_2(\phi(g))$ is hermitian (unitary) for all $g \in B$, then it exists $S \in \U\w(N)$, where $N = N_1 = N_2$, such that $\mu_2\w(\phi(g)) = S \mu_1(g) S\ui$ for all $g \in \gk$.
\end{theorem}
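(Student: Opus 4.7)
My plan for the first half of the theorem (existence of an abstract group isomorphism) is to reduce it to the classification of non-degenerate alternating bilinear forms on $\Z_d^n$. Since $d$ is prime, $\Z_d$ is a field, so by the standard symplectic classification all non-degenerate alternating forms on $\Z_d^n$ are congruent. The assumption that $\Omega_i = W_i - W_i\ut$ has full rank for both $i = 1, 2$ thus yields an $M \in \GL_n(\Z_d)$ with $M\ut \Omega_2 M = \Omega_1$. Setting $A \coloneqq M\ut W_2 M - W_1$, a one-line computation gives $A - A\ut = M\ut \Omega_2 M - \Omega_1 = 0$, so $A$ is symmetric: the discrepancy between $M\ut W_2 M$ and $W_1$ beyond the antisymmetric part $\Omega_1$ is a symmetric bilinear form, invisible to the commutator but requiring absorption into the scalar coordinate of $\gk$.

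To absorb it I construct a cocycle $\chi : \Z_d^n \to \C^\times$ satisfying
\begin{equation}
    \chi(x+y) / (\chi(x)\chi(y)) = \omega^{x\ut A y}\;,
\end{equation}
and set $\phi(a,p,x) \coloneqq (r(a\chi(x)),\, p + u(a\chi(x)),\, Mx)$. Expanding $\phi((a,p,x)(b,q,y))$ and $\phi(a,p,x)\phi(b,q,y)$ and tracking the net scalar factor reduces multiplicativity of $\phi$ to precisely the above cocycle identity. For $d$ odd, $2$ is invertible in $\Z_d$ and $\chi(x) = \omega^{2^{-1} x\ut A x}$ works, landing in $\braket{\omega}$ and absorbable into the $p$-slot. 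For $d = 2$, $2^{-1}$ does not exist in $\Z_d$, but $i \in \C^\times$ satisfies $i^2 = -1 = \omega$, so I take $\chi(x) = i^{x\ut A x}$ with $x\ut A x$ lifted from $\Z_2$ to $\bc{0,1} \subset \Z_4$; $\chi$ then takes values in $\bc{\pm 1, \pm i}$ and is absorbed into the $F$-slot via $r$ and $u$. Bijectivity of $\phi$ is immediate from invertibility of $M$ and the deterministic scalar adjustment.

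For the second half, let $\nu \coloneqq \mu_2 \circ \phi$, a second representation of $\gk_d^n(W_1)$; equality of characters at the identity forces $N_1 = N_2 =: N$. Restricting both $\mu_1$ and $\nu$ to the finite subgroup $\gk_0 \coloneqq \bc{(1,p,x) : p \in \Z_d,\, x \in \Z_d^n}$ --- a Heisenberg-type $d$-group of order $d^{n+1}$ --- gives two $N$-dimensional complex representations of a finite group with equal characters, and standard character theory over $\C$ then supplies an invertible $S \in \GL_N(\C)$ with $\nu(g) = S\mu_1(g)S^{-1}$ for every $g \in \gk_0$. Because $\mu_i(a,0,0) = aI$ in both representations, this conjugation identity extends to all of $\gk_d^n(W_1)$.

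Finally, I upgrade $S$ to a unitary via the generator condition on $B$. For each $g \in B$, both $\mu_1(g)$ and $\mu_2(\phi(g)) = S\mu_1(g)S^{-1}$ are hermitian; taking the adjoint of this identity and using hermiticity gives $(S^\da S)\mu_1(g) = \mu_1(g)(S^\da S)$ for every $g \in B$. Since $\mu_1(B)$ spans $\M_N(\C)$ as a complex vector space, $S^\da S$ is central in $\M_N(\C)$ and therefore equals $cI$ for some $c > 0$; rescaling $S \to S/\sqrt{c}$ yields the required unitary without affecting the conjugation. The unitary-generator case is identical with hermiticity replaced by unitarity throughout. I expect the main obstacle to be the $d = 2$ case of the cocycle construction: the failure of $2$ to be invertible in $\Z_2$ forces one out of $\braket{\omega}$ and into genuine fourth roots of unity in $\C^\times$, and one must verify carefully that the lift $\bc{0,1} \subset \Z$ used to define $x\ut A x \in \Z_4$ is consistent with $\Z_2$-addition on the inputs, and that the repackaging of $a\chi(x)$ through $r, u$ preserves associativity of the product in $\gk$.
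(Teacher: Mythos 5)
Your proposal is correct and follows essentially the same route as the paper: symplectic congruence over $\Z_d$ to obtain $M$, absorption of the symmetric discrepancy $M\ut W_2 M - W_1$ by a quadratic scalar correction (with the $\sqrt{\omega}=i$ trick for $d=2$, matching the paper's $\zeta'$ construction), character theory to get an intertwiner $S$, and the hermitian/unitary generating set to force $S\ur S \propto \1$. Your restriction to the finite subgroup $\bc{(1,p,x)}$ before invoking character theory, with extension by the scalar action, is a slightly more careful packaging of the same step the paper performs directly.
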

The first statement of \cref{.main_isomorphism} shows that many of the $\gk$ groups are the same from a group theoretical point of view. The canonical representation is probably the Weyl-Heisenberg group, where the commutator matrix $\Omega$ is the standard symplectic form and representatives of the group of a unitary Schmidt inner product. In this case of $d = 2$, the basis is additionally hermitian. More properties of the Weyl-Heisenberg are detailed in \cref{.weyl_heisenberg_properties}.

The second statement is particularly interesting for applications in quantum mechanics, namely, that we can map between different operator groups, preserving quantum expectation values.

We end this section by showing that the theorem in its full form applies, for example, to the Pauli group and the Majorana group (see \cref{.parafermion_weyl_isomorphism} for an alternative generalised proof). To see this we have to check the conditions in \cref{.main_isomorphism}: In the case of the Pauli group, we already know that the commutator matrix $\Omega$ is symplectic, and that the group contains a unitary basis; furthermore, the trace, i.e., the character, is zero for all non-identity elements. Regarding the Majorana group, it is clear that the strings in \cref{e"majorana_representation} are unitary, i.e., all Majoranas are unitary when we restrict the scalar prefactors in the groups to have absolute value $1$; furthermore, non-identity elements have trace zero: if the string has even support, just cycle one element from the front to the back and then use the cyclic property of the trace, and if the support is odd, there exists an $i \in \bc{1, \ldots, m}$ such that either $\gamma_{2i-1}$ or $\gamma_{2i}$ is in the string, but not both, which also implies that the trace is zero, since $\braket*{\lambda}[(c_i\ur \pm c_i)][\lambda]>i> = 0$, where $\ket{\lambda}_i \in \bc{\ket{0}_i, \ket{1}_i}$ is the fermionic number basis for the $i$th orbital. With that, the Majorana group has the same character as the Pauli group. It remains to show that the commutator matrix $\Omega$ of the Majorana group has full rank , but this is equivalent to saying that for each Majorana string, there exists another one that anticommutes with it; the argument for this is the same as for the trace: if the string has even support, then any $\gamma_{j}$ with $j$ being in the support, anticommutes with the string, and if the support is odd, it exists an $i \in \bc{1, \ldots, n}$ such that $\gamma_{i}$ is not in the string - this element anticommutes with the string. With that, all conditions of \Cref{.main_isomorphism} are fulfilled, and therefore, there exists an unitary $S \in \GL\w(\C^{2^m})$ that maps the Pauli group to the Majorana group under conjugation.

\section{Discussion}

We have presented a graph-theoretic method to simplify Hamiltonians written in the Pauli basis, or any operator basis with binary commutation rules. The block-diagonalisation recursively removes terms in the Hamiltonian that correspond to siblings in the frustration graph, simplifying the Hamiltonian. Our twin-collapsing approach can be seen as an extension of an important result of Ref.~\cite{Kirby2020}, which leads to the reduction in complexity of a class of Hamiltonians defined by certain commutation structures. An interesting further direction would be to fully explore how many more classes of Hamiltonians can be reduced in complexity due to these results, or whether such results extend to systems made of qudits rather than qubits~\cite{mann_free_parafermions}. In the latter case, the frustration graph is a directed graph (and the model has a free-parafermion solution if the frustration graph is an oriented indifference graph), and false twins are defined as before, with the additional requirement that for each shared neighbour of the twins, one of the two edges from the twins must be directed to the neighbour and the other away from the neighbour.

Another immediate application of this method is in the recognition and expansion of the class of free-fermion Hamiltonians. Numerical simulations show that the collapsing algorithm  applied to spin Hamiltonians on a two-dimensional periodic brick lattice can remove up to approximately $26\%$ of the terms in the Hamiltonian, when interactions are sparse. This leads to approximately $4\%$ more free-fermion Hamiltonians in that case. Since our collapsing algorithm works through modular decomposition of the frustration graph, an immediately apparent extension to our work would be to investigate which other collections of terms may be identified through their graphical structures that can be removed through unitary or projective means. While this work has focused on \textit{generic} free fermion solutions, this may help in the pursuit of a general theory that includes non-generic models; that is, models which admit a free-fermion solution only for finely tuned coefficients~\cite{fendley2024free,fukai_free_fermions_with_claws}. A first step would be to identify the family of unitaries that preserve claw-free-ness of a frustration graph.

As we have shown, in the special case where the frustration graph of the Hamiltonian is a \textit{cograph}, the block-diagonalisation through twin collapse results in a \textit{full diagonalisation} of the model. Further work could investigate whether this could lead to a general diagonalisation technique where one manipulates or perturbs a general Hamiltonian such that its frustration graph becomes a cograph.

We have also presented a variation of the discrete Stone-von Neumann theorem and studied a family of groups which can be used as drop-in replacements of the Pauli group in the previous results, broadening the application of the graph-theoretical methods. The groups are characterised by that they are nearly Abelian in the sense that the commutator is restricted to roots of unity. We showed that if the commutator defining matrix is symplectic then the groups are isomorpic to each other, potentially under a unitary conjugation. While the presented application of the theorem to the Weyl-Heisenberg group and the parafermion group recovers already known results, we believe theory gives insights into those groups from an abstract (but potentially simpler, especially for numerical calculations) point of view. It may also help in characterising similar groups in the future or to define new groups with different commutator matrices and represent them via the isomorphism to the Weyl-Heisenberg group.

\section*{Acknowledgements}
The authors would like to thank Adrian Chapman and Ryan Mann for insightful discussions. Jannis Ruh was supported by the Sydney Quantum Academy, Sydney, NSW, Australia.

\bibliography{literature.bib}

\clearpage
\onecolumngrid
\appendix

\section{Modular Decomposition}\label{s"modular_decomposition}
We shall now introduce the modular decomposition tree and discuss the relevant observations for the algorithms we use in our numerical simulations. More additional details and proofs can be found in \cref{s"app_graphs}. The modular decomposition tree is a unique description of a graph that allows us to apply algorithms to recursively detect and remove twins as well as detect whether a graph is claw-free and if so, if it contains simplicial cliques \cite{chudnovsky_growing_without_cloning}. This is exactly what we need to apply~\cref{.result_informal} and detect \ac{scf} Hamiltonians in practice. Importantly, the complexity to create the modular decomoposition tree is linear in $|V| + |E|$~\cite{habib_survey_modular_decomposition}.

We begin by extending our graph-theoretical definitions. Abusing notation, we define $\vertm$ and $\edgem$ to be the mappings from a graph to its vertex and edge set, respectively, that is, $\vertm(G) = V$ and $\edgem(G) = E$. Given a set ${X \subseteq V}$, its complement is ${X^c = V {\setminus} X}$ and the graph complement is $G^c = \w(V, \w(V \times V) {\setminus} \w(E \cup \set{(x, x)}{x \in V}))$. We write ${H \,(<)\!\leq G}$ when $H$ is an induced subgraph of $G$, i.e., ${H = G[X]}$ for some ${X \,(\subset)\!\subseteq V}$. For $X \subseteq V$, the semi-open neighbourhood of $X$ is defined as $\neigh\braket{X} = \bigcup_{x \in X} \neigh(x)$ and similarly, the complementary version. The open and closed neighbourhoods are given by $\neigh(X) = \neigh\braket{X}{\setminus}X$ and $\neigh\bk{X} = \neigh\braket{X} \cup X$, respetively, and analogously for the complementary versions. For some $Y \subseteq V$, the open neighbourhood of $X$ in $Y$ is denoted as $\neigh_Y(X) = \neigh(X) \cap Y$; analogously for the  closed, semi-open, and complementary versions. We also allow graph subscripts for the neighbourhood to specify the graph in which the neighbourhood is taken, e.g., we have $\neigh = \neigh_G$ and $\neigh^c = \neigh_{G^c}$. $G[X]$ is a (complementary) component if, and only if, $\neigh^{(c)}[X] = X$ (which is equivalent to $\neigh^{(c)}(X) = \emptyset$).\\[0.5em] Let us repeat the definition of modules:
\begin{definition}[Modules]\label{.app_modules}
    Let $G = (V, E)$ be a graph. A module $X \subseteq V$ is defined through the following equivalent definitions:
    \begin{itemize}
        \item For all $y \in X^c$ it holds
        \begin{equation}
            y \in \neigh(X) \iff \forall x \in X: y \in \neigh(x)\;.
        \end{equation}
        \item For all $x, y \in X$ it holds
        \begin{equation}
            \neigh(x) {\setminus} X = \neigh(y) {\setminus} X\;.
        \end{equation}
    \end{itemize}
\end{definition}
The idea of the modular decomposition is to describe the graph as a quotient graph with respect to modules. To do this, we need a class of graphs that complement the definition of modules, namely prime graphs:
\begin{definition}[Prime graph]
    A graph $G = (V, E)$ is called \emph{prime} if, and only if, $\abs{V} \geq 4$ and it only has trivial modules; that is, the only modules are the empty set, single vertices and $V$ itself.
\end{definition}
We require $\abs{V} \geq 4$ since for $\abs{V} \leq 2$, $G$ is trivially prime, and for $\abs{V} = 3$, $G$ is never prime. Of specific interest are maximal modules and maximal prime subgraphs:
\begin{definition}[maximal, strong]
    Let $G = (V, E)$ be a graph with $\abs{V} > 1$.
    \begin{enumerate}
        \item A module $M \subseteq V$ is called \emph{maximal} if, and only if, $M \neq V$ and there is no module $M'$ such that $M \subset M' \subset V$.
        \item A module $M \subseteq V$ is called \emph{strong}, if it does not overlap with any other module, i.e., for all modules $M'$ it holds either $M' \subseteq M$, $M \subseteq M'$ or $M \cap M' = \emptyset$.
        \item An induced prime subgraph $H \leq G$ is called maximal if, and only if, there is no induced prime subgraph $H'$ such that $H < H' \leq G$.
    \end{enumerate}
\end{definition}
The maximal modules and prime graphs allow us to define the quotient graph with respect to a modular partition, which will lead to the main theorem of modular decompositions.
\begin{definition}[Modular partitions]
    Let $G = (V, E)$ be a graph. A partition $P$ of $V$ is called a (maximal) modular partition of $G$ if $X$ is a (maximal) module for all $X \in P$.
\end{definition}
Given a modular partition we can define the quotient graph:
\begin{definition}[Quotient graph]
    Let $G = (V, E)$ be a graph and $P$ a modular partition of $V$. The quotient graph $G/P$ is the graph $G/P = (P, E/P)$ where
    \begin{subalign}
        E/P &= \set{(X, Y) \in P^2}{Y \subseteq \neigh(X)}\\
        &= \set{(X, Y) \in P^2}{X \subseteq \neigh(Y)}\;.
    \end{subalign}
    $P = V/P$ is the set of equivalence classes of $V$ with respect to the canonical equivalence class induced by $P$. We write $\tilde{x}$ for the elements of $V/P$, for some representative $x \in V$. Let $\bc{x_1, \ldots x_{\abs{P}}}$ be a set of representatives; the induced subgraph $G[\bc{x_1, \ldots x_{\abs{P}}}]$ is isomorphic to the quotient graph $G/P$ and we call it a representative of $G/P$. Because of that, we include $G/P$ in the list of subgraphs of $G$ meaning all of the representatives of $G/P$.
\end{definition}
The key observation by Gallai is that if neither $G$ nor $G^c$ are disconnect, then the maximal modules of $G$ are strong and with that they form a partition to decompose the graph:
\begin{theorem}[Modular decomposition, Edmonds-Gallai; e.g.,~\cite{habib_survey_modular_decomposition}]\label{.modular_decomposition}
    Let $G = (V, E)$ be a graph. Then one and only one of the following holds:
    \begin{description}
        \item[Single] $G$ is a single vertex.
        \item[Parallel] $G$ is disconnected, i.e., there are more than one components.
        \item[Serial] $G^c$ is disconnected, i.e., there are more than one complementary components.
        \item[Prime] $G$ and $G^c$ are connected, $\abs{V} \geq 4$, the maximal modules are strong, i.e., they form a maximal modular decomposition $P$, and it holds that $G/P$ is maximal prime in $G$.
    \end{description}
\end{theorem}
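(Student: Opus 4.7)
The plan is to first dispose of the easy dichotomy: the four cases are pairwise exclusive and jointly exhaustive. The \textbf{Single} case is $|V|=1$; for $|V|\geq 2$ one has either $G$ disconnected (\textbf{Parallel}), or $G$ connected but $G^c$ disconnected (\textbf{Serial}), or both $G$ and $G^c$ connected. A short check on graphs with $|V|\in\{2,3\}$ shows at least one of $G$ or $G^c$ is disconnected in all such cases, so ``both connected'' automatically forces $|V|\geq 4$. All of the substantive work therefore lies in the \textbf{Prime} case.

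The technical engine I would establish first is the \emph{module algebra}: for modules $M_1,M_2\subseteq V$, (i) $M_1\cap M_2$ is a module, and if $M_1\cap M_2\neq\emptyset$ then $M_1\cup M_2$ is also a module; (ii) if $M_1,M_2$ overlap (they meet but neither contains the other), then $M_1\setminus M_2$ and $M_2\setminus M_1$ are modules. Each is a direct check against \cref{.app_modules}. Using (i), if two maximal modules $M_1,M_2$ overlapped, their union would be a strictly larger module, so maximality forces $M_1\cup M_2=V$. Writing $A=M_1\setminus M_2$, $B=M_1\cap M_2$, $C=M_2\setminus M_1$ (all non-empty by overlap), the fact that $A=V\setminus M_2$ and $C=V\setminus M_1$ are complements of modules means every $a\in A$ is either fully connected or fully disconnected to $M_2\supseteq C$, and symmetrically for $c\in C$ with respect to $M_1\supseteq A$. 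Tracking these two ``types'' consistently across $A$ and $C$ forces either the all-edges pattern or the no-edges pattern between the three sets $A,B,C$; the former gives a disconnection of $G^c$ between $A,B,C$, the latter a disconnection of $G$, each contradicting the Prime hypothesis. Hence distinct maximal modules are disjoint; since every singleton $\{v\}$ is a trivial module and therefore contained in some maximal module, these maximal modules partition $V$ into a set $P$, which is by construction a maximal modular partition.

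It remains to show that $G/P$ is a maximal prime subgraph of $G$. Any nontrivial module $\tilde M$ of $G/P$ would lift to $M=\bigcup_{\tilde m\in\tilde M}\tilde m$, a module of $G$ strictly containing at least one block of $P$ and strictly smaller than $V$, contradicting the maximality of that block; so $G/P$ is prime. The order bound $|P|\geq 4$ follows from ruling out $|P|\in\{2,3\}$ with essentially the same lifting argument combined with the connectedness of $G$ and $G^c$: any two-block quotient forces either no edges or all edges between the blocks, disconnecting $G$ or $G^c$; any three-vertex graph admits a non-trivial module, which again lifts and contradicts block maximality. For maximality of $G/P$ among induced prime subgraphs, I would argue that any strictly larger induced prime subgraph $H$ must contain two vertices from the same block of $P$, creating a non-trivial module in $H$ and contradicting its primality. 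The principal obstacle in the whole argument is the overlap case analysis in the second paragraph: reducing the a~priori four ``type'' patterns between $A,B,C$ to the two uniform ones requires simultaneously tracking the module conditions on $M_1$ and $M_2$ and exploiting the cross-constraint that a type-$1$ vertex in $A$ forces every vertex of $C$ to be type~$1$, and symmetrically for type~$0$.
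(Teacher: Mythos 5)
The paper itself gives no proof of this statement --- it is imported as the classical Edmonds--Gallai decomposition theorem with a citation to the modular-decomposition survey --- so there is no internal argument to compare against; judged on its own, your proposal is essentially the standard proof and is correct. The small-order case analysis, the overlap argument with $A=M_1\setminus M_2$, $B=M_1\cap M_2$, $C=M_2\setminus M_1$ and the two adjacency ``types'' forced to be uniform (yielding a disconnection of $G^c$ or of $G$), the lifting of a nontrivial module of $G/P$ to a module of $G$ strictly sandwiched between a block and $V$, the exclusion of $|P|\in\{2,3\}$, and the maximality step (any strictly larger induced subgraph contains two vertices of one block, whose intersection with that block is a nontrivial module of the subgraph by the restriction property in \cref{.module_properties}) all go through. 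Two points to tighten. First, the theorem asserts the maximal modules are \emph{strong}, i.e.\ they overlap no module whatsoever, not merely no other \emph{maximal} module; your union step only uses maximality of $M_1$ to force $M_1\cup M_2=V$, so the identical argument applies verbatim when $M_2$ is an arbitrary overlapping module, but as written you only claim and conclude pairwise disjointness of maximal modules. Second, mutual exclusivity of the Parallel and Serial cases as stated requires the (standard, one-line) fact that the complement of a disconnected graph is connected; your trichotomy phrasing ``$G$ connected but $G^c$ disconnected'' silently assumes it rather than proving it.
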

The decomposition described in \cref{.modular_decomposition} is unique. If we are in the parallel case, we describe the graph as a quotient graph that is an independent set together with the information about each module, i.e., vertex in the quotient graph. Analogously, in the serial or prime case, the quotient graph is a clique or a prime graph, respectively. The idea of the modular decomposition tree is to apply the decomposition recursively, that is, apply \cref{.modular_decomposition} to each module:
\begin{definition}[Modular decomposition tree]\label{.decomposition_tree}
    Let $G = (V, E)$ be a graph. We define the modular decomposition tree $\tree(G)$ of $G$ recursively accordingly to the four cases in \cref{.modular_decomposition}:
    \begin{description}
        \item[Single] $\tree(G)$ consists only of the root node which contains the vertex label.
        \item[Parallel] The root node of $\tree(G)$ is labelled ``(p)arallel'' and its children are the decomposition trees of all (i.e., minimal) components.
        \item[Serial] The root node of $\tree(G)$ is labelled ``(s)erial'' and its children are the decomposition trees of all (i.e, minimal) complementary components.
        \item[Prime] The root node of $\tree(G)$ is labelled ``prime'' (we sometimes use an empty label for that) and its children are the decomposition trees of all modules in the maximal modular partition $P$ of $G$. Furthermore, the root node contains a description of $G/P$ (we sometimes draw $G/P$ between the children).
    \end{description}
\end{definition}
The modular decomposition tree fully, and uniquely, describes a graph $G$; \cref{f"mod_decom_example} shows an example graph with its decomposition tree. An important special case is a cograph, whose modular decomposition tree is a cotree:

\begin{definition}[Cograph and cotree]\label{.cotree}
    A graph $G$ is called cograph if $\tree(G)$ is a cotree, that is, if, and only if, the modular decomposition tree of $G$ does not contain any prime nodes.
\end{definition}
While the decomposition tree is not necessarily the most efficient description when performing graph transformations, the information contained in the tree is advantageous for detecting structures in graphs. Astonishingly, there exist different algorithms to compute the modular decomposition tree in linear time with respect to the number of edges and vertices in the graph; Ref.~\cite{habib_survey_modular_decomposition} provides an introduction to some of these.
\begin{figure}[t]
    \centering
    \includegraphics{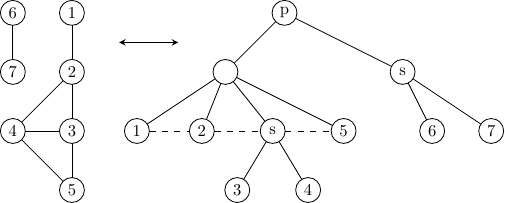}
    \caption{Example of a graph $G$ (left-hand side) and its modular decomposition tree $\tree(G)$ (right-hand side). ``p'' stands for a parallel node, ``s'' for a serial node, and an empty label for a prime node. For prime nodes we draw the edges of the quotient graph of the according module between the children (the quotient graph is trivial for parallel and serial nodes).} \label{f"mod_decom_example}
\end{figure}

The first observation we use to construct our detection algorithms is that twins are easily detected in the modular decomposition tree:
\begin{proposition}\label{.siblings_in_tree}
    Let $G = (V, E)$ be a graph. Each set of (false) true siblings of $G$ is given by collecting the leaves, i.e, ``single'' nodes, of a (parallel) serial node in the decomposition tree $\tree(G)$.
\end{proposition}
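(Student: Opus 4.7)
The plan is to prove a bijection: the maximal sets of true (resp.\ false) siblings of $G$ correspond exactly to the sets of singleton (leaf) children of serial (resp.\ parallel) nodes of $\tree(G)$.

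\emph{Forward direction.} Let $N$ be a serial node with associated strong module $M$, and let $x, y$ be two singleton children of $N$. The module property of $M$ ensures that every $v \in V \setminus M$ is adjacent to $x$ iff it is adjacent to $y$. Since $N$ is serial, $G[M]$ is the join of its children, so $x$ is adjacent to every element of $M \setminus \bc{x}$ (and likewise for $y$); in particular $x \sim y$. Hence $\neigh[x] = \neigh[y]$, so $x$ and $y$ are true siblings. The parallel case is symmetric, after replacing ``adjacent to every vertex'' by ``adjacent to no vertex'' inside $M$, yielding $\neigh(x) = \neigh(y)$.

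\emph{Converse direction.} Now let $T$ be a maximal set of true siblings. Then $T$ is a module of $G$ (uniform external adjacency) and $G[T]$ is a clique (siblings lie in each other's closed neighbourhood). Let $M$ be the smallest strong module containing $T$, so that $M$ corresponds to the lowest common ancestor $N$ of $T$ in $\tree(G)$. By minimality of $M$, at least two children of $N$ intersect $T$ nontrivially. If $N$ were prime, then $T$ would project to a nontrivial proper module of the prime quotient $G[M]/P$ --- impossible (and $T = M$ is ruled out since prime graphs on $\geq 4$ vertices are never complete). If $N$ were parallel, two elements of $T$ from different components of $G[M]$ would be nonadjacent, contradicting that $G[T]$ is a clique. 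Hence $N$ is serial, and $T$ must be a union of children of $N$.

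\emph{Each child of $T$ is a singleton.} Suppose some child $M_i \subseteq T$ had $\abs{M_i} \geq 2$; since $G[M_i]$ inherits the clique structure of $G[T]$, its subtree would begin with another serial node. Picking any child $M_{i,1}$ of $M_i$ and any $M_j$ with $j \neq i$ among the children of $N$, one checks directly using the module definition that $M_{i,1} \cup M_j$ is a module of $G$ which properly overlaps $M_i$, contradicting strongness of the child $M_i$ of $N$. Hence every child of $N$ lying in $T$ is a singleton, and by the forward direction, maximality of $T$ forces $T$ to equal the full set of singleton children of $N$. The false-sibling / parallel case is entirely analogous. The main technical hurdle is the structural lemma forbidding a serial node directly under a serial one (and the parallel analogue); the overlap construction above is what makes this precise within our argument.
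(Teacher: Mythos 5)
Your overall route is sound and genuinely different from the paper's. The paper proves the containment for an \emph{arbitrary} sibling set $S$ by induction on induced subgraphs containing $S$: it shows $S$ cannot straddle two children of the root node (the prime case being handled by showing that a single pair $\bc{\tilde{x},\tilde{y}}$ would be a twin, hence a nontrivial proper module, of the prime quotient), and then descends into the unique child containing $S$. You instead jump directly to the lowest common ancestor via the smallest strong module containing a \emph{maximal} sibling set $T$, use that $T$ is a clique (resp.\ independent) module, exclude the prime and parallel (resp.\ serial) root types, and finish with maximality; along the way your overlap construction $M_{i,1}\cup M_j$ re-proves the paper's separate lemma that a serial node never sits directly under a serial node. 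That construction is correct, and your argument, once patched, is a clean alternative that localises the whole proof at one tree node instead of recursing.

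Two steps need patching. First, you assert ``hence $T$ must be a union of children of $N$'' and then your singleton argument presupposes $M_i\subseteq T$; this is exactly the step that lets you treat $G[M_i]$ as a clique, and it is not free. It follows, e.g., because each child of a serial node is co-connected: if $T$ met a child $M_i$ properly, there would be a nonedge between some $u\in T\cap M_i$ and $y\in M_i\setminus T$, while $y$ is adjacent to any $z\in T$ lying in a different child, so $y$ distinguishes $u$ and $z$ and $T$ would not be a module (equivalently, $u$ and $z$ would not be true siblings). In fact this same observation shows directly that every element of $T$ forms a singleton child, which would let you bypass both the union claim and the overlap construction. Second, in the prime case the projected set $\tilde{T}$ of children meeting $T$ is indeed a module of the quotient (worth one line of justification), but it fails to be \emph{proper} precisely when $T$ meets every child, which does not require $T=M$ as your parenthetical suggests; this residual case is excluded by the fact you already invoke: choosing representatives of all children inside $T$ exhibits the prime quotient as a complete (resp.\ edgeless) graph on at least four vertices, which is impossible. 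The paper's two-vertex variant sidesteps this entirely, since a two-element module of a prime graph on at least four vertices is automatically nontrivial and proper.
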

The proof is in \cref{.siblings_in_tree_with_proof}. When applying \cref{.result_informal} on a Hamiltonian $H$ with frustration graph $G$, we apply \cref{.siblings_in_tree} to find and collapse all twins recursively: We start at the root node. On each node that is not a single node, we recursively collapse all twins in all child-modules that are not leaves. Then, on the current node, if it is a parallel or serial node, we remove all but one of the leaves. If this leaf is the only remaining child of the current node, we replace the current node with the leaf.

The complexity of this is roughly $\order(\abs{V}^2)$, while a naive approach would roughly be $\order(\abs{V}^3)$ (in both cases one factor $\abs{V}$ accounts for the actual removal of the vertex in $G$).

This collapse sequence can be easily extended to also collapse line graphs. In \cref{.remove_line_graphs} we argue that we only have to check whether prime modules that have only leaves are line graphs. This can be done in linear time with respect to the size of the module \cite{roussopoulos_line_graph,lehot_line_graph}.

The next observation is about how we can detect whether a graph is claw-free or not, based on the decomposition tree. It turns out that claw-free-ness heavily restricts the structure of the decomposition tree:
\begin{proposition}\label{.claw_free_tree}
    Let $G = (V, E)$ be a claw-free graph. Then one of the following holds for the decomposition tree $\tree(G)$:
    \begin{enumerate}
        \item The root node is prime and all its children are cliques, i.e., leaves or serial nodes.
        \item The root node is serial and all its children are leaves, parallel nodes with two children, or prime nodes, and in the latter two cases all of their children are cliques, i.e., leaves or serial nodes.
        \item The root node is parallel and all its children are leaves or types of the above two cases.
    \end{enumerate}
\end{proposition}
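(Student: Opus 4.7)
The plan is to case-split on the type of the root node of $\tree(G)$ --- leaf, parallel, serial, or prime --- and in each non-trivial case use claw-freeness to rule out every tree shape outside those the proposition allows. Whenever a module $M'$ is supposed not to be a clique, $u, v \in M'$ will denote non-adjacent witnesses; the uniform target will be to produce a vertex $w$ adjacent to both $u$ and $v$ together with a third vertex $z$ satisfying $z \not\sim u$, $z \not\sim v$ and $w \sim z$, which is a claw centred at $w$.

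For a parallel root I would note that any claw is connected and so lies inside a single component of $G$; each component is then itself claw-free and, by induction on $\abs{V}$, its tree matches one of the three cases. Components are connected, so their roots cannot be parallel, leaving only leaves, prime, or serial roots --- exactly case 3.

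For a serial root with maximal partition $P$ I would exploit that every $w \in V \setminus M$ is adjacent to all of $M \in P$ (and that $V \setminus M \neq \emptyset$, since a serial root has at least two children) and examine the admissible shapes of $G[M]$. If $G[M]$ is parallel with components $C_1, \ldots, C_k$ and $k \geq 3$, an independent triple taken from three different components together with any such $w$ is a claw; if $k = 2$ but some $C_i$ is not a clique, two non-adjacent vertices of $C_i$ and a vertex of the other component do the same. Hence parallel children must have exactly two clique components. If $G[M]$ is prime with quotient $Q'$ and some grand-child module $M' \subseteq M$ is not a clique, then primeness of $Q'$ (which in particular excludes a universal vertex) supplies a non-neighbour $\tilde{M''}$ of $\tilde{M'}$; any $z \in M''$ is non-adjacent to $u, v$ by the module property while $w \sim u, v, z$, again a claw. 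This matches case 2.

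The hardest case is the prime root, with quotient $Q = G/P$, precisely because the external vertex $w$ that seriality provided for free is no longer available. Suppose a child $M \in P$ is not a clique, and let $N$ and $N^c$ be the neighbour and non-neighbour sets of $\tilde{M}$ in $Q$. A prime graph on at least four vertices has neither an isolated nor a universal vertex --- otherwise $V(Q) \setminus \{\tilde{M}\}$ would be a non-trivial module --- so both $N$ and $N^c$ are non-empty. The main obstacle is to show that some $\tilde{M_w} \in N$ is adjacent in $Q$ to some $\tilde{M_z} \in N^c$. I would argue by contradiction: if every vertex of $N$ were non-adjacent to every vertex of $N^c$, then $N \cup \{\tilde{M}\}$ would be a module of $Q$ (each $\tilde{y} \in N^c$ being uniformly non-adjacent to it) and non-trivial by the same isolated/universal argument, contradicting primeness. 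Given such $\tilde{M_w}, \tilde{M_z}$, taking $w \in M_w$ and $z \in M_z$ produces the desired claw and forces every child of the prime root to be a clique --- case 1.
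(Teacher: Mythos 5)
Your proof is correct and takes essentially the same route as the paper, which merely factors the argument into lemmas: your serial-root claw constructions reproduce \cref{.serial_to_parallel_cliques}, and your prime-root argument (a non-clique child together with a neighbour/non-neighbour pair of its quotient vertex, whose existence you get by a module-based contradiction rather than the paper's connectivity argument) is exactly the mechanism behind \cref{.induced_two-edge_path,.max_claw_free_cliques}. The only point you leave implicit is that a serial node is never the child of a serial node (so the ``admissible shapes'' in your case 2 are indeed only leaf, parallel, prime), which is a property of the decomposition into minimal complementary components that the paper cites separately as \cref{.no_subsequent_components}.
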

While the tree structure in \cref{.claw_free_tree} is necessary for graphs to be claw-free, it is not sufficient. However, we can characterise claws given such a decomposition tree structure:
\begin{proposition}\label{.simple_tree_claws}
    Let $G = (V, E)$ be a graph, not necessarily claw-free, such that $\tree(G)$ has the form described in \cref{.claw_free_tree}. Each claws is covered in exactly one of the following two cases:
    \begin{enumerate}
        \item Let $A$ be a prime node with partition $P$. For every claw $\bc{\tilde{x}_0, \ldots, \tilde{x}_3} \subseteq G[\vertm(A)]/P$, the set $\bc{x_0, \ldots, x_3}$ is a claw, for arbitrary representatives.
        \item Let $A$ be a prime node, with partition $P$, that is a child of a serial node $B$. For every independent set $\bc{\tilde{x}_1, \tilde{x}_2, \tilde{x}_3} \subseteq G[\vertm(A)]/P$, the set $\bc{x_0, \ldots x_3}$, arbitrary representatives, is a claw for every vertex $x_0$ in another child of $B$.
    \end{enumerate}
\end{proposition}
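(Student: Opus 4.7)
The plan is to classify every claw of $G$ by the lowest common ancestor (LCA) of its four vertices in $\tree(G)$, and then combine the module property with the restricted tree shapes from \cref{.claw_free_tree} to pin the claw down to exactly one of the two cases. First I would fix a claw $\{y_0,y_1,y_2,y_3\}$ with centre $y_0$ and let $L$ be its LCA. Three possibilities for $L$ are immediately eliminated: $L$ cannot be a leaf (only one vertex); $L$ cannot be parallel (since $y_0$ is adjacent to all three leaves, forcing all four into the same component-child and contradicting minimality of $L$); and $L$ cannot be any node whose induced subgraph is a clique (the three leaves would then be pairwise adjacent). Given the structure enforced by \cref{.claw_free_tree} --- in particular, every serial node below the top level has only leaf children and is therefore a clique --- this leaves only two options: $L$ is prime, or $L$ is a ``non-clique'' serial node (which is then the root of a Type~2 subtree).

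In the prime-$L$ case with maximal modular partition $P$, I would argue that the four claw-vertices lie in four distinct children of $L$, whence they project to a claw in $G[\vertm(L)]/P$, yielding case~1. The core step is to exclude two vertices sharing a child $M$. Since $L$ is prime with the tree shape from \cref{.claw_free_tree}, every child of $L$ is a clique, so two claw-vertices in the same child would be adjacent --- immediately ruling out any pair among the independent leaves. For a pair involving the centre, say $y_0,y_k \in M$, the module property forces every outside vertex to agree on adjacency to $y_0$ and $y_k$; applied to the remaining leaves this contradicts the claw's adjacency pattern, since $y_0$ must be adjacent to them while $y_k$ (a leaf) must not. Hence the four vertices inhabit distinct modules, and by the definition of the quotient graph the claw-edges lift bijectively.

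In the serial-$L$ case, any two vertices in distinct children of $L$ are adjacent, so the three independent leaves must all lie in a single child $C$, while $y_0$ lives in another child. From \cref{.claw_free_tree} the children of $L$ are leaves, parallel nodes with two clique children, or prime nodes with clique children; the first two options admit no independent triple (maximum independent set of size at most~$2$), so $C$ must be a prime node. Each of the three leaves must then come from a different clique-child of $C$ (else two leaves would be adjacent), and so their projections form an independent triple in $C/P_C$; this is exactly case~2 with $A=C$ and $B=L$. The converse directions --- that a quotient-claw in a prime $A$ lifts to a $G$-claw (the module property makes the lift edge-preserving) and that an independent triple in $A/P$ together with any vertex from another child of the serial parent is a claw (complete bipartite structure between serial children) --- are immediate. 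Exclusivity is automatic since the two cases are distinguished by whether $L$ is prime or serial. I expect the main obstacle to be the bookkeeping in the serial-LCA branch: carefully matching the three allowed child types of a serial node against the requirement of hosting three mutually non-adjacent vertices, and confirming that ``leaf'', ``two-clique parallel'' and ``serial-of-leaves'' are genuinely exhausted. Once that structural sieve is established, the module property closes out everything else.
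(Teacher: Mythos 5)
Your proof is correct and rests on the same core facts as the paper's (much terser) argument: a clique child can host at most one claw vertex (two leaves would be adjacent, and a centre--leaf pair violates the module property against the outside leaves), and only prime quotients can host the required independent triples, so each claw is forced into exactly one of the two cases. Your LCA-based bookkeeping is just a more explicit organization of the paper's case analysis on the tree shapes of \cref{.claw_free_tree}, so no substantive difference.
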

The proofs of these two propositions are in
\cref{.claw_free_tree_with_proof,.simple_tree_claws_with_proof}, respectively. Note that if all twins have been recursively removed from a graph, the cliques in the \cref{.claw_free_tree} collapse into single vertices. The naive approach of detecting claws in a graph is of complexity $\order(\abs{V}^4)$, however, by translating this problem into a search for triangles and using efficient matrix algorithms, e.g., a variant of the Strassen algorithm \cite{strassen_algorithm}, this can be reduced to $\order(\abs{V}^{3,8})$. While we cannot strictly improve this complexity (e.g., in the case where the root node is prime with only leaves as children), searching for cliques based on \cref{.claw_free_tree,.siblings_in_tree} does help in practice (note that we already have the tree due to the twin collapse). Firstly, we check whether the tree $\tree(G)$ has the form as described in \cref{.claw_free_tree}; if not, we can early stop and conclude that $G$ is not claw-free. If the tree has the form, we search for the claws described in \cref{.simple_tree_claws}, which essentially requires to search for triangles in smaller subgraphs.

The last ingredient we need is to detect simplicial cliques. A naive approach would be of exponential complexity, however, for claw-free graphs, Ref.~\cite{chudnovsky_growing_without_cloning} provides an efficient algorithm to find simplicial cliques in $\order(\abs{V}^4)$ time. The algorithm requires checking whether induced (quotient) subgraphs are prime, which can be accomplished via the modular decomposition tree. For more details on the algorithm, we refer the reader to Ref.~\cite{chudnovsky_growing_without_cloning}.

\section{More on the Modular Decomposition and Simplicial Claw-Free Graphs}\label{s"app_graphs}

\noindent In this section we state and prove more technical details required to find \ac{scf} graphs.

\begin{proposition}\label{.module_properties}
    Some well known and basic properties of modules: Let $G = (V, E)$ be a graph and $X, Y \subseteq V$ be two modules.
    \begin{enumerate}
        \item If $X \cap Y \neq \emptyset$ then $X \cup Y$ is a module.
        \item $X$ is also a module of $G^c$.
        \item $X$ and $Y$ are either all-to-all connected (complete-adjacent) or all-to-all disconnected (complete-anti-adjacent).
        \item Let $W \subseteq V$, then $X \cap W$ is a module of $G[W]$.
    \end{enumerate}
\end{proposition}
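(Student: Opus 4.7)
My plan is to dispatch the four items in increasing order of subtlety, each reducing to the fundamental dichotomy in the second characterisation of a module in \cref{.app_modules}: for any external vertex $y$, either every element of the module is in $\mathcal{N}(y)$ or none is.

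For item 2, I would observe that for $y \notin X$, $\mathcal{N}_{G^c}(y) \cap X = X \setminus \mathcal{N}_G(y)$, so the ``all or nothing'' condition in $G$ is equivalent to the same condition in $G^c$ with the two cases swapped; hence $X$ is again a module in $G^c$. For item 4, I would pick $y \in W \setminus (X \cap W)$; since $y \in W$ and $y \notin X \cap W$ one has $y \notin X$, and the module property of $X$ in $G$, intersected with $W$, immediately yields that either all of $X \cap W$ is in $\mathcal{N}_{G[W]}(y)$ or none is.

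For item 1, let $z \in V \setminus (X \cup Y)$ and $v, w \in X \cup Y$. The sub-case $v, w \in X$ (respectively $v, w \in Y$) is handled by the module property of $X$ (respectively $Y$) alone. The only interesting sub-case is $v \in X$, $w \in Y$; here the hypothesis $X \cap Y \neq \emptyset$ lets me fix a common vertex $u \in X \cap Y$ and chain $z \sim v \Leftrightarrow z \sim u$ (via the module $X$) with $z \sim u \Leftrightarrow z \sim w$ (via the module $Y$). For item 3, I would first note that the statement is naturally read under $X \cap Y = \emptyset$, since the overlapping case is already subsumed by item 1. The strategy is then edge contagion: if some $(x_0, y_0) \in X \times Y$ is an edge, then $x_0 \notin Y$ together with $Y$ being a module forces $x_0$ adjacent to every element of $Y$; applying the module property of $X$ to each such $y \in Y$ (which lies outside $X$) then forces $y$ adjacent to every element of $X$. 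The symmetric argument starting from a non-edge handles the complete anti-adjacent case.

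The main obstacle, insofar as one arises in what is otherwise routine module calculus, is keeping track of exactly which module's defining dichotomy is being invoked at each step. In particular, for item 1 the intermediate vertex $u$ must actually lie in the intersection — which is precisely what the hypothesis $X \cap Y \neq \emptyset$ buys — and for item 3 the external-vertex side of \emph{both} modules is needed in the two-step propagation, so one must check $x_0 \notin Y$ and each $y \notin X$ explicitly; disjointness of $X$ and $Y$ is what makes those checks trivial.
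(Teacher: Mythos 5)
Your proof is correct; the paper itself offers no proof of this proposition (it is declared a collection of well-known facts, and the follow-up proposition's proof simply cites it as ``Clear''), so there is no authorial argument to diverge from --- your item-by-item reduction to the external-vertex dichotomy of \cref{.app_modules} is exactly the standard argument one would supply. You also rightly flag that item~3 is only true under the implicit reading $X \cap Y = \emptyset$ (take $Y = V$ and $X$ a singleton to see the literal statement fail), and your two-step ``edge contagion'' through $x_0 \notin Y$ and each $y \notin X$, together with the chaining through a common vertex $u \in X \cap Y$ in item~1, is precisely where the hypotheses are needed and is handled correctly.
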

\begin{proposition}
    Let $G = (V, E)$ be a prime graph. Then $G^c$ is also prime and both are connected.
\end{proposition}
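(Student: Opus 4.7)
The plan is to prove the two claims separately, leaning on Property 2 of \Cref{.module_properties} (that modules of $G$ coincide with modules of $G^c$) for the primality claim, and on the observation that connected components are modules for the connectedness claim.

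First I would handle primality of $G^c$. By Property 2 in \Cref{.module_properties}, any module of $G^c$ is also a module of $G$, and vice versa. Since $G$ is prime, its only modules are $\emptyset$, singletons, and $V$; hence the same is true for $G^c$. Since $\abs{V} \geq 4$ holds for $G^c$ as well, it follows that $G^c$ is prime. This step is essentially immediate once primality is recast in terms of modules.

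Next I would argue that $G$ is connected. Suppose for contradiction that $G$ has components $C_1, \ldots, C_k$ with $k \geq 2$. Each component $C_i$ is a module: for any $y \notin C_i$, $y$ is in a different component, hence has no neighbour in $C_i$, so the condition in \Cref{.app_modules} is satisfied vacuously. Since $G$ is prime, each $C_i$ must be a trivial module, i.e., either a singleton or all of $V$; the latter is excluded by $k \geq 2$, so every component is a singleton. But then $G$ has no edges, in which case every subset of $V$ is a module; using $\abs{V} \geq 4$, any two-element subset is a nontrivial module, contradicting primality. Hence $G$ is connected.

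Finally, connectedness of $G^c$ follows from applying the same argument to $G^c$, since we have already shown that $G^c$ is prime. The main (and only mild) obstacle is making sure the edge-case where every component is a singleton is ruled out, which is precisely where the hypothesis $\abs{V} \geq 4$ in the definition of a prime graph is invoked.
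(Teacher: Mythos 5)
Your proof is correct and follows the argument the paper intends: the paper's own proof is simply ``Clear (cf.\ \cref{.module_properties})'', and your write-up fills in exactly those standard details (modules of $G$ and $G^c$ coincide, components are modules, and the $\abs{V}\geq 4$ clause rules out the edgeless case). Nothing further is needed.
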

\begin{proof}
    Clear (cf. \cref{.module_properties}).
\end{proof}
\begin{lemma}\label{.induced_two-edge_path}
    Let $G = (V, E)$ be a prime graph. Then every vertex, is an endpoint of an induced two-edge path.
\end{lemma}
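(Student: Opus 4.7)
The plan is to argue by contradiction, exploiting a module forced by the absence of the desired path. First, I would fix an arbitrary $v \in V$. Primality (\cref{.modular_decomposition}) gives $\abs{V} \geq 4$ together with connectedness of both $G$ and $G^c$, and therefore $\neigh(v) \neq \emptyset$ (else $v$ is isolated in $G$) and $V{\setminus}\neigh[v] \neq \emptyset$ (else $v$ is isolated in $G^c$).

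To exhibit an induced two-edge path with $v$ as an endpoint it is enough to find some $u \in \neigh(v)$ and $w \in V{\setminus}\neigh[v]$ with $u \in \neigh(w)$, since the vertices $v, u, w$ then satisfy $v \sim u \sim w$ together with $v \not\sim w$. Suppose for contradiction that no such pair $(u,w)$ exists; equivalently, every element of $\neigh(v)$ is non-adjacent to every element of $V{\setminus}\neigh[v]$. Combined with the trivial fact that $v$ is adjacent to all of $\neigh(v)$, every vertex outside $\neigh(v)$ is therefore either completely adjacent or completely non-adjacent to $\neigh(v)$, which by \cref{.app_modules} exactly says that $\neigh(v)$ is a module of $G$.

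Primality forces this module to be trivial: it is neither empty (by connectedness of $G$) nor equal to $V$ (since $v \notin \neigh(v)$), so $\neigh(v) = \bc{u}$ for some unique neighbour $u$. Under the standing assumption this $u$ has no further neighbour, so $\neigh(u) = \bc{v}$, and hence every vertex of $V{\setminus}\bc{u,v} = V{\setminus}\neigh[v]$ is non-adjacent to both $u$ and $v$. Thus $\bc{u,v}$ is itself a module of $G$, and it is non-trivial because $\abs{V} \geq 4$, contradicting primality a second time.

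The whole plan is a short module hunt once one notices that the absence of a suitable $w$ promotes $\neigh(v)$ to a module. The main subtlety, and the only step I expect to need extra care when writing it out, is the singleton case $\abs{\neigh(v)} = 1$: primality applied to $\neigh(v)$ alone does not close the argument there, and one has to invoke primality a second time on the pair $\bc{u,v}$ to get the contradiction.
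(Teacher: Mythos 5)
Your argument is correct, and it reaches the same structural insight as the paper — that an induced two-edge path out of $v$ must exist because primality forbids the modules that its absence would create — but the mechanics differ slightly. The paper's proof is a direct construction: it notes $V{\setminus}\neigh[x] \neq \emptyset$ (else $\neigh(x)$ would be a non-trivial module) and then uses connectedness of $G$ (which primality implies) to exhibit an edge from some $y \in \neigh(x)$ to some $z \in V{\setminus}\neigh[x]$, giving the path $\bc{x,y,z}$ immediately. You instead get the nonemptiness of $V{\setminus}\neigh[v]$ from connectedness of $G^c$, and obtain the crossing edge by contradiction: its absence makes $\neigh(v)$ a module, primality collapses it to a singleton $\bc{u}$, and then $\bc{u,v}$ is a second non-trivial module (equivalently, a small connected component), contradicting primality again. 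Both routes are sound; the paper's is a step shorter because connectivity of $G$ hands over the crossing edge directly, while your version trades that for the extra singleton case you correctly flagged — which is handled properly, so there is no gap.
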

\begin{proof}
    Let $x \in V$. Since $G$ is prime, it holds $V {\setminus} \neigh[x] \neq \emptyset$ (otherwise $\neigh(x)$ would be a module), and again, because $G$ is prime, which implies that $G$ is connected, there exists a $z \in V {\setminus} \neigh[x]$ and $y \in \neigh(x)$ such that $z$ neighbours $y$. But then $\bc{x, y, z}$ is an induced two-edge path.
\end{proof}
\begin{proposition}\label{.no_subsequent_components}
    Let $G = (V, E)$ be a graph. In the modular decomposition tree, a serial node is never child of a serial node and a parallel node is never child of a parallel node.
\end{proposition}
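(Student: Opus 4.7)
The plan is to argue directly from the definition of the modular decomposition tree in \cref{.decomposition_tree}, using the fact that the ``parallel'' and ``serial'' cases of \cref{.modular_decomposition} are mutually exclusive conditions on the connectivity of $G$ and $G^c$.

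First I would fix notation: for any node $A$ of $\tree(G)$, write $G_A = G[X_A]$ for the induced subgraph on the set $X_A \subseteq V$ of leaf-labels descended from $A$. The label of $A$ is determined by applying \cref{.modular_decomposition} to $G_A$. In particular, $A$ is labelled parallel if and only if $G_A$ is disconnected, and serial if and only if $G_A^c$ is disconnected.

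For the parallel case, suppose $A$ is a parallel node and $B$ is a child of $A$. By \cref{.decomposition_tree}, the children of $A$ are in bijection with the connected components of $G_A$, and $X_B$ is the vertex set of one such component. Hence $G_B = G_A[X_B]$ is a single connected component of $G_A$, i.e., $G_B$ is connected. But a connected graph cannot be in the parallel case of \cref{.modular_decomposition} (that case is defined precisely by disconnectedness), so $B$ cannot be parallel.

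The serial case is completely symmetric under graph complementation: if $A$ is serial then its children correspond to the components of $G_A^c$, so for any child $B$ the complementary subgraph $G_B^c$ is connected, and hence $G_B$ is not in the serial case of \cref{.modular_decomposition}. I do not anticipate any real obstacle here; the statement is essentially a bookkeeping consequence of the fact that the four cases of \cref{.modular_decomposition} are mutually exclusive and that the decomposition is applied recursively to the induced subgraph on each child's module. The only thing to be careful about is ensuring the phrase ``children are the decomposition trees of all (minimal) components'' in \cref{.decomposition_tree} is read as ``one child per connected component'', which is what makes each $G_B$ connected rather than possibly disconnected.
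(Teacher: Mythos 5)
Your proposal is correct and is exactly the paper's argument spelled out: the paper's one-line proof ("clear, because we decompose into the minimal (complementary) components") is precisely your observation that each child of a parallel (serial) node is a single connected component of $G_A$ (of $G_A^c$), hence connected, hence not itself in the parallel (serial) case. Nothing further is needed.
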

\begin{proof}
    Clear, because we decompose into the minimal (complementary) components.
\end{proof}
\begin{proposition}\label{.siblings_in_tree_with_proof}
    Let $G = (V, E)$ be a graph. Each set of (false) true siblings of $G$ is given by collecting the leaves, i.e, ``single'' nodes, of a (parallel) serial node in the decomposition tree $\tree(G)$.
\end{proposition}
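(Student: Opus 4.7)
My plan is to prove both directions of the equivalence. I focus on the true-sibling/serial-node case and then observe that the false-sibling/parallel-node case follows by applying the same argument to $G^c$: by \cref{.module_properties}(2) the modules of $G$ and $G^c$ coincide, so $\tree(G^c)$ is obtained from $\tree(G)$ by exchanging the labels ``serial'' and ``parallel''.

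For the forward direction, I would take a serial node $A$ of $\tree(G)$ with set of leaf children $L_A$, and show that any two $u, v \in L_A$ satisfy $\neigh[u] = \neigh[v]$. This follows from two facts out of \cref{.modular_decomposition,.module_properties}: the children of a serial node are pairwise complete-adjacent, so $u$ is adjacent to every other vertex of $A$; and $A$ is itself a module of $G$, so every $z \in V{\setminus} A$ is adjacent either to all of $A$ or to none. Together these give $\neigh[u] = A \cup (\neigh(A){\setminus} A)$, which is manifestly independent of the choice of $u \in L_A$.

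For the converse, I would start from a maximal set $X$ of true siblings with $\abs{X} \geq 2$ and first verify that $X$ is simultaneously a clique and a module of $G$: for distinct $u,v \in X$, the identity $\neigh[u] = \neigh[v]$ forces $u \in \neigh(v)$ (clique property); and for $z \notin X$ adjacent to some $v \in X$, the same identity forces $z$ adjacent to every $v' \in X$ (module property). I would then locate $X$ in $\tree(G)$ using the standard structural fact that every module of $G$ is either a strong module (i.e., $\vertm(N)$ for some node $N$ of $\tree(G)$) or a proper union of at least two children of some serial or parallel ancestor. The parallel-ancestor case is ruled out at once because distinct children of a parallel node are complete-anti-adjacent, contradicting $X$ being a clique. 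Otherwise, either $X$ is strong with serial root (since the decomposition root of any clique on at least two vertices is serial by \cref{.modular_decomposition}), or $X$ is a proper union of children of some serial ancestor $A$. In either case, every participating child must be a single leaf: a participating child of size $\geq 2$ would have, by \cref{.no_subsequent_components}, a non-serial decomposition root, but its induced subgraph is a subgraph of the clique $X$ and hence itself a clique on $\geq 2$ vertices, which would force a serial root --- a contradiction. Hence $X \subseteq L_A$ for some serial $A$, and the forward direction together with the maximality of $X$ yields $X = L_A$.

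The main obstacle I expect is the converse, in particular cleanly invoking the structural dichotomy ``strong module vs.\ proper union of children of a serial/parallel node'' and using \cref{.no_subsequent_components} together with the clique property of $X$ to force each participating child to be a leaf. The singleton case $\abs{X} = 1$ is absorbed by identifying such an $X$ with $L_A$ of cardinality one for the smallest serial ancestor in the tree, or equivalently can be treated separately as a trivial subcase.
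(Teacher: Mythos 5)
Your proof is correct, but it takes a genuinely different route from the paper's. The paper proves the converse by induction on induced subgraphs containing the sibling set $S$: at each step a three-way case analysis on the label of the root node (parallel, serial, prime, the last via a quotient-graph twin argument showing $H/P$ would otherwise contain a nontrivial module) forces $S$ into a single child module, and one descends until the trivial case is reached, carrying the false- and true-sibling cases in parallel throughout. You instead observe that a maximal true-sibling set is simultaneously a clique and a module, dispose of the false-sibling case once and for all by complementation (legitimate, since modules are complementation-invariant by \cref{.module_properties} and complementation swaps serial and parallel labels in $\tree(G)$), and then invoke the classification of \emph{arbitrary} modules relative to the decomposition tree --- every module is a strong module or a union of at least two children of a degenerate node --- combined with \cref{.no_subsequent_components} and the clique property to force every participating child to be a leaf, finishing with the forward direction plus maximality. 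Your route is shorter and more structural, and the complementation trick halves the bookkeeping; its cost is that the module classification you call ``standard'' is strictly stronger than anything stated in the paper (it is in the cited survey, but the paper's own induction is self-contained given \cref{.modular_decomposition}), so a referee would ask you to either cite it precisely or prove it. Two small tidy-ups: the paper's statement concerns arbitrary sibling sets and its proof gives containment in a leaf set, whereas you prove equality for maximal sets --- equivalent in substance, but say so; and your absorption of the case $\abs{X}=1$ does not work as written, since a vertex with no twin need not be a leaf child of any serial (or parallel) node at all (e.g.\ a leaf child of a prime root), so that degenerate case should simply be excluded, as the paper implicitly does.
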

\begin{proof}
    Firstly, let $S$ be the set of all leaf nodes of a (parallel) serial parent node $a_n$, where \en is the layer in $\tree(G)$ (root node is in layer $1$), and let $M_n \subseteq V$ be the module corresponding to $a_n$ ($M_1 = V$). It is clear that $S$ is a set of (false) true siblings in $G[M_n]$. Now let $a_{n-1}$ be the parent node of $a_n$ and $M_{n-1} \subseteq V$ be the corresponding module. In general, it is clear that (false) true siblings in $G[M_n]$ are also (false) true siblings in $G[M_{n-1}]$ (because $M_n$ is a module). Therefore, it follows inductive that $S$ is a set of (false) true siblings in $G$.
    
    Now let $S$ be a set of (false) true siblings in $G$. We prove the statement via induction with respect to the size of the subgraphs that contain $S$. The base case is the graph $G[S]$: Since $S$ is (an independent set) a clique in $G[S]$, the decomposition tree of $G[S]$ has two layers, where the root node is a (parallel) serial node and the second layer contains all vertices of $S$ as leaves. Now let $H \leq G$ with $S \subseteq \vertm(H)$ and let the statement be true for all graphs $H'$ with $\abs{\vertm(H')} < \abs{\vertm(H)}$. Assume that we are not in the trivial case where the root node $a_1$ of $\tree(H)$ is a serial or parallel node and all vertices in $S$ are leaves of $a_1$. We state that $S$ is fully contained in one of modules of the children of $a_1$. Assuming the contrary, there exist two vertices $x, y \in S$, $x \neq y$, such that $x$ is in one module $M_x$ and $y$ is in another module $M_y$ (corresponding to two different child nodes $a_x$ and $a_y$ of $a_1$). We show that this leads to a contradiction, considering three cases:
    
    Firstly, consider the case where $a_1$ is a parallel node (this is already a contradiction if $S$ is a set of true siblings). without loss of generality, let $\abs{M_x} > 1$ (otherwise, if $\abs{M_x} = \abs{M_y} = 1$, we are back in the trivial case). Then there is a $z \in \neigh(x) \subseteq M_x$ because otherwise $x$ would be in a leaf node; but then $y$ cannot neighbour $z$, so it cannot be a sibling of $x$; contradiction.
    
    Secondly, consider the case where $a_1$ is a serial node (this is already a contradiction if $S$ is a set of false siblings). Again, without loss of generality, we have $\abs{M_x} > 1$. Then there is a $z \in M_x {\setminus} \neigh(x)$ because otherwise $x$ would be a leaf node; but then $y$ neighbours $z$, so it cannot be a sibling of $x$; contradiction.
    
    Thirdly, consider the case where $a_1$ is a prime node. Since $\bc{x, y}$ is a (false) true twin in $H$, $\bc{\tilde{x}, \tilde{y}}$ (remember that $\tilde{x} = M_x$ and $\tilde{y} = M_y$) is a (false) true twin in $H/P$, where $P$ is the maximal modular partition of $H$: Let $\tilde{z} \in \neigh_{H/P}(\tilde{x}) {\setminus} \bc{\tilde{x}, \tilde{y}}$, if existent. Then we also have $z \in \neigh_H(x)$ and therefore also $z \in \neigh_H(y)$. This implies that $\tilde{z} \in \neigh_{H/P}(\tilde{y}) {\setminus} \bc{\tilde{x}, \tilde{y}}$. Vice versa, we repeat the argument for $\tilde{z} \in \neigh_{H/P}(\tilde{y}) {\setminus} \bc{\tilde{x}, \tilde{y}}$, if existent, and it follows that $\bc{\tilde{x}, \tilde{y}}$ is a non-trivial module in the prime graph $H/P$, more specifically a (false) true twin; contradiction.
    
    Therefore, $S$ is fully contained in one of the children modules of $a_1$, let this module be $M_2 \subset \vertm(H)$. $M_2$ is strictly smaller than $V$; thus the induction hypothesis applies on $G[M_2]$, and there is a (parallel) serial node $a$ in $\tree(G[M_2])$ such that all vertices of $S$ are leaves of $a$. However, $a$ is obviously also a node in $\tree(G)$.
\end{proof}

\begin{proposition}[\cite{chudnovsky_growing_without_cloning}]
\label{.max_claw_free_cliques}
    Let $G = (V, E)$ be a claw-free graph where and $G$ and $G^c$ are connected and $\abs{V} > 1$. Then, the maximal modules of $G$ are cliques.
\end{proposition}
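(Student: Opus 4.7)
The plan is a short contradiction argument that combines the Edmonds--Gallai dichotomy with \cref{.induced_two-edge_path} to exhibit a forbidden claw whenever a maximal module fails to be a clique.

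First, I would dispose of small cases: since $G$ and $G^c$ are both connected and $|V|>1$, we must have $|V|\geq 4$, because for $|V|\in\{2,3\}$ one of $G,G^c$ is necessarily edgeless and hence disconnected. Thus \cref{.modular_decomposition} places us in the ``Prime'' case: the maximal modules of $G$ are strong and form a maximal modular partition $P$, and the quotient $G/P$ is prime with $|P|\geq 4$.

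Now let $M$ be any maximal module of $G$, with corresponding vertex $\widetilde{M}\in V(G/P)$. Suppose for contradiction that $M$ is not a clique, so there exist $x,y\in M$ with $(x,y)\notin E$. Since $G/P$ is prime, \cref{.induced_two-edge_path} supplies an induced two-edge path in $G/P$ having $\widetilde{M}$ as endpoint; that is, there are modules $M_1,M_2\in P$ with $\widetilde{M}\sim \widetilde{M_1}$, $\widetilde{M_1}\sim\widetilde{M_2}$, and $\widetilde{M}\not\sim\widetilde{M_2}$ in $G/P$. Choose $z_1\in M_1$ and $z_2\in M_2$. By the module property (third item of \cref{.module_properties}), $z_1$ is adjacent to every vertex of $M$, while $z_2$ is adjacent to no vertex of $M$; moreover $z_1$ and $z_2$ can be chosen adjacent, since $M_1$ and $M_2$ are all-to-all connected in $G$ by the edge $\widetilde{M_1}\widetilde{M_2}$ in the quotient.

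It follows that the four vertices $\{z_1,x,y,z_2\}$ induce a claw in $G$ with centre $z_1$: the three edges $z_1x$, $z_1y$, $z_1z_2$ are all present, while the three non-edges $xy$, $xz_2$, $yz_2$ hold by choice of $x,y$ and by the module property applied to $M$. This contradicts claw-freeness of $G$, so $M$ must be a clique, as claimed. I do not anticipate a technical obstacle; the only subtlety is verifying that the induced path from \cref{.induced_two-edge_path} lifts to a genuine claw in $G$, which is immediate from the ``complete-adjacent or complete-anti-adjacent'' dichotomy for modules.
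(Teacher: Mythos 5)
Your proof is correct and follows exactly the route the paper intends: its one-line proof simply cites \cref{.induced_two-edge_path} and \cref{.modular_decomposition}, and your argument is the natural expansion of that, lifting the induced two-edge path in the prime quotient to a claw centred in the neighbouring module via the complete-adjacent/complete-anti-adjacent dichotomy. No gaps; the handling of the small cases $|V|\in\{2,3\}$ and the distinctness of the four claw vertices are both in order.
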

\begin{proof}
    Follows easily with \cref{.induced_two-edge_path,.modular_decomposition}.
\end{proof}
\begin{proposition}[\cite{chudnovsky_growing_without_cloning}]
\label{.serial_to_parallel_cliques}
    Let $G = (V, E)$ be a claw-free graph. If, in $\tree(G)$, a parallel node is a child of a serial node, it consists of exactly two components and both are cliques.
\end{proposition}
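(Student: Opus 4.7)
The plan is to use the serial-parent assumption to produce a vertex $y$ outside the parallel module that is adjacent to \emph{every} vertex of it, and then to derive claws whenever the parallel module either has too many components or has a component that is not a clique.

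First I would fix notation. Let $A$ be the parallel node with corresponding module $M_A$, and let $B$ be its serial parent with module $M_B$. Since $B$ is serial, its maximal modular partition contains at least one sibling module $M_{A'}$ of $M_A$, and by \cref{.module_properties}(3) together with the definition of a serial node, $M_A$ and $M_{A'}$ are complete-adjacent: every vertex of $M_A$ is joined to every vertex of $M_{A'}$. Pick an arbitrary $y \in M_{A'}$; then $y$ is adjacent to every vertex of $M_A$.

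Next I would bound the number of components. Let $C_1, \dots, C_k$ be the components of $G[M_A]$, where $k \ge 2$ because $A$ is parallel. If $k \ge 3$, choose $x_i \in C_i$ for $i=1,2,3$. Vertices from distinct components are non-adjacent, so $\{x_1, x_2, x_3\}$ is independent, and $y$ is adjacent to all three, giving an induced claw $K_{1,3}$ centred at $y$. This contradicts claw-freeness of $G$, so $k = 2$.

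Finally I would show each component is a clique. Suppose, for contradiction, that $C_1$ is not complete. Then there exist $x_1, x_2 \in C_1$ with $(x_1, x_2) \notin E$. Pick any $x_3 \in C_2$; since $C_1$ and $C_2$ are different components of $G[M_A]$, $x_3$ is non-adjacent to both $x_1$ and $x_2$, so $\{x_1, x_2, x_3\}$ is an independent set. Again $y$ is adjacent to all three, producing a claw centred at $y$; contradiction. The same argument applies to $C_2$, so both components are cliques. I do not foresee a serious obstacle: the only thing one must be careful about is invoking the correct ``serial $\Rightarrow$ complete adjacency between children'' property, which is exactly \cref{.module_properties}(3) combined with the definition of the serial case of the modular decomposition.
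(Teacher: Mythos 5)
Your proof is correct and follows essentially the same route as the paper's: both arguments reduce the two failure cases (three or more components, or a non-clique component) to the existence of three independent vertices inside the parallel module, and then use a vertex from a sibling child of the serial node --- which is adjacent to the whole module --- as the centre of a claw, contradicting claw-freeness. The paper's proof is just a compressed version of yours, stating the independent triple directly rather than splitting into the two cases.
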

\begin{proof}
    Assuming the contrary, there is a parallel node $A$, child of a serial node $B$, that contains three independent vertices $x_1, x_2, x_3 \in A$. However, then every vertex $x_0$ in one of the other children of $B$ (at least one exists) is the central vertex in the claw $\bc{x_0, \ldots, x_3}$.
\end{proof}
\begin{proposition}\label{.claw_free_tree_with_proof}
    Let $G = (V, E)$ be a claw-free graph. Then one of the following holds for the decomposition tree $\tree(G)$:
    \begin{enumerate}
        \item The root node is prime and all its children are cliques, i.e., leaves or serial nodes.
        \item The root node is serial and all its children are leaves, parallel nodes with two children, or prime nodes, and in the latter two cases all of their children are cliques, i.e., leaves or serial nodes.
        \item The root node is parallel and all its children are leaves or types of the above two cases.
    \end{enumerate}
\end{proposition}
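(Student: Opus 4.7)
The plan is a direct case analysis on the type of the root node of $\tree(G)$, which by \cref{.modular_decomposition} is Single, Parallel, Serial, or Prime. The Single case is the trivial one-vertex graph and may be absorbed into Case~3 (or handled by assuming $|V|\ge 2$). The three substantive cases correspond exactly to the three cases in the statement, and each is dispatched by combining \cref{.max_claw_free_cliques}, \cref{.serial_to_parallel_cliques}, and \cref{.no_subsequent_components}.

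First, if the root is Prime, then both $G$ and $G^c$ are connected, so \cref{.max_claw_free_cliques} applies directly to $G$ and tells us that the maximal modules (the children of the root) are cliques; a clique's decomposition tree is either a single leaf or a serial node whose children are all leaves, which is exactly Case~1. Next, if the root is Serial, then by \cref{.no_subsequent_components} no child is itself serial, so each child module $M$ has its subtree rooted at Single, Parallel, or Prime. A Parallel child is controlled by \cref{.serial_to_parallel_cliques}: it has exactly two components, each a clique, matching the "parallel node with two children" clause. A Prime child $M$ yields $G[M]$ claw-free (as an induced subgraph) with $G[M]$ and $G[M]^c$ connected (primality), so \cref{.max_claw_free_cliques} forces the maximal modules of $G[M]$—the grandchildren of the serial root through this branch—to be cliques, i.e., leaves or serial nodes. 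This is Case~2.

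Finally, if the root is Parallel, then by \cref{.no_subsequent_components} no child is itself parallel, so each child is either a leaf, or has a Serial or Prime root; in the Serial subcase the child's subtree matches Case~2 and in the Prime subcase it matches Case~1, which is precisely Case~3. The main obstacle is bookkeeping rather than conceptual: one must keep straight that at each deeper prime node where \cref{.max_claw_free_cliques} is invoked, the submodule really is claw-free (inherited from $G$) and really does have a connected complement (guaranteed by primality), and that the "clique" modules arising as children or grandchildren correspond exactly to leaves or serial nodes with leaf children in the tree. No additional machinery is required beyond the careful layered application of the three cited propositions.
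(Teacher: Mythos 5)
Your proof is correct and follows essentially the same route as the paper, whose proof is simply the terse statement that the result is a combination of \cref{.no_subsequent_components,.max_claw_free_cliques,.serial_to_parallel_cliques}; you have merely written out the case analysis (root prime/serial/parallel) that this combination implicitly requires, including the correct observation that claw-freeness is hereditary and that prime nodes guarantee connectedness of the module and its complement so that \cref{.max_claw_free_cliques} applies at deeper levels.
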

\begin{proof}
    This is just a combination of \cref{.no_subsequent_components,.max_claw_free_cliques,.serial_to_parallel_cliques}.
\end{proof}
\begin{corollary}
    The decomposition tree of claw-free graphs has at most 5 layers. 
\end{corollary}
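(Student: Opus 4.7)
The plan is to unfold the recursive structure in \cref{.claw_free_tree_with_proof} and bound the depth case by case, counting the root at layer $1$.

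First I would observe that \cref{.claw_free_tree_with_proof} already gives a complete case analysis of the root and the types of nodes that may occur at subsequent layers. In each of the three cases, I would simply follow the longest possible root-to-leaf chain.

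In case $1$, the root (layer $1$) is prime; its children are cliques, i.e., either leaves (layer $2$) or serial nodes (layer $2$) whose children are leaves (layer $3$). This gives depth at most $3$. In case $2$, the root (layer $1$) is serial; its children (layer $2$) are leaves, parallel nodes with two children, or prime nodes. In the latter two cases, the grandchildren (layer $3$) are cliques, i.e., leaves or serial nodes, whose children, if any, are leaves (layer $4$). By \cref{.no_subsequent_components}, no serial node can sit under a serial node, so the chain terminates here, yielding depth at most $4$. In case $3$, the root (layer $1$) is parallel; again by \cref{.no_subsequent_components}, a parallel node cannot be a child of a parallel node, so each child (layer $2$) is either a leaf or is the root of a subtree of type $1$ or type $2$. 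Appending such a subtree below layer $2$ gives depth at most $1 + 4 = 5$.

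Taking the maximum over the three cases yields the claimed bound of $5$ layers. There is no real obstacle here; the only thing to be careful about is the bookkeeping of which node types may appear as children of which, which is exactly what \cref{.no_subsequent_components} and \cref{.claw_free_tree_with_proof} together settle. Hence the argument is essentially a one-line corollary of the preceding proposition.
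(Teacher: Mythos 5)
Your proof is correct and is essentially the argument the paper intends: the corollary is stated as an immediate consequence of \cref{.claw_free_tree_with_proof} (together with \cref{.no_subsequent_components}), and your case-by-case depth count (at most $3$, $4$, and $1+4=5$ layers for the prime, serial, and parallel root cases, respectively) just makes that bookkeeping explicit.
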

\begin{proposition}\label{.simple_tree_claws_with_proof}
    Let $G = (V, E)$ be a graph, not necessarily claw-free, such that $\tree(G)$ has the form described in \cref{.claw_free_tree}. All claws are covered in exactly one of the following two cases:
    \begin{enumerate}
        \item Let $A$ be a prime node with partition $P$. For every claw $\bc{\tilde{x}_0, \ldots, \tilde{x}_3} \subseteq G[\vertm(A)]/P$, the set $\bc{x_0, \ldots, x_3}$ is a claw, for arbitrary representatives.
        \item Let $A$ be a prime node, with partition $P$, that is a child of a serial node $B$. For every independent set $\bc{\tilde{x}_1, \tilde{x}_2, \tilde{x}_3} \subseteq G[\vertm(A)]/P$, the set $\bc{x_0, \ldots x_3}$, arbitrary representatives, is a claw for every vertex $x_0$ in another child of $B$.
    \end{enumerate}
\end{proposition}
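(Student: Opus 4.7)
The plan is to fix a claw $\{x_0,x_1,x_2,x_3\}\subseteq V$ (with $x_0$ the centre) and to track it through $\tree(G)$ by looking at the deepest node $a$ whose associated module $M_a$ contains all four vertices. The structural restrictions supplied by \cref{.claw_free_tree}, together with the all-or-nothing adjacency between any two distinct modules, will force $a$ to be either prime or serial, and will identify the claw with case~1 or case~2 respectively.

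First I would dispatch the two ``for every $\ldots$ is a claw'' assertions. For case~1, if $\{\tilde x_0,\ldots,\tilde x_3\}$ is a claw in the quotient $G[\vertm(A)]/P$ of a prime node $A$, then distinct elements of $P$ are either completely adjacent or completely non-adjacent in $G$, so any choice of representatives reproduces exactly the claw adjacency pattern. For case~2, representatives of an independent triple in $G[\vertm(A)]/P$ are pairwise non-adjacent in $G$ by the same reasoning, and any vertex $x_0$ in another child of the serial parent $B$ is completely adjacent to every vertex of $\vertm(A)$ by the defining property of a serial node, so $\{x_0,\ldots,x_3\}$ is a claw.

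Next I would show that every claw in $G$ arises from one of these sources by a case analysis on the type of the deepest node $a$. The node $a$ cannot be a leaf (too few vertices) nor parallel: by minimality some $x_i$ sits in a child of $a$ different from $x_0$'s, but different children of a parallel node are pairwise non-adjacent, contradicting the adjacency of $x_0$ with each $x_i$. If $a$ is prime with partition $P$, then the four quotient classes $\tilde x_0,\ldots,\tilde x_3$ must be distinct: any coincidence places two claw vertices in the same child module $M$ of $a$, which by \cref{.claw_free_tree} is a clique, and one then derives an adjacency between two of the independent leaves $x_1,x_2,x_3$ --- either directly within $M$, or by propagating the centre-to-leaf adjacencies through the all-or-nothing adjacency between distinct modules of $a$. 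With distinctness established the quotient inherits the claw and we are in case~1. If $a$ is serial, the pairwise non-adjacent $x_1,x_2,x_3$ must all lie in a single child module of $a$ while $x_0$ lies in a different child (otherwise minimality fails); scanning the allowed child types of a serial node in \cref{.claw_free_tree}, the only one capable of hosting an independent triple is a prime node with clique sub-children, so that child is $\vertm(A)$ for a prime $A$, and projecting $x_1,x_2,x_3$ to $G[\vertm(A)]/P$ yields the independent triple required by case~2.

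The main obstacle I anticipate is the distinctness bookkeeping in the prime case, which relies crucially on the clique structure of the children of a prime node supplied by \cref{.claw_free_tree}; once that is in hand, the serial case and the converse direction both follow transparently from the all-or-nothing adjacency of distinct modules.
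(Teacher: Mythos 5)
Your argument is correct and rests on the same structural facts the paper uses — that under \cref{.claw_free_tree} every clique module can host at most one claw vertex, that only prime nodes can contain an independent triple, and the all-or-nothing adjacency between distinct modules — with your ``deepest node containing the claw'' bookkeeping simply filling in details the paper's terser proof leaves as clear. No gaps; this is essentially the paper's proof, written out more explicitly.
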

\begin{proof}
    It is clear that the above cases describe claws. Now let $K \in G$ be a claw. If \cref{.claw_free_tree_with_proof}~(a) holds, it is clear that we are in case (a) here, since each module that is a clique can contain only one vertex of a claw (this is true in general). Analogously, it follows that if \cref{.claw_free_tree_with_proof}~(b) holds, we are either in case (a) or (b) here, since only the prime nodes can contain independent sets of size $3$.
\end{proof}
\begin{remark}
    The decomposition tree in \cref{.claw_free_tree_with_proof}, becomes even simpler after all twins have been collapsed recursively as the cliques collapse into a single vertex. Therefore in case \textit{(a)}, the graph is a prime graph, and in case \textit{(b)}, the root node is serial with maximally one leaf and all other children are prime graphs.
\end{remark}
\begin{lemma}
    Let $G = (V, E)$ be a graph. A module $A$ in $G$ is a line graph if, and only if, all modules in $\tree(A)$ are line graphs.
\end{lemma}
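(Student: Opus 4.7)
The plan is to prove each direction of the biconditional separately; the non-trivial content sits entirely in the forward direction, and reduces to a classical structural property of line graphs.

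For the forward direction ($\Rightarrow$), I would invoke the standard heredity property of line graphs under taking induced subgraphs: if $H = L(R)$ for some root graph $R$, and $X \subseteq V(H)$ (where $V(H)$ is in canonical bijection with $E(R)$), then $H[X] = L(R')$, where $R'$ is the subgraph of $R$ whose edge set consists exactly of the edges corresponding to vertices in $X$ (together with their endpoints). By the construction of the modular decomposition tree in \cref{.decomposition_tree}, every module $M$ appearing in $\tree(A)$ corresponds to an induced subgraph $G[M]$ of $G[A]$. Hence, if $A$ is a line graph, the heredity property immediately yields that each such $G[M]$ is a line graph as well.

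For the backward direction ($\Leftarrow$), I would simply observe that the root of $\tree(A)$ corresponds to the trivial module containing the entire vertex set of $A$; thus $A$ itself is among the modules in $\tree(A)$. If every module in $\tree(A)$ is a line graph, then in particular $A$ is.

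The main --- and essentially the only --- nontrivial step is the heredity of line graphs under induced subgraphs. The only point requiring care is the terminological convention that ``modules in $\tree(A)$'' includes the root module (the whole of $A$), which is consistent with \cref{.decomposition_tree}; once this is fixed, no further obstacle arises, and the remaining work is simply to unpack the correspondence between nodes of $\tree(A)$ and induced subgraphs of $G[A]$.
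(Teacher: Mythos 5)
Your proof is correct and takes essentially the same route as the paper: the paper's one-line appeal to the forbidden-induced-subgraph characterization of line graphs is just a way of invoking heredity of line-graph-ness under induced subgraphs, which you instead establish directly via the root-graph restriction, and in both cases the converse is trivial because the root node of $\tree(A)$ corresponds to the module $A$ itself. The only difference is cosmetic: you prove heredity constructively from the definition of a line graph, while the paper cites the characterization by forbidden subgraphs.
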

\begin{proof}
    This is clear by the characterization of line graphs via forbidden subgraphs.
\end{proof}
\begin{remark}\label{.remove_line_graphs}
    When collapsing twins and line graph recursively, we only have to check for line graphs on modules that are prime and only have leaves. Assume, that a module $A$ has a child that is not a leaf. Then this child must have a module $B$ in its decomposition tree that is prime and has only leaves (after the collapse). This module $B$ cannot be a line graph, otherwise it would have been collapsed. Therefore $A$ cannot be a line graph.
\end{remark}
\begin{lemma}[\cite{chudnovsky_claw_free_independence_polynomial}]
\label{.neighbourhood_is_simplicial}
    Let $G = (V, E)$ be a claw-free graph with a simplicial clique $K \subseteq V$. Then $\neigh(k){\setminus} K$ is either a simplicial clique in $G\bk{V{\setminus}K}$ or the empty set.
\end{lemma}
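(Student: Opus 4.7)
The plan is to fix a vertex $k \in K$ and set $N := \neigh(k) \setminus K$; the claim then reduces to showing that if $N$ is non-empty, it is a simplicial clique in the induced subgraph $G' := G[V \setminus K]$. Two things need to be checked: that $N$ itself is a clique in $G'$, and that for every $x \in N$ the external neighbourhood $\neigh_{G'}(x) \setminus N$ induces a clique in $G[V \setminus (K \cup N)]$.

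The first point comes directly from the hypothesis that $K$ is simplicial. Applied to the vertex $k \in K$, the definition says that $\neigh(k)$ induces a clique in $G[V \setminus K]$, and this induced subgraph is exactly $N$.

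For the second point, I would argue by a short claw-avoidance. Fix $x \in N$ and distinct $y, z \in \neigh_{G'}(x) \setminus N$. Since $y, z$ lie in $V \setminus K$ (being vertices of $G'$) and outside $N = \neigh(k) \setminus K$, and since they do not lie in $K$ either, neither $y$ nor $z$ can lie in $\neigh(k)$. On the other hand, $x$ is adjacent to all three of $k, y, z$: to $k$ because $x \in N \subseteq \neigh(k)$, and to $y, z$ by choice. Hence, unless $y \sim z$, the set $\{x, k, y, z\}$ is a claw centred at $x$ in $G$, contradicting claw-freeness. So $y \sim z$, and $N$ is simplicial in $G'$.

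There is no real obstacle here: the entire argument rides on the single observation that vertices in $\neigh_{G'}(x) \setminus N$ are automatically non-adjacent to $k$, after which claw-freeness immediately forces the desired adjacency. The only thing to be careful about is keeping the three sets $K$, $N$, and $V \setminus (K \cup N)$ cleanly separated when identifying the induced subgraph on which the simpliciality condition for $N$ must be verified.
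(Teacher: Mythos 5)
Your proof is correct. Note that the paper itself does not give an argument for this lemma --- it simply defers to Ref.~\cite{chudnovsky_claw_free_independence_polynomial} (statement 2.4 there) --- and your write-up is essentially the standard proof from that reference: the clique property of $N = \neigh(k){\setminus}K$ is immediate from applying the simpliciality of $K$ at the vertex $k$, and the simpliciality of $N$ in $G[V{\setminus}K]$ follows from the claw-avoidance step, since any $y, z \in \neigh_{G[V\setminus K]}(x){\setminus}N$ lie outside $K$ and outside $\neigh(k)$, so non-adjacency of $y$ and $z$ would make $\{x, k, y, z\}$ a claw centred at $x$. All the set-membership checks ($x \neq k$, $y, z \neq k$, $k \not\sim y, z$) are handled correctly, so your self-contained argument validly fills in what the paper leaves to the citation.
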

\begin{proof}
    See~\cite{chudnovsky_claw_free_independence_polynomial} 2.4.
\end{proof}
\begin{lemma}\label{.contains_induced_path}
    Let $G = (V, E)$ be a graph. Every path contains an induced path from start to end.
\end{lemma}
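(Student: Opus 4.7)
The plan is to proceed by strong induction on the length $k$ of the path $P = v_0, v_1, \ldots, v_k$ from $s = v_0$ to $t = v_k$, showing that a subsequence of $(v_0, \ldots, v_k)$ starting with $v_0$ and ending with $v_k$ forms an induced path in $G$.

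For the base case, paths of length $0$ (a single vertex) or length $1$ (a single edge) are trivially induced, since there are no pairs of non-consecutive vertices on which a chord could exist. For the inductive step, assume the claim holds for all paths of length less than $k$, and consider a path $P$ of length $k \geq 2$. If $P$ itself is induced, we are done. Otherwise, there exist indices $i, j$ with $0 \leq i < j \leq k$ and $j - i \geq 2$ such that $(v_i, v_j) \in E$. Consider the shortened sequence
\begin{equation}
    P' = v_0, v_1, \ldots, v_i, v_j, v_{j+1}, \ldots, v_k\;.
\end{equation}
Because consecutive vertices in $P'$ are connected (either by an original edge of $P$ or by the chord $(v_i, v_j)$), $P'$ is a path from $v_0$ to $v_k$ of length strictly less than $k$. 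The induction hypothesis applied to $P'$ then yields an induced subpath from $v_0$ to $v_k$ whose vertices form a subset of those of $P'$, and hence of $P$.

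There is really no hard step here; the only mild subtlety is being careful that the shortened sequence $P'$ is itself a (simple) path, i.e., that no vertex is repeated, which is immediate because $P'$ is a subsequence of the simple path $P$. The argument also implicitly shows a slightly stronger statement: the induced path uses only vertices from $\vertm(P)$, so that if $P$ lies in an induced subgraph $G[W]$, the induced path does as well.
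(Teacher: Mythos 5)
Your proof is correct and rests on the same underlying idea as the paper's: repeatedly shortcut the path along a chord, noting that the resulting vertex subsequence is again a (strictly shorter) path from start to end. The paper merely organises this as a greedy one-pass construction (always jumping from the current start vertex to its farthest neighbour on the path, then recursing from there), whereas you remove one arbitrary chord at a time and invoke strong induction on the length; both are sound, and your closing remark that the induced path uses only vertices of the original path is exactly the sense of ``contains'' needed where the lemma is applied (\cref{.preserve_simpliciality}).
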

\begin{proof}
    Let $p = \bc{x_1, \ldots, x_n} \subseteq V$ be a path, for \en, with start $x_1$ and end $x_n$. Let \en[m] be maximal such that $(x_1, x_m) \in E$. Remove all $x_i$ from $p$ with $1 < i \leq m$. This gives us a possibly new $p' = \bc{x_1, x_m, x_{m+1}, \ldots x_n}$ where $x_1$ only neighbours the next vertex ($x_m$) in the path. Repeat this procedure with $\bc{x_m, \ldots, x_n}$ and so on.
\end{proof}
\begin{proposition}[\cite{chudnovsky_claw_free_independence_polynomial}]
\label{.preserve_simpliciality}
    Let $G = (V, E)$ be a simplicial claw-free connected graph. Then $G[U]$ is a simplicial claw-free graph for all $\emptyset \neq U \subseteq V$.
\end{proposition}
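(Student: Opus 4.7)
The plan is strong induction on $|V|$. Claw-freeness is trivially hereditary, so the content is showing that $G[U]$ still contains a simplicial clique in each connected component. The base case $|V|=1$ is immediate, and in the inductive step I pick $v\in V{\setminus}U$ (assuming $U\neq V$) and aim to apply the inductive hypothesis to $G-v$ with the same target set $U$.

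The enabling structural fact is that deletion of a single vertex from a connected claw-free graph produces at most two components: if $v$ had neighbours in three distinct components of $G-v$, those neighbours would form an independent triple in $\neigh(v)$, yielding a claw at $v$. It therefore suffices to show that every connected component of $G-v$ is itself a simplicial, claw-free, connected graph on strictly fewer vertices, and then apply the inductive hypothesis to $U\cap C$ for each such component $C$.

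For the component containing the ``remnant'' of a chosen simplicial clique $K$ of $G$, I would split into three easy cases: if $v\notin K$, then $K$ itself is simplicial in $G-v$ by direct check, since $\neigh(x)\cap(V{\setminus}K)$ for $x\in K$ is unaffected by the deletion of an outside vertex; if $v\in K$ with $|K|\geq 2$, then $K{\setminus}\{v\}$ is simplicial in $G-v$ by essentially the same computation; and if $K=\{v\}$, then \cref{.neighbourhood_is_simplicial} gives $\neigh(v){\setminus}K=\neigh(v)$ as a simplicial clique of $G-v$ (the empty alternative is ruled out, since it would force $v$ to be isolated in the connected graph $G$ with $|V|>1$).

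The main obstacle is the case where $G-v$ decomposes into two components $C_1,C_2$, because the previous case analysis only produces a simplicial clique in one of them. The key claim I would establish is that $N_i:=\neigh(v)\cap C_i$ is itself a simplicial clique of $G[C_i]$. Both $N_i$ are non-empty by connectedness of $G$, and each $N_i$ is a clique since two non-adjacent vertices of $N_i$ together with any vertex of $N_{3-i}$ would form a claw centred at $v$. For the simpliciality condition at $w\in N_i$, the observation $\neigh_G(w)\subseteq C_i\cup\{v\}$ means any $x,y\in\neigh_{G[C_i]}(w){\setminus}N_i$ are non-neighbours of $v$; hence $\{v,x,y\}$ would be an independent triple in $\neigh(w)$ unless $x,y$ are adjacent, so claw-freeness forces the adjacency, and $\neigh_{G[C_i]}(w){\setminus}N_i$ is a clique. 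With both components of $G-v$ now confirmed to be simplicial, claw-free, and connected on strictly fewer vertices, the inductive hypothesis applied separately to $U\cap C_1$ and $U\cap C_2$ finishes the argument.
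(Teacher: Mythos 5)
Your proof is correct, but it takes a genuinely different route from the paper's. The paper fixes a simplicial clique $K$ of $G$ and, in the nontrivial case $K\cap U=\emptyset$, chooses an induced path from $K$ into $U$ and propagates simpliciality along it: iterating \cref{.neighbourhood_is_simplicial}, the sets $K_m=\neigh(x_m){\setminus}(K_0\cup\dots\cup K_{m-1})$ remain simplicial cliques of the successively reduced graphs until the first one that meets $U$, and since all earlier $K_m$ lie in $U^c$ this yields a simplicial clique of $G[U]$ in one pass, with no induction on vertex deletion. You instead delete the vertices of $V{\setminus}U$ one at a time, use claw-freeness to bound the number of components of $G-v$ by two, and prove directly that $\neigh(v)\cap C_i$ is a simplicial clique of each component $G[C_i]$ (your claw arguments at $v$ and at $w\in N_i$ are both sound), invoking \cref{.neighbourhood_is_simplicial} only in the degenerate case $K=\{v\}$. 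What each buys: the paper's argument is shorter and leans on the cited hereditary lemma as the engine; yours is essentially self-contained, and by construction hands a simplicial clique to every component of $G-v$ (and hence, through the induction, to every component of $G[U]$), making the bookkeeping for disconnected $G[U]$ explicit, which the paper leaves implicit. A minor remark: your two-component case does not actually need the ``at most two components'' fact --- the same claw at $v$ shows $\neigh(v)\cap C_i$ is a clique whenever there is a second component at all --- so that observation is harmless but not load-bearing.
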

\begin{proof}
    Let $K \subseteq V$ be a simplicial clique and $U \subseteq V$. If $K \cap U \neq \emptyset$ then $K \cap U$ is clearly a simplicial clique in $G[U]$. Assume $K \cap U = \emptyset$. Let $p = \bc{x_1, \ldots x_n} \subseteq V$, \en, be a path from some $x_n \in U$ to some $x_1 \in K$ such that $x_{\geq 2} \notin K$, and without loss of generality, $p$ is an induced path (\cref{.contains_induced_path}). Define $K_m = \neigh(x_m){\setminus}\w(K_0 \cup \ldots \cup K_{m-1})$ for $m = 1, \ldots, n-1$ and $K_0 = K$. Inductively it follows that $K_m$ is a simplicial clique in $G\bk{V{\setminus}\w(K_0 \cup \ldots \cup K_{m-1})}$, for all $m = 1, \ldots, n-1$: We have $x_2 \in K_1$, therefore, $K_1 \neq \emptyset$ and with \cref{.neighbourhood_is_simplicial} it follows that $K_1$ is a simplicial clique in $G[V{\setminus}K_0]$. Now let the statement be true for $m-1$, $m \geq 2$. Since $p$ is an induced path, it holds $x_{m+1} \in \neigh\w(x_{m}){\setminus}\w(K \cup \neigh\w(x_{1}) \cup \ldots \cup \neigh\w(x_{m-1})) \subseteq K_m$; therefore, with \cref{.neighbourhood_is_simplicial}, $K_m$ is a simplicial clique in $G\bk{\w(V{\setminus}\w(K_0 \cup \ldots \cup K_{m-2})) {\setminus} K_{m-1}}$.
    
    Now choose the smallest $s \in \bc{1, \ldots, n-1}$ such that $K_s \cap U \neq \emptyset$ (this exists since $x_n \in K_{n-1} \cap U$). $K_s$ is a simplicial clique in $G\bk{V{\setminus}\w(K_0 \cup \ldots \cup K_{s-1})}$, and therefore it is clearly also a simplicial clique in $G[U] = G\bk{V{\setminus}U^c}$ since $K_0 \cup \ldots \cup K_{s-1} \subseteq U$.
\end{proof}
\begin{proposition}
    Let $G = (V, E)$ be a connected graph. By removing an arbitrary number of siblings and line-graph modules, possibly recursively, $G$ can become simplicial claw-free, but never lose this property.
\end{proposition}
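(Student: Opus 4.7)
The statement has two parts: \emph{existence} (some non-SCF graphs become SCF after these operations) and \emph{preservation} (SCF graphs remain SCF under them). The first part is already witnessed by the example in \cref{f"create_simplicial_clique}, where the graph becomes simplicial after collapsing the false twin $\{0,6\}$, and by the $n=3$ Majorana discussion where recursive collapse yields a single-vertex (hence trivially SCF) graph. So the substantive claim to establish is the preservation.

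The plan is to reduce each elementary step of the recursive procedure to an induced-subgraph operation on the frustration graph, and then to invoke \cref{.preserve_simpliciality} together with the trivial hereditarity of claw-freeness under induced subgraphs. For a sibling collapse this is immediate: one restricts to a subset $V' \subseteq V$ containing one representative from each sibling class, yielding $G[V']$. For a line-graph module $M \subseteq V$, the module property of \cref{.modules} forces every vertex of $M$ to have identical external neighborhood $N \coloneqq \neigh(M) \setminus M$. Hence, whether the collapse removes $M$ entirely or replaces it by a single representative $m^{\ast}$ with neighborhood $N$, the resulting frustration graph is isomorphic to $G[V \setminus M]$ or to $G[(V \setminus M) \cup \{x\}]$ for an arbitrary $x \in M$ --- in either case, an induced subgraph of $G$.

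With this reduction in place, the preservation argument proceeds as follows. Claw-freeness is hereditary under induced subgraphs, so it survives every step. \Cref{.preserve_simpliciality} states that simpliciality carries over to any non-empty induced subgraph of an SCF connected graph, and since we always retain at least one representative, non-emptiness is automatic. Connectedness is also preserved at each step: twin removal leaves paths intact because twins share external neighborhoods, and module collapse reroutes any path through $M$ via its representative. Should a step nevertheless disconnect the graph, \cref{.preserve_simpliciality} is applied separately to each connected component, consistent with the conventional reading of SCF as requiring a simplicial clique in every maximal connected subgraph. An induction on the number of elementary operations then extends the argument to an arbitrary recursive sequence.

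The main obstacle is justifying that a line-graph module collapse can legitimately be read off as an induced-subgraph operation on $G$, since in the Hamiltonian picture that collapse is effected by a nontrivial unitary rotation (as in \cref{.full_result_informal}) rather than a purely graphical deletion. The resolution rests entirely on the module property: uniform external adjacency guarantees that the post-collapse commutation structure of any surviving representative coincides with that of any original element of $M$, so the frustration graph transforms exactly as if one had simply restricted to representatives. Once this identification is made, every remaining step is either definitional or an immediate instance of \cref{.preserve_simpliciality}.
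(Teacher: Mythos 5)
Your proof is correct and follows essentially the same route as the paper: observe that every sibling or line-graph-module collapse acts graph-theoretically as passing to an induced subgraph, note that claw-freeness is trivially hereditary and that simpliciality is preserved by \cref{.preserve_simpliciality}, and witness the ``can become SCF'' direction with the example of \cref{f"create_simplicial_clique}. The paper's proof is just a terser statement of the same argument; your extra care about module external neighbourhoods and possible disconnection (handled component-wise, or by noting the composition of all removals is a single induced subgraph of the original connected graph) fills in details the paper leaves implicit.
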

\begin{proof}
    It is clear that removing vertices does not create claws, and it does not remove simplicial cliques in a simplicial claw-free graph according to \cref{.preserve_simpliciality}. \Cref{f"create_simplicial_clique} shows that we can create simplicial cliques by removing siblings.
\end{proof}

\section{Details on the Block-Diagonalisation}\label{s"app_block_diag}

\noindent In this section we give a constructive proof of \Cref{.result_informal}. To do so, we shall first standardise our notation.
\begin{notation}
    Let $\w(a_{\kappa_i})_{1 \leq i \leq n}$, \en, be a sequence in a semigroup for some arbitrary, but strictly totally ordered, indices $\w(\kappa_i)_{1 \leq i \leq n}$. For $i, j \in \bc{1, \ldots, n}$ with $i \leq j$, we define
    \begin{align}
        a_{\kappa_i \lps \kappa_j} &\coloneqq \prod_{\kappa \in \kappa_i \lps \kappa_j}
        a_\kappa \coloneqq a_{\kappa_i} \cdots a_{\kappa_j}\;,\\
        a_{\kappa_j \gps \kappa_i} &\coloneqq \prod_{\kappa \in \kappa_j \gps \kappa_i}
        a_\kappa \coloneqq a_{\kappa_j} \cdots a_{\kappa_i}\;,\\
        a_{\kappa_i \lps} &\coloneqq \prod_{\kappa \in \kappa_i \lps} a_\kappa \coloneqq
        \prod_{\kappa \in \kappa_i \lps \kappa_n} a_\kappa\;,\\
        a_{\lps \kappa_i} &\coloneqq \prod_{\kappa \in \lps \kappa_i} a_\kappa \coloneqq
        \prod_{\kappa \in \kappa_1 \lps \kappa_i} a_\kappa\;,\\
        a_{\kappa_i \gps} &\coloneqq \prod_{\kappa \in \kappa_i \gps} a_\kappa \coloneqq
        \prod_{\kappa \in \kappa_i \gps \kappa_1} a_\kappa\;,\\
        a_{\gps \kappa_i} &\coloneqq \prod_{\kappa \in \gps \kappa_i} a_\kappa \coloneqq
        \prod_{\kappa \in \kappa_n \gps \kappa_i} a_\kappa\;.
    \end{align}
    Furthermore, we define the shorthand $\bc{x_{\leq k}} = x_1, x_2, \ldots, x_k$ for \en[k], and some $x_i \in \bc{0, 1}^{m_i}$, $m_i \in \N$.\\
    The above shorthands are analogously defined for up-indices.
\end{notation}
The idea of the proof is to inductively construct the projectors $P$ in \Cref{.result_informal}, such that they correspond to alternating false twin projections and true twin rotations. We shall then extend the statement by including additional unitary rotations that allow more complex collapse sequence including collapsing line graph modules.

Let us first redefine the twin collapse sequence (\cref{.main_collapse_graph}) in terms of the modular decomposition tree.
\begin{definition}[Collapse sequence]\label{.collapse_graph}
    Let $G = (V, E)$. Set $G^0 = G$. We define the following sequence of graphs: $\w(G^i)_{0 \leq i \leq c}$, \en[c]:
    \begin{itemize}
        \item For odd \en[i]: For all maximal sets $T$ of false siblings in $G^{i-1}$, fix one of the siblings and remove the other vertices. That is, for each parallel node in $\tree\w(G^{i-1})$ remove all leaves but one, and if this leaf is the only child, then replace the parallel node by that leaf.
        \item For even \en[i], $i \geq 2$: For all maximal sets $T$ of true siblings, fix one of the siblings and remove the other vertices. That is, for each serial node in $\tree\w(G^{i-1})$ remove all leaves but one, and if this leaf is the only child, then replace the serial node by that leaf.
        \item Set \en[c] even and minimal, such that $G^c = G^{c+1}$.
    \end{itemize}
\end{definition}
\begin{definition}[Symmetry and rotation sequences]\label{.base_sequences}
    Let $G = (V, E)$ be the frustration graph of a Hamiltonian $H$, and $\w(G^i)_{0 \leq i \leq c}$, \en[c], be the according graph sequence obtained by the twin collapse. Set $r = c/2 - 1$. The sequence $\w(L^i)_{0 \leq i \leq r}$ of pseudo-symmetries, is defined by $L^i$ being the group generated by false twin symmetries of $G^{2i}$ as defined in \cref{.false_twin_symmetries}, for $i = 0, \ldots r$. The rotation-exponent sequences $\w(\w(\rho^i_j)_{1 \leq j \leq q_i})_{0 \leq i \leq r}$, $q_i \in \N$, are defined as follows: For $i = 0, \ldots r$, let $\w(T_k)_{1 \leq k \leq l}$ be the (ordered) sequence of maximal true sibling sets in $G^{2i + 1}$, \en[l], and let $\w(h_k)_{1 \leq k \leq l} \subseteq \vertm\w(G^{2i+1})$ such that $h_j \in T_j$, for all $j = 1, \ldots, l$. For all $k = 1, \ldots, l$ and for each $h \in T_k{\setminus}\bc{h_k}$ set $\rho = h_k h$ and append it to the $\w(\rho^i_j)_j$ sequence. It holds $q_i = \sum_{k = 1}^l \w(\abs{T_k} -
    1)$.
\end{definition}
In the following, we implicitly refer to the above definitions.
\begin{definition}[False-twin projectors]\label{.projectors}
    Let the group $L^i$ of false-twin symmetries of $G^i$ be independently generated by $\bc{g^i_j}_{1\le j\le m_i}$, $m_i \in \N$. The false-twin projectors are defined as
    \begin{equation}
        P^i(x) = \prod_{j = 1}^{m_i} \frac{1 + x g^i_j}{2}\;,
    \end{equation}
    for $x \in \bc{-1, +1}^{m_i}$ and  all $i \in \bc{0, \ldots, r}$.
\end{definition}
\begin{proposition}
    Given $i \in \bc{0, \ldots r}$, the false-twin projectors $P^i(x)$ are orthogonal and complete with respect to $x$, i.e.,
    \begin{equation}
        \sum_{x \in \bc{-1, +1}^{m_i}} P^i(x) = \1\;.
    \end{equation}
    Furthermore, they commute with each other.
\end{proposition}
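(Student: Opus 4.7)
The plan is to reduce the statement to the single-generator case, since the generators of $L^i$ are abelian, and then to use a factor-by-factor argument to establish orthogonality and completeness simultaneously. The key algebraic facts I will rely on, all from \cref{.false_twin_symmetries}, are: $L^i$ is abelian; its generators $g^i_j$ are hermitian and unitary, and hence square to the identity; and, crucially, $-1 \notin L^i$, so that products of generators with different sign patterns cannot coincide.

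First I would verify that each individual factor $Q^i_j(x_j) \coloneqq \tfrac{1}{2}(1 + x_j g^i_j)$ is a hermitian idempotent: hermiticity is clear since $x_j \in \{\pm 1\}$ and $g^i_j$ is hermitian, while $(g^i_j)^2 = 1$ gives $Q^i_j(x_j)^2 = \tfrac{1}{4}(1 + 2 x_j g^i_j + 1) = Q^i_j(x_j)$. Next, from abelianness of $L^i$, all the $Q^i_j(x_j)$ commute with each other, and therefore any two projectors $P^i(x)$ and $P^i(y)$ (for any $x,y$) commute, since each is a polynomial in the commuting family $\{g^i_j\}_j$.

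For orthogonality I would argue factor-by-factor. If $x_j = y_j$ then $Q^i_j(x_j) Q^i_j(y_j) = Q^i_j(x_j)$ by idempotency; if $x_j \neq y_j$ then $x_j = -y_j$, so
\begin{equation}
Q^i_j(x_j) Q^i_j(y_j) = \tfrac{1}{4}(1 + x_j g^i_j)(1 - x_j g^i_j) = \tfrac{1}{4}(1 - (g^i_j)^2) = 0.
\end{equation}
Since the factors commute, this vanishing of a single factor forces $P^i(x) P^i(y) = 0$ whenever $x \neq y$, and $P^i(x)^2 = P^i(x)$ when $x = y$. Completeness is then a direct distribution:
\begin{equation}
\sum_{x \in \{-1,+1\}^{m_i}} P^i(x) = \prod_{j=1}^{m_i} \sum_{x_j \in \{-1,+1\}} \frac{1 + x_j g^i_j}{2} = \prod_{j=1}^{m_i} 1 = \1.
\end{equation}

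The only subtle point — and the step I would flag as the actual content rather than bookkeeping — is that one must ensure each $P^i(x)$ is nonzero and that the direct product decomposition is genuine. This is exactly where $-1 \notin L^i$ and the independence of the generators $\{g^i_j\}$ (both guaranteed by \cref{.false_twin_symmetries}) enter: independence means that the common eigenspaces indexed by $x \in \{-1,+1\}^{m_i}$ are all present in the spectral decomposition of the abelian unitary group $L^i$, so each $P^i(x)$ projects onto a genuine joint eigenspace and the character-orthogonality computation above has no degeneracies. Without $-1 \notin L^i$ some $P^i(x)$ would vanish identically, which would not break orthogonality but would collapse completeness; the calculation above sidesteps this subtlety because independence of the generators already rules it out.
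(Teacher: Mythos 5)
Your proof is correct and follows essentially the same route as the paper: hermiticity, $(g^i_j)^2=\1$ and commutativity of the generators give commuting orthogonal projectors factor-by-factor, and completeness follows from the identical product-distribution computation. Your closing remark is inessential (and slightly off): the displayed sum equals $\1$ purely from $(g^i_j)^2=\1$ and commutativity, so completeness would not collapse even without independence or $-1\notin L^i$; those facts matter elsewhere (e.g.\ for the projectors being nonzero), not for this statement.
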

\begin{proof}
    Let $i \in \bc{0, \ldots, r}$. $L^i$ is a group of hermitian, unitary and commuting elements (\cref{.false_twin_symmetries}); therefore the operators are clearly commuting orthogonal projectors. Furthermore, we have
    \begin{equation}
        \sum_{x \in \bc{-1, +1}^{m_i}} P^i(x) = \prod_{j=1}^{m_i} \sum_{x_j \in \bc{-1, +1}} \frac{1 + x g^i_j}{2} = \prod_{j=1}^{m_i} \1 = \1\;.
    \end{equation}
\end{proof}
\begin{remark}
    Remember that for a complete set of orthogonal commuting projectors, products of different projectors give zero, i.e., let $\bc{P_1, \ldots, P_n}$, \en be such a set, then it is $P_i P_j = 0$ for $i \neq j$. This is because we can diagonalize all projectors in the same basis, and all have eigenvalues of $1$; therefore overlapping eigenvectors would contradict $\sum_i P_i = \1$.
\end{remark}
\begin{corollary}
    Let $P^i(x)$ denote the false-twin projectors of $G^i$ in the sequence. Then 
    \begin{equation}
        \bk{P^i(x), g} = 0
    \end{equation}
    for all $g \in \vertm\w(G^j)$ where $i \in \bc{0, \ldots, r}$, $j \in \bc{2i, \ldots, c}$ and $x \in \bc{-1, +1}^{m_i}$.
\end{corollary}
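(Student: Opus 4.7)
The plan is short: the claim reduces to combining two facts we already have, namely that each generator $g^i_j$ of $L^i$ commutes with every vertex operator of $G^{2i}$, and that the collapse procedure only removes vertices, so $\vertm(G^{j}) \subseteq \vertm(G^{2i})$ for all $j \geq 2i$.

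First I would unpack \cref{.collapse_graph}: at every step of the collapse sequence one selects a single representative from each maximal sibling set and discards the other vertices, so the vertex sets form a decreasing chain under inclusion. The true-twin step invokes \cref{.merge_paulis}, which rotates the Hamiltonian so that the coefficient of the retained operator becomes $\sqrt{a^2+b^2}$ and the other twin is eliminated; crucially, the underlying operator in $S$ that labels the retained vertex is left untouched. The false-twin step described by \cref{.symmetry_restriction} likewise merges coefficients onto a chosen representative without altering the representative operator itself. Consequently $\vertm(G^{j}) \subseteq \vertm(G^{2i})$ whenever $j \geq 2i$.

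Next I would invoke \cref{.false_twin_symmetries} directly: by construction $L^i$ is the group generated by the false-twin symmetries of $G^{2i}$, whose elements commute with every term of the stage-$2i$ Hamiltonian. In particular, each generator $g^i_j$ satisfies $[g^i_j, g] = 0$ for every $g \in \vertm(G^{2i})$, and hence for every $g \in \vertm(G^{j})$ with $2i \leq j \leq c$ by the inclusion above.

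Finally, the projector $P^i(x) = \prod_{j=1}^{m_i} (1 + x_j g^i_j)/2$ is a polynomial in the pairwise commuting generators $g^i_j$, so it commutes with any operator that commutes with each generator individually; the conclusion $[P^i(x), g] = 0$ follows at once. No step is genuinely hard; the only point worth stating carefully is the vertex-set inclusion, because a casual reading of the true-twin collapse might suggest the operators themselves are modified by the rotations---in fact only their coefficients in the Hamiltonian change, while the operator labels on the retained vertices are preserved.
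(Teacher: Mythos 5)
Your proposal is correct and follows the same route as the paper, which disposes of the corollary in one line ("clear, per construction") by noting that every element of $L^i$ commutes with all $g \in \vertm\w(G^{2i}) \supseteq \vertm\w(G^j)$; your only additions are to make explicit the vertex-set inclusion under the collapse sequence and the fact that $P^i(x)$, being a polynomial in the commuting generators of $L^i$, inherits the commutation, both of which the paper leaves implicit.
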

\begin{proof}
Clear, per construction: Let $i, j$ as above. Each $h \in L^i$ commutes with all $g \in
\vertm\w(G^{2i}) \supseteq \vertm\w(G^j)$.
\end{proof}

\begin{corollary}\label{.projectors_commute}
The false-twin projectors of
    \cref{.projectors} commute with each other and with all subsequent rotation-exponents, i.e.,
    \begin{equation}
        \bk{P^i(x), P^j(y)} = \bk{P^i(x), \rho^k_l} = 0\;,
    \end{equation}
    for all $i, j, k \in \bc{0, \ldots, r}$, $k \geq i$, $x \in \bc{-1, +1}^{m_i}$, $y \in \bc{-1, +1}^{m_j}$ and $l \in \bc{1, \ldots q_k}$.
\end{corollary}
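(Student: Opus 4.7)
The plan is to reduce both commutation claims to a single structural observation coming from Lemma~\ref{.false_twin_symmetries}: every element of $L^i$ commutes with every vertex of $\vertm(G^{2i})$, since each generator is, up to a sign, a product of two false twins in $G^{2i}$, hence a symmetry of the Hamiltonian supported on $G^{2i}$. Since the twin collapse only deletes vertices, I would exploit the chain of inclusions
\begin{equation}
\vertm(G^c) \subseteq \vertm(G^{c-1}) \subseteq \cdots \subseteq \vertm(G^0) = V
\end{equation}
to transport this observation to all later graphs in the sequence.

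First I would handle $\bk{P^i(x), P^j(y)} = 0$. Without loss of generality take $i \leq j$. Each generator $g^j_l$ of $L^j$ is a product of two false twins of $G^{2j}$, so it lies in the algebra generated by $\vertm(G^{2j}) \subseteq \vertm(G^{2i})$. Since every generator $g^i_k$ of $L^i$ commutes with each vertex in $\vertm(G^{2i})$, it commutes with $g^j_l$. Because $P^i(x)$ and $P^j(y)$ are polynomials in their respective generators, they inherit the commutativity.

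Next I would prove $\bk{P^i(x), \rho^k_l} = 0$ for $k \geq i$. By Definition~\ref{.base_sequences}, $\rho^k_l = h_k h$ for a pair of true twins $h_k, h \in \vertm(G^{2k+1})$. Since $2k+1 > 2i$, both factors lie in $\vertm(G^{2i})$, so each generator $g^i_j$ of $L^i$ commutes with each of them, and hence with their product $\rho^k_l$; the claim then follows by expanding $P^i(x)$ as a polynomial in the $g^i_j$.

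The only subtlety to verify carefully is that the twin-collapse sequence is monotone at the operator level, and not merely at the graph level: by Proposition~\ref{.merge_paulis}, a true-twin rotation at an intermediate odd step acts trivially on every operator whose label is not one of the two true twins being merged, so the operators labelled by vertices of $G^j$ for $j \geq 2i$ are literally the same operators that appear in the original Hamiltonian, and thus really do commute with each element of $L^i$. Once this monotonicity is observed, both commutators reduce immediately to Lemma~\ref{.false_twin_symmetries}, and neither part presents any real obstacle.
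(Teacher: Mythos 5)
Your proof is correct and follows essentially the same route as the paper: both reduce the claim to the fact that each generator of $L^i$ commutes with every vertex of $\vertm\w(G^{2i})$, and then use the nesting $\vertm\w(G^{j})\subseteq\vertm\w(G^{2i})$ for $j\geq 2i$ to commute $P^i(x)$ through the generators of $L^j$ and through $\rho^k_l = h_k h$. Your closing remark about operator-level monotonicity is consistent with the construction (the graphs $G^j$ are induced subgraphs on subsets of the original vertex set $V$, so the relevant operators are unchanged), and is not a gap.
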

\begin{proof}
    Let $i, j, k, l$ as above, and without loss of generality, $j \geq i$. $P^i(x)$ commutes with all $g \in \groupset{h}{h \in \vertm\w(G^{2j})} \geq L^j$, as well as with all $g \in \groupset{h}{h \in \vertm\w(G^{2k + 1})} \ni \rho^k_l$.
\end{proof}
\begin{definition}[Rotated projectors]\label{.rotated_projectors}
   For $j = 0, \ldots, r$, the \emph{rotated projectors} are defined as
    \begin{equation}
        P_R^j\w(\bc{x^{\leq j}}) = \w(U^{j-1 \gps}\w(\bc{x^{\leq j-1}}))\ur * P^{j}\w(x^j)\;,
    \end{equation}
    with $U^j\w(\bc{x^{\leq j}}) = \prod_{k = 1}^{q_j} U^j_k\w(\bc{x^{\leq j}})$ where
    \begin{equation}
        U^j_k\w(\bc{x^{\leq j}}) = \exp\w(i \theta^j_k\w(\bc{x^{\leq j}}) \rho^j_k / 2)
    \end{equation}
    for some (for now arbitrary) angles $\theta^j_k\w(\bc{x^{\leq j}}) \in \R$ (shall be specified later) and $x^j \in \bc{-1, +1}^{m_j}$, for all $k = 1, \ldots, q_j$.
\end{definition}
\begin{proposition}\label{.rotated_projectors_commute}
    The rotated projectors are orthogonal projections, and they commute, i.e.,
    \begin{equation}
        \bk{P_R^i\w(\bc{x^{\leq i}}), P_R^j\w(\bc{y^{\leq j}})} = 0
    \end{equation}
    for all $i, j \in \bc{0, \ldots, r}$, $x^i \in \bc{-1, +1}^{m_i}$, $y^j \in \bc{-1, +1}^{m_j}$, and they are complete with respect to $x^j$, i.e.,
    \begin{equation}
        \sum_{x^j \in \bc{-1, +1}^{m_j}} P_R^j\w(\bc{x^{\leq j}}) = \1\;.
    \end{equation}
\end{proposition}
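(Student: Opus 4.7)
The plan is to lift projectivity, completeness, and pairwise commutation from the false-twin projectors $\bc{P^j(x^j)}$ to the rotated family via the unitary conjugations that define $P_R^j$. Writing $V_j \coloneqq U^{j-1\gps}(\bc{x^{\leq j-1}})$, we have $P_R^j(\bc{x^{\leq j}}) = V_j\ur P^j(x^j) V_j$; since $V_j$ is a product of unitaries and $P^j(x^j)$ is an orthogonal projector (as already established in \cref{.projectors_commute}), each $P_R^j$ is immediately an orthogonal projector. Completeness in the final index follows by pulling the sum through the conjugation and using $\sum_{x^j} P^j(x^j) = \1$ from \cref{.projectors}:
\begin{equation}
\sum_{x^j \in \bc{-1,+1}^{m_j}} P_R^j\w(\bc{x^{\leq j}}) = V_j\ur\w(\sum_{x^j} P^j(x^j))V_j = V_j\ur V_j = \1\;.
\end{equation}

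For mutual commutation I would take $i \leq j$ along a common ambient parameter sequence, so that the inner $i$ factors of $V_j$ agree with $V_i$ and $V_j = U^{j-1}\cdots U^i\,V_i$. Substituting gives
\begin{equation}
P_R^j = V_i\ur\bcdot{(U^i)\ur \cdots (U^{j-1})\ur\,P^j(x^j)\,U^{j-1}\cdots U^i}\,V_i\;.
\end{equation}
\Cref{.projectors_commute} then supplies the two required commutations: $P^i(x^i)$ commutes with each $\rho^k_l$ for $k\geq i$, hence with every $U^k$ and $(U^k)\ur$, and $P^i(x^i)$ commutes with $P^j(x^j)$. Together these force $P^i(x^i)$ to commute with the entire bracketed inner operator, and conjugating the resulting zero commutator by the common $V_i$ on both sides yields $\bk{P_R^i, P_R^j} = 0$.

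The step I expect to require the most care is the case where the two projectors are evaluated at genuinely independent sequences with $\bc{x^{\leq i-1}} \neq \bc{y^{\leq i-1}}$, since then the prefixes $V_i(\bc{x})$ and the restriction of $V_j(\bc{y})$ to its inner $i$ factors no longer coincide, and the clean $V_i\ur(\cdot)V_i$ sandwich breaks. My plan there is to reduce the mismatch $V_j(\bc{y})V_i(\bc{x})\ur$ to a product of exponentials of the fixed operators $\rho^k_l$ for $k < i$, using that the $U^k$ factors differ between the two sides only in their scalar angles while sharing the generators $\rho^k_l$, and then to verify that each such $\rho^k_l = h_k h$ commutes with every false-twin generator $g^i_l$ of $L^i$. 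The latter should follow by tracking the open neighbourhoods $\neigh(h_k)$ and $\neigh(h)$ in $G^{2k+1}$ through the intermediate collapses down to $G^{2i}$, together with the characteristic identity $\neigh(a)=\neigh(b)$ for the false-twin pair $\bc{a,b}$ underlying $g^i_l$, which forces $g^i_l = ab$ to have matching commutation signatures with both $h_k$ and $h$.
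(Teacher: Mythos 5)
Your core argument is exactly the paper's proof: projectivity and completeness are inherited because conjugation by the unitary prefix is a homomorphism, and for $i \leq j$ the commutator vanishes after cancelling the common prefix and commuting $P^i(x^i)$ through $U^{j-1}\cdots U^{i}$ and through $P^j$, which is precisely what \cref{.projectors_commute} supplies. Up to that point the proposal is correct and coincides with the paper (which simply drops the parameters and performs the same manipulation). Note also that agreement of the prefixes is only needed on levels $<\min(i,j)$: since $P^i(x^i)$ commutes with the generators $\rho^k_l$ for all $k\geq i$, it commutes with $U^k$ for \emph{any} choice of angles at those levels, so differing entries at level $i$ and above cause no problem.

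The difficulty you flag — genuinely different prefixes below level $\min(i,j)$ — is, however, where your plan breaks. The claim you would need, that the rotation exponents $\rho^k_l$ with $k<i$ commute with the false-twin generators of $L^i$, is false in general: $\rho^k_l = h_k h$ anticommutes with its retained representative $h_k$ and commutes with every other vertex of $G^{2k+1}$ (true twins have equal closed neighbourhoods), so if $h_k$ survives to $G^{2i}$ and there forms a false-twin pair $\bc{h_k, b}$, the generator $h_k b$ of $L^i$ anticommutes with $\rho^k_l$. This already happens in the paper's own example of \cref{f"minimal_alternating_tree}: the exponent $3\cdot 2$ coming from merging the true twins $\bc{2,3}$ anticommutes with the next-round false-twin generator $3\cdot 4$ (the pair $(3,2)$ anticommutes while $(4,2)$, $(4,3)$ commute). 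This is exactly why \cref{.projectors_commute} is restricted to $k \geq i$. Consequently the mismatch unitary $V_j(\bc{y})V_i(\bc{x})^{\dagger}$ need not commute with $P^i(x^i)$, and your reduction does not close; the mixed-prefix commutation cannot be obtained this way. The resolution is that the statement, as the paper proves and later uses it, concerns projectors built along a common ambient parameter string (all applications in \cref{.result_full} sum along a single string $\bc{x^{\leq i}}$, with completeness taken only over the last entry); if one insists on the literal reading with fully independent sequences, a separate argument — or a counterexample — would be needed, and your proposed route cannot supply it.
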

\begin{proof}
    Let $i, j \in \bc{0, \ldots, r}$ and without loss of generality $i \leq j$. For conciseness we shall drop the parameters. Since conjugation is a homomorphism it is clear that the operators are complete orthogonal projectors. Furthermore, we have
    \begin{subalign}
        P_R^i P_R^j &= \w(U^{i-1 \gps})\ur P^{i} \w(U^{j-1 \gps i})\ur P^{j} U^{j-1 \gps}\\
        &\!\overset{(*)}{=} \w(U^{j-1 \gps})\ur P^{j} U^{j-1 \gps i} P^{i} U^{i-1 \gps}\\
        &= \w(U^{j-1 \gps})\ur  P^{j} U^{j-1 \gps} \w(U^{i-1 \gps})\ur P^{i} U^{i-1
        \gps}\\
        &= P_R^j P_R^i\;,
    \end{subalign}
    where in $(*)$ we used \cref{.projectors_commute} (commute $P^{i} \w(U^{j-1 \gps i})\ur$, then $P^{i}P^{j}$, then $P^{i} U^{j-1 \gps i}$).
\end{proof}

\begin{proposition}\label{.result_full}
    Set $H^0 = H$ and $U^{-1} = \1$. For $i = 0, \ldots, r$, recursively define the Hamiltonian sequence $\w(H^i)_i$ as
    \begin{equation}
        H^i\w(\bc{x^{\leq i}}) = P_R^i\w(\bc{x^{\leq i}}) H^{i-1}\w(\bc{x^{\leq i-1}})
        P_R^i\w(\bc{x^{\leq i}})\;,
    \end{equation}
    and the angles $\theta^i_j\w(\bc{x^{\leq i}})$ in $U^i\w(\bc{x^{\leq i}})$ accordingly to \cref{.merge_paulis} such that true twins are merged (which depends on the subsequently updated weights in $H^i\w(\bc{x^{\leq i}})$ after each rotation $\theta^i_{j-1}\w(\bc{x^{\leq i}})$) for $x^j \in \bc{-1, +1}^{m_j}$, $j = 0, \ldots, q_i$. For all $i = 0, \ldots, r$ we have
    \begin{subalign}
        H &= \sum_{x^0, \ldots, x^i} H^i\w(\bc{x^{\leq i}})\\
        &= \sum_{x^0, \ldots, x^i} \w(U^{i \gps}\w(\bc{x^{\leq i}}))\ur * H_{CP}^i\w(\bc{x^{\leq i}})\\
        &= \sum_{x^0, \ldots, x^i} P_R^{i \gps}\w(\bc{x^{\leq i}}) \w(\w(U^{i \gps}\w(\bc{x^{\leq i}}))\ur * H_C^i\w(\bc{x^{\leq i}})) P_R^{i \lps}\w(\bc{x^{\leq i}})
        \label{e"rotated_collapsed_inner}
    \end{subalign}
    where the last two equations are true per summand; the sums go over $x^j \in \bc{-1, +1}^{m_j}$ for $j = 0, \ldots, i$ and we set
    \begin{subalign}
        H_{CP}^i\w(\bc{x^{\leq i}}) &= U^{i \gps}\w(\bc{x^{\leq i}}) * H^i\w(\bc{x^{\leq i}})\\
        &= \w(\prod_{j \in i \gps} U^j\w(\bc{x^{\leq j}}) P^j\w(x^j)) H
        \w(\prod_{j \in \lps i} P^j\w(x^j) U^{j \dagger}\w(\bc{x^{\leq j}}))\;.\label{e"collapse_sequence}
    \end{subalign}
    Furthermore, for $x^j, y^j \in \bc{-1, +1}^{m_j}$, $j = 0, \ldots, i$, it holds
    \begin{align}
        H^i_{CP} \w(\bc{x^{\leq i}}) &= P^{i \gps}\w(\bc{x^{\leq i}}) H^i_C \w(\bc{x^{\leq i}}) P^{\lps i}\w(\bc{x^{\leq i}})\;,\label{e"full_projector_symmetry}\\
        H^i\w(\bc{x^{\leq i}}) &= P_R^{i \gps}\w(\bc{x^{\leq i}}) H^i \w(\bc{x^{\leq i}}) P_R^{\lps i}\w(\bc{x^{\leq i}})\;,\label{e"full_rotated_projector_symmetry}
    \end{align}
    where $H_{CP}$ consists of the $P^{\lps i}$-symmetry projected operators of the collapsed graph, that is
    \begin{align}
        H_C^i\w(\bc{x^{\leq i}}) &= \sum_{g \in V'} w'_g\w(\bc{x^{\lps i}}) g \qquad\quad (w'_g \in \R)\;,\\
        H_{CP}^i\w(\bc{x^{\leq i}}) &= H^i_C\w(\bc{x^{\leq i}}) P^{\lps i}\w(\bc{x^{\leq i}})\;, \label{e"collapsed_hamiltonian_projected}
    \end{align}
    with $V' = \vertm\w(G^{2i}) \subseteq V$ where $\bk{g, P^{\lps i}\w(\bc{x^{\leq i}})} = 0$ for $g \in V'$, and we have $\frust{H^i_{CP}\w(\bc{x^{\leq i}})} = \frust{H^i_C\w(\bc{x^{\leq i}})} \coloneqq \frust{V'} = G^{2i}$. Moreover, it holds $\bk{g P^{\lps i-1}\w(\bc{x^{\leq i-1}}), P^i\w(\bc{x^{\leq i}})} = 0$ for $g \in \vertm\w(G^{2(i-1)})$, and for $j \leq i$
    \begin{align}
        \bk{H^{i-1}_{C}\w(\bc{x^{\leq i-1}}), P^j\w(y^j)} &= 0\;, \label{e"projector_symmetry_construction}\\
        \bk{H^{i-1}_{CP}\w(\bc{x^{\leq i-1}}), P^j\w(y^j)} &= 0\;, \label{e"projector_symmetry}\\
        \bk{H^{i-1}\w(\bc{x^{\leq i-1}}), P_R^i\w(\bc{x^{\leq i}})} &=0\;,
        \label{e"rotated_projector_symmetry}\\
        \bk{\w(U^{i \gps}\w(\bc{x^{\leq i}}))\ur * H_C^i\w(\bc{x^{\leq i}}), P_R^j\w(\bc{x^{\leq i}})} &=0\;.
        \label{e"rotated_projector_on_rotated_ham}
    \end{align}
\end{proposition}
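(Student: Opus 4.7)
The plan is to prove all assertions simultaneously by induction on $i$, with the three equivalent decompositions of $H$ together with the commutation identities \cref{e"projector_symmetry_construction}--\cref{e"rotated_projector_on_rotated_ham} packaged as one inductive statement. Completeness and orthogonality of the projector families from \cref{.rotated_projectors_commute} supply the sum decomposition, while conjugation by $U^i$ translates between the three equivalent forms in a manner controlled by \cref{.merge_paulis}.

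For the base case $i=0$, I would first note that $P_R^0 = P^0$ since $U^{-1}=\1$, and that $[P^0(x^0), H] = 0$ by \cref{.false_twin_symmetries}. Completeness $\sum_{x^0} P^0(x^0) = \1$ together with the orthogonality of distinct projectors then yields $H = \sum_{x^0} P^0(x^0) H P^0(x^0) = \sum_{x^0} H^0(x^0)$. Applying \cref{.symmetry_restriction} identifies each summand with $H^0_C(x^0) P^0(x^0)$, where $H^0_C$ is the false-twin-collapsed Hamiltonian supported on representatives of the maximal false-sibling sets of $G^0$. Iterating \cref{.merge_paulis} under conjugation by $U^0$, with angles $\theta^0_k$ chosen precisely as prescribed there, merges each maximal true-sibling set into a single representative and produces the second and third forms at level $0$.

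For the inductive step at level $i$, I would first use completeness of $\{P_R^i(\{x^{\leq i}\})\}_{x^i}$ from \cref{.rotated_projectors_commute} and the identity \cref{e"rotated_projector_symmetry} at level $i-1$ to expand $H^{i-1}(\{x^{\leq i-1}\}) = \sum_{x^i} P_R^i H^{i-1} P_R^i$, then substitute into the inductive hypothesis $H = \sum H^{i-1}$ to obtain $H = \sum H^i$. To establish the second and third forms of \cref{e"rotated_collapsed_inner}, I would conjugate into the frame of $U^{(i-1) \gps}$; the required commutation \cref{e"projector_symmetry} asserts that $P^i$ commutes with $H^{i-1}_{CP}$, which holds because the generators of $L^i$ are products of pairs of siblings in $G^{2i}$ and those products, by \cref{.false_twin_symmetries} applied to $G^{2i}$, commute with every vertex of $G^{2i}$, hence with $H^{i-1}_C$. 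A further application of \cref{.symmetry_restriction} collapses the new round of false twins into a Hamiltonian supported on one representative per maximal false-sibling set of $G^{2i+1}$, and iterated use of \cref{.merge_paulis} under conjugation by $U^i$ with appropriately chosen angles merges the new true twins, yielding $H^i_C$ supported on $\vertm(G^{2(i+1)})$.

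The main obstacle is the bookkeeping of rotated commutation identities \cref{e"rotated_projector_symmetry} and \cref{e"rotated_projector_on_rotated_ham}, since these involve $P_R^i$ acting on already-rotated pieces and the projectors themselves are nested conjugates of the unrotated $P^j$. I would handle this by reducing each rotated identity to its unrotated counterpart through the homomorphism property of the conjugation action, then invoking the fact that each $P^j$ commutes with every operator in $\vertm(G^{2j}) \supseteq \vertm(G^{2i})$ for $j \leq i$, which is precisely what \cref{.false_twin_symmetries} guarantees together with the monotonicity of the collapse sequence $G^0 \geq G^2 \geq \cdots \geq G^c$. A secondary subtlety is ensuring the angles $\theta^i_j$ from \cref{.merge_paulis} are well-defined at each stage: this requires the coefficients of each true-twin pair to remain real throughout, which is preserved inductively because the projections from \cref{.symmetry_restriction} multiply weights by signs $\beta_x(g_T,h) \in \{\pm 1\}$ and the rotations replace pair weights $(a,b)$ by $(\sqrt{a^2+b^2},0)$, both of which keep $w'_g(\{x^{\leq i}\}) \in \R$.
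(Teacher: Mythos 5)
Your proposal is correct and follows essentially the same route as the paper's proof: induction on $i$, with the base case from \cref{.false_twin_symmetries} and completeness, the false-twin collapse via \cref{.symmetry_restriction}, the true-twin merges via \cref{.merge_paulis}, and the rotated identities \cref{e"rotated_projector_symmetry} and \cref{e"rotated_projector_on_rotated_ham} obtained by conjugating back to their unrotated counterparts using \cref{.projectors_commute} — which is exactly what the paper's explicit operator chains do, just phrased there as commuting the $P^j$ through the $U^k$. Your added remark that the weights stay real (signs from $\beta_x$ and $\sqrt{a^2+b^2}$ from the rotations), so the angles $\theta^i_j$ are well defined, is a point the paper leaves implicit and is consistent with its construction.
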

\begin{proof}
    Generally, let $x^i, y^i \in \bc{-1, +1}^{m_i}$, $i \in \bc{0, \ldots, r}$, and for conciseness, we shall occasionally drop these arguments, if they are not important. Let $i \in \bc{0, \ldots, r}$. We shall prove the statement via induction. For $i=0$ the statement is trivial. Now let it be true for $i-1$.\\
    Firstly, we show \cref{e"collapse_sequence}, but this is clear via induction:
    \begin{equation}
        U^{i\gps} * H^i = U^{i\gps} * \w(P_R^i H^{i-1} P_R^i) = U^i P^i \w(U^{i-1\gps} * H^{i-1}) P^i \w(U^i)\ur\;.
    \end{equation}
    Now let $j \in \bc{0, \ldots, i}$. Per construction, it is clear that $P^j$ is a symmetry projector of $H_C^{i-1}$. More specifically, we know that $H_{CP}^{i-1}$ is a Hamiltonian with $P^{\lps i-1}$-symmetry-projected operators, such that the frustration graph is $G^{2(i-1)}$. $P^i$ is chosen such that it is a symmetry projector of the non-projected Hamiltonian, $H_C^i$ with frustration graph $G^{2(i-1)}$. But then, since $P^j$, $P^{\lps i-1}$ and $g \in \vertm\w(G^{2(i-1)})$ commute pairwise, we have $\bk{g P^{\lps i-1}, P^j} = 0$ and get \cref{e"projector_symmetry_construction,e"projector_symmetry}. Analogously it is clear that $\bk{g, P^{\lps i}} = 0$ for $g \in \vertm\w(G^{2i})$, and per construction (\cref{e"collapse_sequence}) it is clear that $H^i_{CP}$ has the form described in \cref{e"collapsed_hamiltonian_projected} with $\frust{H^i_C \w(\bc{x^{\leq i}})} = G^{2i}$. This then also proves \cref{e"full_projector_symmetry} as $P^{i \gps}$ is a projector.
    
    Next, we show that $P_R^i\w(\bc{x^{\leq i}})$ is a symmetry projector of $H^{i-1}\w(\bc{x^{\leq i-1}})$:
    \begin{subalign}
        H&^{i-1} P_R^i = P_R^{i-1}\w(\w(U^{i-2 \gps})\ur * H_{CP}^{i-2})P_R^{i-1} P_R^i\\
        &= \w(U^{i-2 \gps})\ur P^{i-1} H_{CP}^{i-2} P^{i-1} \w(U^{i-1})\ur P^{i} \w(U^{i-1 \gps})\\
        &= \w(U^{i-1 \gps})\ur H_{CP}^{i-1} P^{i} \w(U^{i-1 \gps})\\
        &= \w(U^{i-1 \gps})\ur P^i H_{CP}^{i-1} \w(U^{i-1 \gps})\\
        &= \w(U^{i-1 \gps})\ur P^{i} (U^{i-1}) P^{i-1} H_{CP}^{i-2} P^{i-1}  \w(U^{i-2 \gps})\\
        &= P_R^i P_R^{i-1}\w(\w(U^{i-2 \gps})\ur * H_{CP}^{i-2})P_R^{i-1}\\
        &= P_R^i H^{i-1}\;.
    \end{subalign}
    This proves \cref{e"rotated_projector_symmetry}. \Cref{e"full_rotated_projector_symmetry} is simply true because $P_R^{\lps i}$ is a projector. We then get
    \begin{subalign}
        H &= \sum_{x^0, \ldots, x^{i-1}} H^{i-1}\w(\bc{x^{\leq i-1}})\\
        &= \sum_{x^0, \ldots, x^i} H^{i-1}\w(\bc{x^{\leq i-1}}) P_R^i(\bc{x^{\leq i}})\\
        &= \sum_{x^0, \ldots, x^i} P_R^i(\bc{x^{\leq i}}) H^{i-1}\w(\bc{x^{\leq i-1}})
        P_R^i(\bc{x^{\leq i}})\\
        &= \sum_{x^0, \ldots, x^{i}} H^{i}\w(\bc{x^{\leq i}})\;.
    \end{subalign}
    It remains to show 
    Eqs.~\eqref{e"rotated_collapsed_inner}~and~\eqref{e"rotated_projector_on_rotated_ham}
    the first follows inductively (repeat the argument on $P^{i-1 \gps}$) via
    \begin{subalign}
        \w(U^{i\gps})\ur * H_{CP}^i &= \w(U^{i\gps})\ur * \w(P^{i\gps} H_C^i P^{\lps i})\\
        &\!\overset{(*)}{=} \w(U^{i-1\gps})\ur * \w(P^{i\gps} \w(\w(U^i)\ur * H_C^i) P^{\lps i})\\
        &= \w(U^{i-1\gps})\ur * \w(P^i U^{i-1\gps} \w(U^{i-1\gps})\ur P^{i-1\gps} \w(\w(U^i)\ur * H_C^i) P^{\lps i-1}U^{i-1\gps} \w(U^{i-1\gps})\ur P^i)\\
        &= P^i_R \w(\w(U^{i-1\gps})\ur * \w(P^{i-1\gps} \w(\w(U^i)\ur * H_C^i) P^{\lps i-1})) P^i_R\;,
    \end{subalign}
    where in $(*)$ we commuted $\w(U^i)\ur$ through $P^{i\gps}$, and for \cref{e"rotated_projector_on_rotated_ham} we have
    \begin{subalign}
        \w(\w(U^{i \gps})\ur * H_C^i) P_R^j &= \w(U^{j-1 \gps})\ur \w(\w(U^{i \gps j-1})\ur * H_C^i) P^j U^{j-1 \gps}\\
        &= \w(U^{j-1 \gps})\ur P^j \w(\w(U^{i \gps j-1})\ur * H_C^i) U^{j-1 \gps}\\
        &= P_R^j \w(\w(U^{i \gps})\ur * H_C^i)\;,
    \end{subalign}
    where we again commute $P^j$ through $U^{i \gps j-1}$ (and $H^i_C$) in the second step.
\end{proof}
\begin{corollary}\label{.cograph_collapse}
    In \cref{.result_full}, if the frustration graph $G = (V, E)$ of $H$ is a cograph, that is, $\tree\w(G)$ is a cotree with depth $2(r+1)$ or $2(r+1) - 1$, \en[r] (cf. \cref{.cotree}), then we have 
\begin{equation}
  H_C^{r}\w(\bc{x^{\lps r}}) = w\w(\bc{x^{\lps r}}) g P^{\lps r}_R\w(\bc{x^{\lps r}})\;,
\end{equation}
for appropriate $g \in V$ and $w \in \R$.
\end{corollary}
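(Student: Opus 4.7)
The plan is to reduce the corollary to the structural claim that, for a cograph $G$, the recursive twin-collapse of \cref{.collapse_graph} terminates with $V' = \vertm(G^{2r})$ being a single vertex $\{g\}$. Once this is in hand, the form of $H_C^r$ guaranteed by \cref{.result_full}, namely $H_C^r = \sum_{h \in V'} w'_h h$, collapses to a single term $w \cdot g$, and the trailing projector $P^{\lps r}_R$ on the right-hand side of the corollary is recovered from the rotated-projector decomposition in \cref{e"rotated_collapsed_inner}, using the commutativity of $g$ with $P^{\lps r}$ established in \cref{.result_full}.

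The structural claim I would prove by induction on the depth of $\tree(G)$ (equivalently, on $r$). In the base case, $G$ is either a single vertex, or its cotree root is parallel or serial with only leaf children; by \cref{.siblings_in_tree} these leaves form a single maximal false- or true-sibling set, and one application of the matching collapse step contracts them to one vertex. For the inductive step, I would invoke two structural facts already in the paper. First, since $G$ is a cograph, $\tree(G)$ contains no prime nodes, so by \cref{.no_subsequent_components} its internal nodes alternate between parallel and serial along every root-to-leaf path. Second, by \cref{.siblings_in_tree} the leaves beneath a bottom-most parallel (respectively serial) node constitute a maximal false-sibling (respectively true-sibling) set. Consequently, one round-pair of the collapse sequence contracts every bottom-most internal node to a single leaf --- via the false-twin round if it is parallel, and the true-twin round if it is serial --- which is then absorbed into its parent. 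The resulting $G^2$ is an induced subgraph of $G$; since cographs are closed under induced subgraphs (equivalently, $P_4$-freeness is hereditary), $G^2$ is itself a cograph with a strictly simpler cotree, and the inductive hypothesis yields $V' = \vertm(G^{2r}) = \{g\}$.

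The main subtlety is that $\tree(G)$ need not be balanced: different subtrees can have different depths, so a single round-pair may not reduce the depth uniformly. I would handle this by observing that any branch which has already degenerated to a single leaf simply contributes a singleton sibling set in subsequent rounds and therefore does not obstruct the induction, which is driven by the simpler cotree of $G^2$. Alternatively, one can dispense with cotree-depth bookkeeping and induct directly on $|V|$: since every non-trivial cograph contains a pair of twins (the defining property of cographs recalled in \cref{s"preliminaries}), each nontrivial round of the collapse sequence strictly decreases $|V|$, forcing termination at a single vertex in finitely many rounds. Matching this termination step with the iteration index $r$ in \cref{.result_full} then completes the reduction to the form claimed in the corollary.
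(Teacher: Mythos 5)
Your proposal is correct and takes essentially the same route as the paper, whose entire proof is ``Clear, cf.\ \cref{.collapse_graph}'': the content is precisely that the alternating twin-collapse of a cograph terminates in a single vertex, which you justify via the cotree structure (\cref{.siblings_in_tree}, absence of prime nodes) or, more simply, by induction on $\abs{V}$ using that every non-trivial cograph contains twins and that cographs are hereditary. You merely supply the details the paper leaves implicit, so no further comparison is needed.
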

\begin{proof}
Clear, cf. \cref{.collapse_graph}.
\end{proof}
\begin{remark}\label{.extension}
    The proofs of \cref{.rotated_projectors_commute,.result_full} are built around two main observations: Firstly the operator sequence of false twin projections, $P^j$, and true twin rotations, $U^j$, act locally in the frustration graph preserving the decomposition tree otherwise. This is what allows us to define the sequences in the first place without having them interfering with each other. Secondly, the rotations, $U^j$ commute with all previous projections $P^k$, $k \leq j$, which implies that the rotated projectors commute with each other.

    However, the specific form of the $U^j$ rotations is not important; we could have chosen other unitaries as long as they have the desired action on the graph and the according commutation rules. Moreover, we can actually do more than just collapsing true twins. For example, consider the case where a module as the three edge path with vertices $\bc{a, b, c, d}$, connected in that order. Then the rotation $U = \ee^{\theta cd}$, $\theta \in \R$, fulfils the required commutation rules and changes the path to a cograph (with appropriately chosen $\theta$) but leaves the rest of the graph invariant. This cograph can then be collapsed with the usual twin collapses. In fact, we can reduce any module that is an odd-length edge path to an even-length edge path as we show in \cref{f"path_collapse}.

    More generally, one can extend the sequences in \cref{.collapse_graph,.base_sequences} to allow any unitaries that change a module locally as long as they commute with the previous projections and everything outside the module.
\end{remark}
\begin{figure}[!hbt]
    \centering
    \includegraphics{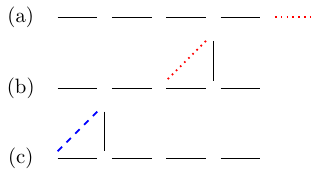}
    \caption{
        Elementary reduction of an odd-length path. Rotating around the red dotted edge in (a), with appropriate angles, turns the path into (b). Again, rotating around the red dotted edge in (b) turns the path into (c). The dashed blue edge in (c) is a true twin that can be collapsed as usual. This procedure works for any odd-length path.
    } \label{f"path_collapse}
\end{figure}
\begin{proposition}
    If the group $S$ that provides the basis operators for $H$ as in \cref{e"hamiltonian} is unitarily equivalent to the Pauli group\footnote{For example groups covered in \cref{.main_isomorphism}.}, we can extend \cref{.result_full} to include unitary rotations that cause any module that is a line graph to fully collapse into one vertex.
\end{proposition}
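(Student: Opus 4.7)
The plan is to realise a line-graph module via Majorana bilinears, apply a Gaussian (Bogoliubov) rotation that diagonalises the quadratic form on the module, and then use the false-twin collapse of \cref{.result_full} to reduce the resulting set of commuting operators to a single vertex. Throughout, I would exploit the hypothesis that $S$ is unitarily equivalent to the Pauli group, together with the unitary equivalence between the Pauli and Majorana groups (\cref{.main_isomorphism}, cf.\ \cref{s"conjugation_isomorphism}), to identify the subalgebra generated by the operators $\{g_e\}_{e\in M}$ of a line-graph module $M$ with $G[M]=L(R)$ with the subalgebra of bilinears $\{i\gamma_u\gamma_v\}_{(u,v)\in E(R)}$ on a formal Majorana system indexed by $V(R)$. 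The commutation structures on the two sides coincide, and the identification can be implemented by a unitary that acts trivially on operators outside $M$.

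Under this identification, the module contribution $\rst{H}{M}=\sum_{e\in E(R)} w_e\, g_e$ becomes a quadratic Majorana Hamiltonian, which is diagonalised by an orthogonal rotation of the Majoranas. I would implement this rotation on the Hilbert space as a product of elementary rotations $\ee^{\theta\,g_eg_f/2}$ whose generators $g_eg_f$ are built from pairs of anticommuting module operators --- i.e., from pairs of incident edges in $R$. \Cref{.rotation} guarantees that each elementary rotation mixes precisely $g_e$ and $g_f$ while leaving operators that commute with both invariant. After the full rotation $U$ is applied, the module contribution takes a canonical form $\sum_k \epsilon_k\, i\gamma'_{2k-1}\gamma'_{2k}$: a sum of pairwise commuting bilinears, so the frustration subgraph of the (rotated) module is empty.

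The locality step is to verify that $U$ commutes with every operator outside $M$ and with all previous false-twin projections and true-twin rotations from \cref{.result_full}. This follows from the module property: any external operator has a uniform commutation relation with every $g_e\in M$, hence commutes with every product $g_eg_f$ and therefore with $U$. For earlier stages of the collapse protocol, the nested structure of the modular decomposition tree ensures the relevant operators lie either strictly inside $M$ or strictly outside it, and the same argument applies. Consequently $U$ satisfies the hypotheses of \cref{.extension} and can be inserted into the collapse sequence; after its application the rotated module operators mutually commute while retaining identical external neighbourhoods, so they form a single set of false siblings that the next false-twin projection collapses to a single vertex.

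The main obstacle I anticipate is showing that the elementary generators $\{g_eg_f : e,f\text{ incident in }R\}$ suffice to realise every orthogonal rotation of the formal Majoranas needed for diagonalisation. This reduces to the Lie-algebraic fact that edge bilinears of a connected root graph $R$ generate $\mathfrak{so}(|V(R)|)$ (applied componentwise when $R$ is disconnected), so that any Bogoliubov transformation on the $V(R)$ Majoranas is a finite product of module-internal elementary rotations. Modulo this verification, the construction slots directly into the recursion of \cref{.result_full} by inserting a line-graph-collapse step at every prime node of the modular decomposition tree whose induced subgraph is a line graph, yielding the claimed extension.
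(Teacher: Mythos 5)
Your route differs from the paper's, and it has two genuine gaps. First, the identification step: you assert that because the module's frustration graph equals $L(R)$, the module operators $\{g_e\}$ can be unitarily identified (by a unitary acting trivially outside $M$) with the Majorana bilinears $\{i\gamma_u\gamma_v\}_{(u,v)\in E(R)}$. Matching frustration graphs does not give this: a frustration graph records only pairwise anticommutation, not multiplicative relations. In the bilinear model the product of the edge operators around any cycle of $R$ is a scalar, whereas for a general realisation inside $S$ it can be a non-trivial operator (e.g.\ the triangle $L(K_3)$ realised by $X_1$, $Y_1$, $Z_1Z_2$ has product $\propto Z_2$), so no intertwining unitary exists and your ``quadratic Majorana Hamiltonian'' picture, including the odd-cycle parity bookkeeping it feeds into, is not available. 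The paper avoids this entirely by citing the result of Ref.~\cite{chapman_solvable_spin_models} that for the \emph{actual} Pauli operators whose frustration graph is a line graph one can find hermitian Pauli generators of unitaries turning the module into an independent set; no idealised bilinear model is invoked.

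Second, the locality step, which is precisely where the paper does its real work. Your claim that locality is automatic holds only for rotation generators that are \emph{even} products of module operators (two sign flips against any external operator); but your own Lie-algebraic reduction appeals to the fact that the edge bilinears themselves generate the special orthogonal algebra, and a single edge operator (or any odd product along a path of $R$) \emph{anticommutes} with every external operator that anticommutes with the module, so exponentials of such generators are not local. Whether the even products alone suffice, and whether the rotated terms really ``retain identical external neighbourhoods'' (this depends on path-length parities in $R$), again hinges on relations not determined by the frustration graph. The paper instead concedes that the diagonalising generators $g$ from Ref.~\cite{chapman_solvable_spin_models} generally fail to commute with external operators $h$ and with the earlier false-twin projectors, and repairs this constructively: adjoin an ancilla qubit $i$, replace $h \mapsto h\otimes Z_i$ and $g \mapsto g\otimes X_i$ (and likewise extend the Pauli $p$ in any projector factor $(\1+p)$ that anticommutes with $g$), observe that the frustration graph is unchanged and $H \cong P H' P$ with $P = (\1+Z_i)/2$, and project the ancillas out at the end. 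This ancilla-extension idea, which makes the needed commutation hold by construction rather than by assumption, is the essential ingredient missing from your proposal; without it (or a correct substitute), your argument does not go through for general modules realised in $S$.
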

\begin{proof}
    \Cref{.extension} describes how one can include additional unitaries in general. Without loss of generality, we can assume that $S$ is the Pauli group. We need to show that given a line graph in a module, potentially after previous collapse operations, that there exist unitaries that commute with all previous false twin projections and act locally on the module such that the module can be collapsed into a single vertex.

    It is known that one can find the hermitian Pauli generators of unitaries such that the according unitaries transform a line graph into an independent set \cite{chapman_solvable_spin_models}. We argue that these generators can always be chosen such that they commute with the previous projectors and everything outside the module. Assume we have a generator $g$, with according unitary $U = \ee^{i \theta g}$ for some $\theta \in \R$, that anticommutes with an operator $h \in V$ outside the module. Extend the group $S$ by another spin via the tensorproduct; let the index of this spin be \en[i]. Now define $h' = h \otimes Z_i$, $g' = g \otimes X_i$ and $U' = \ee^{i \theta g'}$. Let $H'$ be the Hamiltonian where $h$ is replaced by $h'$. It is $H \cong PH'P$ where $P = (\1 + Z_i)/2$. $U'$ acts under conjugation on $H'$ as $U$ would act on $H$ with the difference that $U'$ now commutes with $h'$; furthermore $H$ and $H'$ have the same frustration graph.
    This can be done for all $h \in V$ that anticommute with $g$; we only have to project out the spin extensions in the end. The same argument holds if there are any false twin projectors that we need $g$ to commute with, since these projectors products of $(\1 + p)$ operators where $p$ is a Pauli string (extend $p$ by $\otimes Z_i$ if $p$ anticommutes with $g$).
\end{proof}

\section{More on the SCF Examples}\label{s"more_scf_examples}

\subsection{Uniform Random Paul Strings}\label{s"uniform_random_paul_strings}

In this section we discuss models that are constructed by uniformly drawing random Pauli strings for a fixed number $n$ of spins.

In \cref{f"two_local} we restrict to two-local Pauli strings and in \cref{f"sparse} the Pauli strings can be of arbitrary length. We plot the results against the probability $p$ that a Pauli string is accepted. As expected, for increasing $p$, all lines go to $0$. For lower $n$ the decrease is generally slower. However, note that this would swap if plotted against the absolute number $m$ of Pauli strings in the Hamiltonian; this is because for $n \to \infty$, two randomly drawn two-local Pauli strings commute almost surely. In fact, for $n \to \infty$, one can choose
\begin{equation}
  m \approx \w(\frac{3}{4})^{3/4} \w(\frac{1}{2})^{1/2} \varepsilon^{1/4} n^{3/4}
\end{equation}
and the Hamiltonian is almost surely \ac{scf} with probabilty $1 - \varepsilon$, $\varepsilon \in \R_+$ (cf. \cref{s"analytical_k_local}).

In \cref{f"sparse} we plot the results against the absolute number of Pauli strings in the
Hamiltonian. Notably, we see that for different numbers $n$ of spins, all lines are essentially the same. This is because for two random Pauli strings, the probability that they commute is $\frac{1}{2}$, independently of $n$. Therefore, if the number of drawn Pauli strings is small with respect to $2^{2n}$, i.e., the total number of Pauli strings, the effective frustration graph distributions we draw from for different $n$ are the same. More specifically, the effective frustration graphs we draw form a subset of the $\gnp\w(\text{\# Pauli strings}, 1/2)$ graphs, that have a vanishing \ac{scf} probability.
\begin{figure}[t]
    \centering
    \includegraphics{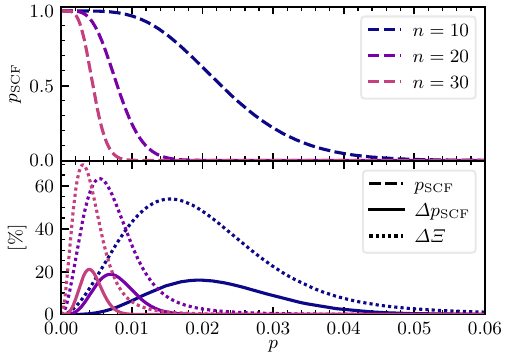}
    \caption{\ac{scf} probability for a random two-local Pauli Hamiltonian. The two-local Pauli strings are uniformly randomly drawn with probability $p$ for different numbers $n$ of spins. $p_{\mathrm{SCF}}$, $\Delta p_{\mathrm{SCF}}$ and $\Delta \Xi$ are as in \cref{f"numerical_results}. For higher densities, all lines go to $0$.} \label{f"two_local}
\end{figure}
\begin{figure}[b]
    \centering
    \includegraphics{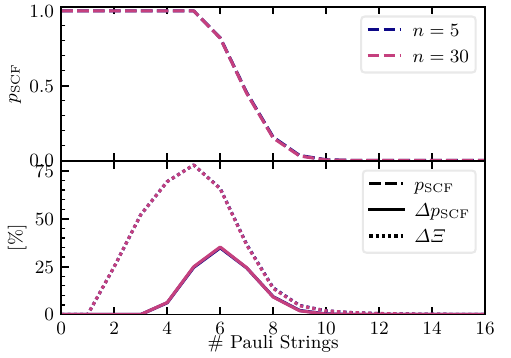}
    \caption{\ac{scf} probability for a random Pauli Hamiltonian. Pauli strings are uniformly randomly drawn with probability for different numbers $n$ of spins. We show the results against the absolute number of Pauli strings in the Hamiltonian. $p_{\mathrm{SCF}}$, $\Delta p_{\mathrm{SCF}}$ and $\Delta \Xi$ are as in \cref{f"numerical_results}. For higher numbers of Pauli strings, all lines go to $0$.} \label{f"sparse}
\end{figure}

\subsection{Analytical Bounds on the SCF Class}

In the following, let $\pcf$, $\psgcf$ and $\pscf$ be the probability that a graph is claw-free, a graph is simplicial given that it is claw-free, and is \acl{scf}, respectively.

\subsubsection{Erdös R\'enyi Model}\label{s"erdos_renyi_bounds}

Given the $\gnp(n, p)$ class with \en and $p \in \bk{0, 1}$ we are interested in upper and lower bounds on the probability that the graphs are \ac{scf}. Let $C$ be the number of claws in a graph $G \in \gnp(n, p)$, then we have
\begin{equation}
    \Exp[C] = \binom{n}{4} 4 p^3 (1-p)^3\;,
\end{equation}
where we count the number of claws in each set of four vertices. The first moment methods gives us
\begin{equation}
    \pcf = 1 - \Exp[C > 0] \geq 1 - \binom{n}{4} 4 p^3 (1-p)^3\;.
\end{equation}
For the second moment we have
\begin{align}
    \Exp[C^2] &= \binom{n}{4} \binom{4}{0} \binom{n-4}{4} \w(4 p^3 (1-p)^3)^2\nonumber\\
    &+ \binom{n}{4} \binom{4}{1} \binom{n-4}{3} \w(4 p^3 (1-p)^3)^2\nonumber\\
    &+ \binom{n}{4} \binom{4}{2} \binom{n-4}{2} \w(2 p^3(1-p)^2 2 p^2 (1-p)^3 + 2 p^3 (1-p)^3 2 p^3 (1-p)^2)\nonumber\\
    &+ \binom{n}{4} \binom{4}{3} \binom{n-4}{1} \w(3 p^3 (1-p)^3 p (1-p)^2 + p^3 (1-p)^3 p^3)\nonumber\\
    &+ \binom{n}{4} \binom{4}{4} \binom{n-4}{0} 4p^3 (1-p)^3
\end{align}
With the second moment method, this gives us
\begin{subalign}
    \pcf &= 1 - \Exp[C > 0] \leq 1 - \frac{\Exp[C]^2}{\Exp[C^2]}\\
    &= 1 - \frac{\binom{n}{4}}{\binom{n-4}{4} + 4 \binom{n-4}{3} + \frac{3}{2} \binom{n-4}{2} \frac{1}{p(1-p)} + \frac{1}{4} \binom{n-4}{1}\w(\frac{3}{p^2 (1-p)} + \frac{1}{(1-p)^3}) + \frac{1}{4 p^3 (1-p)^3}}\;.\label{e"claw_free_upper}
\end{subalign}
Now let $S$ be the number of vertices in $G$, which is not necessarily claw-free for now, that are simplicial cliques; it holds
\begin{equation}
  \Exp[S] = n \sum_{l = 0}^{n-1} \binom{n-1}{l} p^l (1 - p)^{n - 1 - l} p^{\binom{l}{2}} = n \sum_{l = 0}^{n-1} \binom{n-1}{l} p^{l + \binom{l}{2}} (1 - p)^{n -
  1 - l}\;,
\end{equation}
where, for each vertex we summed over the probabilities that it has a neighbourhood of size $l$ that is a clique. Assuming that $G$ is claw-free we know that there are less cases where a single vertex has a neighbourhood that is not a clique, since some of those cases are not allowed. Therefore, it holds
\begin{equation}
    \Exp[S | \text{$G$ is claw-free}] \geq \Exp[S].
\end{equation}
Let us temporarily assume that $G$ is connected. By upper bounding the second moment of $S$ (given $G$ is claw-free) by $n^2$ we can apply the second moment method and get
\begin{equation}
    \psgcf \geq \Exp[S]^2 / n^2\;.
\end{equation}
For small $p$, we can trivially lower bound $\Exp[S]$ by only taking the term for $l = 0$; this gives
\begin{equation}
    \psgcf \geq (1-p)^{2(n-1)} \geq (1-p)^{2n}\;.\label{e"simp_given_claw_free}
\end{equation}
This formula is additive in $n$, therefore, it also holds if $G$ is not connected, that is, if $G$ has \en[m] components of sizes $n_1, \ldots n_m$, then we have
\begin{equation}
    \psgcf \geq \prod_{i=1}^m (1-p)^{2n_i} = (1-p)^{2n}.
\end{equation}
This gives us the following lower bound on $\pscf$, for small $p$:
\begin{equation}
    \pscf = \pcf \psgcf \geq \w(1 - \binom{n}{4} 4 p^3 (1-p)^3) (1-p)^{2n}\;.
\end{equation}
For $n \to \infty$ and large $p$ we can make the statement that if the graph is claw-free, then it is almost surely also simplicial. This follows from Thm. 1.6 (i) in Ref. \cite{perkins_structure_of_dense_claw_free_graphs}, which states that in that case the graph is co-bipartite with high probability. For $n \to \infty$ and small $p$ we can make a similar statement, i.e., if the graph is almost surely claw-free, then it is also almost surely simplicial: Choose $p$ such that $\pcf \geq 1 - \epsilon$ for some small $\epsilon \in \R$ according to our bounds, i.e.,
\begin{equation}
    1 - \binom{n}{4} 4 p^3 (1-p)^3 \geq 1 - \epsilon
\end{equation}
As we consider $n \to \infty$ and small $p$ (i.e., $1-p \approx 1$), this gives us
\begin{equation}
    p \leq \sqrt[3]{\frac{6\epsilon}{n^4}}\;.
\end{equation}
Using \cref{e"simp_given_claw_free}, we then have (l'H\^opital)
\begin{equation}
    \psgcf \geq \w(1 - \sqrt[3]{\frac{6\epsilon}{n^4}})^{2n} \overset{n \to \infty}{\longrightarrow} 1\;.
\end{equation}
As upper bound we can choose \cref{e"claw_free_upper} as $\pscf \leq \pcf$, for any $p$.

\subsubsection{\texorpdfstring{$K$}{k}-Local Random Pauli Strings}\label{s"analytical_k_local}

Let $H$ be a Hamiltonian of $m$ $k$-local Pauli strings on $n$ spins, \en[m, k, n] with $k \ll n$. We can estimate a trivial, approximated, threshold for $m$ such that the frustration graph $G$ of $H$ is almost surely \ac{scf} in the limit $n \to \infty$. Let $X$ be the number of sets of $4$ vertices that are connected. If $X = 0$ we know that $G$ is claw-free and every component has a simplicial clique since the size of every component is strictly upper bounded by $4$. Let $x$ and $y$ be two random $k$-local Pauli strings where $x \neq y$; the probability $p$ that $x$ and $y$ anticommute is given by (the fraction after the first sum is the hypergeometric distribution formula)
\begin{equation}
    p = \sum_{s = 0}^k \frac{\binom{k}{s}\binom{n-k}{k-s}}{\binom{n}{k}} \sum_{a = 1,a+=2}^s \binom{s}{a} \w(\frac{2}{3})^a \w(\frac{1}{3})^{s-a}\label{e"p_k_local}
\end{equation}
We use this as approximation for any two distinct Pauli operators in the Hamiltonian under the assumption that $m \ll 2^{2n}$.
In the limit $n \to \infty$ only the leading order term w.r.t. $n-k$ in \cref{e"p_k_local} (which is in the $s = 1$ summand) and $1/n$, matters, which gives
\begin{equation}
    p \to \frac{k \frac{(n-k)^{k-1}}{(k-1)!}}{\frac{n^k}{k!}} \frac{2}{3} = \frac{2k^2 (n-k)^{k-1}}{3 n^k} \to \frac{2k^2}{3n}\;.
\end{equation}
For $n \to \infty$, $p$ goes to zero, therefore the most likely set of $4$ vertices that is fully connected is the three-edge path (other connected $4$-sets have more edges), therefore we can reduce $X$ to count only these case and find a threshold for them. Given $4$ vertices, there are $\binom{4}{2} 2 = 12$ possible three-edge paths (choose two endpoints) and we get
\begin{align}
    \Exp[X] \approx \binom{m}{4} 12 p^3 (1-p)^3\;.
\end{align}
Approximating $(1-p) \approx 1$ and requiring $\Exp[X] \leq \varepsilon$ for some small $\varepsilon \in \R_+$ (because then $\Exp[X = 0] = 1 - \Exp[X > 0] \geq 1 - \Exp[X] \geq 1 - \varepsilon$), we have
\begin{IEEEeqnarray}{LrCl}
    & p &\leq& \sqrt[3]{\frac{\varepsilon}{12 \binom{m}{4}}}\\
    \Leftrightarrow & \frac{2k^2}{3n} &\leq& \sqrt[3]{\frac{\varepsilon}{12 \binom{m}{4}}}\\
    \Leftrightarrow & \binom{m}{4} &\leq& \frac{27}{96} \varepsilon \w(\frac{n}{k^2})^3\\
    \overset{\approx}{\Leftrightarrow} & m &\leq& \frac{3^{3/4}}{\sqrt{2}}\varepsilon^{1/4} \w(\frac{n}{k^2})^{3/4}\;,
\end{IEEEeqnarray}
where approximated $\binom{m}{4}$ by its leading order in $m$, because if $n \to \infty$ as then we also have $m \to \infty$. As expected, the maximum $m$ such that the Hamiltonian is \ac{scf} (almost surely with $1 - \varepsilon$) increases with $n$ and decrease with $k$.

\subsubsection{Majorana Hamiltonian}\label{s"majorana_analytical}

For the Hamiltonian $H_{\mathrm{M}}$ as in \cref{e"majorana_hamiltonian}, one can perform
a similar analysis as in \cref{s"analytical_k_local}. Let $m$ be the number of Majorana
operators in $H_{\mathrm{M}}$ such that the Hamiltonian is almost surely simplicial and
claw-free. For simplicity we are only interested in the the order of $m$ with respect to
$n$, in the limit $n \to \infty$. It is apparent that an analogous analysis as in
\cref{s"analytical_k_local} leads to $m \leq \order[n^{3/4}]$. Equivalently, for the
interaction probability $p$ the bound $p \leq \order[n^{-13/4}]$ implies that the
Hamiltonian is almost surely \ac{scf}.

\subsection{Exact Results for Periodic Models}

If the unit cell in periodic lattice models is small enough it is possible to calculate exact \ac{scf} probabilities for these models by considering all possible unit cells:

\newcommand\hs{\mathscr{H}}

\subsubsection{Periodic Brick lattice}\label{s"brick_exact}

In \cref{f"brick_lattice}, the unit cell is effectively defined by the five edges between the nodes $0$ to $5$, i.e., by $e_i = (i, i+)$ for $i = 0, \ldots 4$. For each edge, there are $9$ Pauli operators that can appear with probability $p \in (0, 1]$; given a Hamiltonian $H$, let $E_i$ the set of its operators on $e_i$ for $i = 0, \ldots 4$. Let $\hs$ be the set of all possible Hamiltonians - it is $m \coloneqq \log_2\w(\abs{\hs}) = 45$ - and define
\begin{equation}
    \supp \hs \to \bc{0, \ldots m},\; H \mapsto \sum_{i=0}^4 \abs{E_i}\;.
\end{equation}
Let $\hs(k) = \set{H \in \hs}{\abs{\supp(H)} = k}$, $k \in \bc{0, \ldots m}$. A Hamiltonian $H \in \hs(k)$ apparently appears with probability $p^{k} (1-p)^{m-k}$. However, we only want to consider models that are two-dimensional; therefore this probability has to be normalised. Define $\mathscr{H}_2 = \set{H \in \hs}{\forall i \in \bc{0, \ldots 4}: E_i \neq \emptyset}$ and $\mathscr{H}_2(k) = \set{H \in \hs_2}{\abs{\supp(H)} = k}$, $k \in \bc{0, \ldots m}$. Given a probability $p \in \bk{0, 1}$, the normalisation factor is $\Pr(H \in \hs_2 | p)\ui$ for $H \in \hs$; let $H \in \hs$, then
\begin{subalign}
    \Pr(H \in \hs_2 | p) &= \sum_{k=0}^m \Pr(H \in \hs_2(k))\\
    &= \sum_{k=0}^m \abs{\hs_2(k)} p^k(1-p)^{m-k}\;.
\end{subalign}
For $k \in \bc{0, \ldots m}$, $\abs{\hs_2(k)}$ can be calculated with the inclusion-exclusion principle: Define $A_i = \set{H \in \hs(k)}{E_i = \emptyset}$, then we have
\begin{subalign}
    \abs{\hs_2(k)} &= \abs{\hs(k)} - \abs{\bigcup_{i = 0}^4 A_i}\\
    &= \abs{\hs(k)} - \w(\sum_{\emptyset \neq S \subseteq \bc{0, \ldots 4}} (-1)^{\abs{S} + 1} \abs{\bigcap_{i \in S} A_i})\\
    &= \binom{45}{k} - \w(\sum_{j = 1}^5 \binom{5}{j} (-1)^{j+1} \binom{(5-j) 9}{k})\;.
\end{subalign}
Now define $\hs_{2, \mathrm{SCF}}(k) = \set{H \in \hs_2(k)}{\text{$H$ is SCF after recursively collapsing all twins and line-graph modules}}$, $k \in \bc{0, \ldots, m}$. By enumerating all $H \in \hs_2$ (e.g., via a binary tree construction corresponding to the drawn Paulis and an exhaustive depth-first-search) one calculates $\abs{\hs_{2, \mathrm{SCF}}(k)}$; this then gives the final result
\begin{equation}
    p_{\mathrm{SCF}}(p) = \frac{1}{\Pr(H \in \hs_2 | p)} \sum_{k = 0}^m \abs{\hs_{2, \mathrm{SCF}}(k)} p^k(1-p)^{m-k}\;.
\end{equation}
Analogously one calculates $\Delta p_{\mathrm{SCF}}(p)$.

\subsubsection{Periodic Square Lattice}\label{s"square_exact}

The calculation works analogously to the brick lattice in \cref{s"brick_exact}: The unit cell is effectively defined by three edges; let $e_h$ be the horizontal edge, $e_v$ the vertical edge, and $e_l$ the edge to the local spin, with according sets $E_h$, $E_v$ and $E_l$, respectively. Let $m = 27$. We allow $E_l$ to be empty, however, $E_h$ ad $E_v$ must be non-empty. Therefore, we have
\begin{equation}
    \abs{\hs_2(k)} = \binom{27}{k} - \w(\sum_{j = 1}^2 \binom{2}{j} (-1)^{j+1} \binom{(3-j) 9}{k})\;.
\end{equation}
The rest is analogue as in \cref{s"brick_exact}.

\section{Details on the Stone-von Neumann Theorem}\label{s"app_isomorphisms}

This section contains generalised versions and proofs of the results in \cref{s"res_stone_von_neumann}.
\begin{definition}[Polar commutator group]
    Let \en[d, n], $W \in \M_n\w(\Z_d)$ and set $\Omega = W - W\ut$. Furthermore, let $A$ be a multiplicative Abelian group, and define an embedding $\zeta: \Z_d \to A$. Let $F \subseteq A$ be a fixed set of representatives of $A / \zeta\w(\Z_d)$, without loss of generality $1 \in F$, and set $r: A \to F$, $u: A \to \Z_d$ such that $a = r(a) \zeta\w(u(a))$ for all $a \in A$.\\
    We define the group $\gk_d^n(A, \zeta, W) = \w(F, \Z_d, \Z_d^n, \cdot)$ via
    \begin{equation}
        \begin{split}
            \cdot: \gk_d^n \times \gk_d^n  &\to \gk_d^n,\\
            (a, p, x), (b, q, y) &\mapsto (r(ab), u(ab) + p + q + x\ut W y, x + y)\;.
        \end{split}
    \end{equation}
\end{definition}
We often just write $\gk = \gk_d^n$, potentially with some of $A, \zeta, W$ instead of the full form, $\gk_d^n(A, \zeta, W)$, if the rest is clear from context and or not important.
\begin{proposition}
    $\gk_d^n(A, \zeta, W)$ is indeed a group and the functions $r$ and $u$ are well defined. The identity element of the group is $e \coloneqq (1, 0, 0)$ and given $(a, p, x) \in \gk_d^n(A, \zeta, W)$, its inverse is $\w(r\w(a\ui), u\w(a\ui) -p + x\ut W x, -x)$.
\end{proposition}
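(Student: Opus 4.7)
The plan is to dispatch well-definedness and closure first, then pass through the group axioms by direct computation, exploiting only one structural identity: for every $a\in A$ one has $r(a)\zeta(u(a)) = a$, hence $r(a)=a\,\zeta(-u(a))$. This single identity is what reduces nested applications of $r$ and $u$ to their values on plain products in $A$.

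First I would note that $F$ is a transversal of $A/\zeta(\Z_d)$, so every $a\in A$ lies in a unique coset with a unique representative in $F$; this is $r(a)$. Because $\zeta$ is an embedding (hence injective), the element of $\Z_d$ satisfying $a = r(a)\zeta(u(a))$ is unique, so $u$ is well defined as well. Closure of the multiplication is built in: the first slot lands in $F$ by the definition of $r$, the second in $\Z_d$, the third in $\Z_d^n$.

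Next I would verify the identity element. For any $(a,p,x)\in\gk_d^n$ we have $a\in F$, so $r(a)=a$ and $u(a)=0$; consequently
\begin{equation}
(1,0,0)\cdot(a,p,x) = (r(a),\,u(a)+0+p+0,\,x) = (a,p,x)\;,
\end{equation}
and the right-identity check is symmetric. For the proposed inverse $(r(a\ui),\,u(a\ui)-p+x\ut W x,\,-x)$, I would compute the product with $(a,p,x)$ on either side. The key step is that $a\cdot r(a\ui)=a\cdot a\ui \zeta(-u(a\ui))=\zeta(-u(a\ui))$, so $r(a\cdot r(a\ui))=1$ and $u(a\cdot r(a\ui))=-u(a\ui)$. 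Substituting, the middle component collapses to $-u(a\ui)+u(a\ui)+x\ut W x - x\ut W x = 0$, yielding $e$; the other side is analogous.

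The main obstacle, and the only nontrivial piece, is associativity, because of the cocycle-like term $x\ut W y$ combined with the scalar bookkeeping through $r$ and $u$. For this I would expand both $((a,p,x)(b,q,y))(c,s,z)$ and $(a,p,x)((b,q,y)(c,s,z))$, and reduce the nested $r$ and $u$ with the identity above: $r(r(ab)\,c) = r(ab\,\zeta(-u(ab))c) = r(abc)$ and $u(r(ab)\,c) = u(abc)-u(ab)$, so the $u(ab)$ contribution cancels against itself in the middle component; symmetrically for the right-associated product. The bilinear piece in the middle component then becomes $x\ut W y + (x+y)\ut W z$ on the left and $x\ut W(y+z)+y\ut W z$ on the right, which agree by bilinearity in $\Z_d$. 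Matching first and third components is immediate (both equal $r(abc)$ and $x+y+z$), completing associativity and hence the group axioms.
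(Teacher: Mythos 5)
Your proposal is correct and follows essentially the same route as the paper: well-definedness via the unique coset decomposition of each $a\in A$ into an $F$-representative times an element of $\zeta(\Z_d)$, and the inverse verified through the same key identity, namely that $a\,r(a^{-1})=\zeta(-u(a^{-1}))$, so $r(a\,r(a^{-1}))=1$ and $u(a\,r(a^{-1}))=-u(a^{-1})$. The only difference is that you additionally write out the associativity computation (reducing nested $r,u$ via $r(a)=a\,\zeta(-u(a))$ and bilinearity of $x^{T}Wy$), which the paper dismisses as elementary; this is a harmless, and welcome, elaboration rather than a different argument.
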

\begin{proof}
    Each $a \in A$ has a unique decomposition into $f \cdot z$ with $f \in F$ and $z \in \zeta(\Z_d)$ since cosets do not overlap and $F$ is a fixed set, thus $r$ and $y$ are well defined. Only the inverse is not clear, but it can be verified elementarily by using that for all $a \in A$ we have
    \begin{equation}
        r\w(a r\w(a\ui)) \zeta\w(u\w(a r\w(a\ui)) + u\w(a\ui)) \overset{(*)}{=} a r\w(a\ui)\frac{a\ui}{r\w(a\ui)} = 1\;,
    \end{equation}
    where we used the homomorphic property of $\zeta$ in $(*)$.
\end{proof}

\begin{remark}
    \begin{enumerate}
        \item The decomposition of $a \in A$ into $r(a)$ and $u(a)$ is analogous to the polar decomposition for complex numbers. While we are mostly interested in the third and second component of the group elements - which will describe operators and their commutation relations, respectively - the first component will allow us to consider scalar factors in front of the operators. Splitting these scalar factors into the polar decomposition with respect to the second component will allow us to define the homomorphisms to the operators injectively. Furthermore, the first component also allows us to consider roots of the $\zeta$ embedding, which will be required for some isomorphisms later (e.g., when we cannot divide by $2$ in $\Z_{d=2}$).
        \item For all $m \in \Z_d$, it holds $\zeta\w(m) = \omega^m$ for some fixed $\omega \in A$ of order $d$, since $\Z_d$ is cyclic.
    \end{enumerate}
\end{remark}

\begin{proposition}
    Let \en[d, n], $W \in \M_n\w(\Z_d)$, and $A$ an abelian group with embedding $\zeta: \Z_d \to A$. For the polar commutator group $\gk_d^n(A, \zeta, W)$, we have
    \begin{enumerate}
        \item $\Omega\ut = -\Omega$ (i.e., $\Omega$ is skew-symmetric),
        \item $\forall x \in \Z^n_d: x\ut \Omega x = 0$ (i.e., $\Omega$ is alternating)
        \item $\forall x, y \in \Z_n^d: \tbk{(\cdot, \cdot, x), (\cdot, \cdot, y)} = (1, x\ut\Omega y, 0)$,
        \item The center is given by $\cen[\gk_d^n] = \set{(a, p, x)}{a \in F, p \in \Z_d, x \in \ker \Omega}$
    \end{enumerate}
\end{proposition}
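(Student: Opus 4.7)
The plan is to verify the four claims in order, leveraging the fact that each one essentially follows from the preceding one together with the definition of the group multiplication.

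Statement (1) is immediate from the definition: $\Omega\ut = (W - W\ut)\ut = W\ut - W = -\Omega$. For (2), the plan is to use that $x\ut W x$ is a $1 \times 1$ matrix and therefore equal to its own transpose; hence $x\ut W x = x\ut W\ut x$, and subtracting gives $x\ut \Omega x = 0$ over any ring, so the argument works over $\Z_d$ even when $d = 2$ (no division by $2$ is required).

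The central step is (3), and the main idea is to avoid grinding through $ghg\ui h\ui$ directly and instead compare the two products $gh$ and $hg$. Writing $g = (a,p,x)$ and $h = (b,q,y)$ and applying the multiplication rule, both products agree in their third components (both equal $x+y$) and in their first components (since $A$ is abelian, $r(ab) = r(ba)$ and $u(ab) = u(ba)$); the second components differ only by $x\ut W y - y\ut W x = x\ut (W - W\ut) y = x\ut \Omega y$. Hence $gh = z \cdot hg$, where $z := (1, x\ut \Omega y, 0)$. Right-multiplying by $g\ui h\ui$ then yields
\begin{equation}
    \tbk{g,h} = ghg\ui h\ui = z \cdot h g g\ui h\ui = z,
\end{equation}
which is the claim. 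The only bookkeeping concern I anticipate is confirming that the first component genuinely agrees between $gh$ and $hg$, but this reduces to $u(ab) = u(ba)$, which is automatic since $A$ is abelian.

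For (4), the plan is to use (3) to reduce centrality to a linear condition. An element $g = (a, p, x)$ lies in the centre if and only if $\tbk{g,h} = e = (1,0,0)$ for every $h = (b, q, y) \in \gk$. By (3) this is equivalent to $x\ut \Omega y = 0$ for all $y \in \Z_d^n$, i.e., $\Omega\ut x = 0$, which by (1) is in turn equivalent to $x \in \ker \Omega$. The components $a$ and $p$ impose no further restrictions, since the symmetry of the multiplication rule in the first two components of $g$ and $h$ (already exploited in (3)) means that they never appear in the commutator. Hence the centre equals $\set{(a, p, x)}{a \in F,\, p \in \Z_d,\, x \in \ker \Omega}$, as claimed.
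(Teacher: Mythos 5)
Your proposal is correct and follows essentially the same route as the paper: both establish (3) by comparing $gh$ with $hg$, observing that they differ only by the central factor $(1, x\ut\Omega y, 0)$ in the second component (using $y\ut W x = x\ut W\ut y$ and abelianness of $A$), and then read off the centre from the commutator formula. You additionally spell out (1), (2), and both inclusions for the centre, which the paper treats as clear, but the core argument is the same.
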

\begin{proof}
    It is clear that $(\cdot, \cdot, 0) \in \cen[\gk]$. Let $(a, p, x), (b, q, y) \in \gk$. It is
    \begin{subalign}
        \tbk{(a, p, x), (b, q, y)} &= (a, p, x)(b, p, x) (r(ba), u(ba) + q+p + y\ut W x, y +x)\ui\\
        &= (a, p, x)(b, q, y) ((1, y\ut W x - x\ut W y, 0)(a, p, x)(b, p, y))\ui\\
        &= (a, p, x)(b, q, y) ((a, p, x)(b, q, y))\ui(1, x\ut W y - x\ut W\ut y, 0)\\
        &= (1, x\ut \Omega y, 0)\;.
    \end{subalign}
    For $(\cdot, \cdot, x) \in \cen[\gk]$, we therefore have $x \in \ker \Omega\ut = \ker \Omega$.
\end{proof}
\begin{remark}
    If, and only if $\Omega$ is non-degenerate (assuming $d$ is prime), i.e., $\cen[\gk_d^n] = \bc{(\cdot, \cdot, 0)}$, then $\Omega$ describes a symplectic form (and therefore $n$ must be even).
\end{remark}
\begin{proposition}\label{.odd_prime_polar_commutator_ismorphism}
    Let \en[d] be an odd prime (especially, $d \neq 2$), \en, $A$ be an abelian group with embedding $\zeta: \Z_d \to A$, $W_i \in \M_n\w(\Z_d)$ such that $\Omega_i = W_i - W_i\ut$ has full rank, i.e., is symplectic, for $i = 1, 2$. Then the groups $\gk_d^n(A, \zeta, W_1)$ and $\gk_d^n(A, \zeta, W_2)$ are isomorphic (\emph{the isomorphism is constructed in the proof}).
\end{proposition}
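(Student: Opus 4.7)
The plan is to construct an explicit isomorphism $\phi: \gk_d^n(A, \zeta, W_1) \to \gk_d^n(A, \zeta, W_2)$ of the form $\phi(a, p, x) = (a, p + Q(x), T x)$, where $T \in \GL_n(\Z_d)$ handles the linear data in the third component and the quadratic correction $Q: \Z_d^n \to \Z_d$ on the second component absorbs the cocycle difference between $x\ut W_1 y$ and $x\ut W_2 y$. The first component passes through unchanged because it only records scalar prefactors that are insensitive to the commutator structure.

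The first step is to reduce to the comparison of the commutator matrices. Since $d$ is an odd prime, $\Z_d$ is a field, and the classical structure theorem for non-degenerate alternating bilinear forms over a field (a symplectic Gram--Schmidt argument) guarantees that both $\Omega_i$ are equivalent to the standard symplectic form $\mathrm{\Omega}_n$. Composing the two change-of-basis matrices yields $T \in \GL_n(\Z_d)$ with $T\ut \Omega_2 T = \Omega_1$. Setting $M = T\ut W_2 T - W_1$ then gives
\begin{equation}
    M - M\ut = T\ut \Omega_2 T - \Omega_1 = 0\;,
\end{equation}
so $M$ is symmetric. Because $d$ is odd, $2$ is invertible in $\Z_d$, and I may define
\begin{equation}
    Q(x) = \tfrac{1}{2}\, x\ut M x\;,
\end{equation}
which satisfies the polarisation identity $Q(x+y) - Q(x) - Q(y) = x\ut M y$.

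Next I would verify multiplicativity of $\phi$ by direct calculation: expanding both $\phi((a,p,x)(b,q,y))$ and $\phi(a,p,x)\phi(b,q,y)$ using the group laws in $\gk_d^n(A,\zeta,W_1)$ and $\gk_d^n(A,\zeta,W_2)$ respectively, the first and third components agree trivially, and matching the second components reduces exactly to the polarisation identity above. Bijectivity is immediate from the explicit inverse $\phi\ui(b, q, y) = (b,\, q - Q(T\ui y),\, T\ui y)$, so $\phi$ is the required group isomorphism.

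The main obstacle is the necessity of dividing by $2$ when defining $Q$, which is precisely why the proposition is restricted to odd $d$. For $d = 2$ the symmetric correction cannot be realised by a $\Z_2$-valued quadratic form (polarisation degenerates, as every bilinear form on $\F_2^n$ into $\F_2$ is alternating on the diagonal), and this case must be treated by enlarging the scalar group $A$ to carry a square root of $\zeta(\Z_2)$ --- a possibility the general definition of $\gk_d^n(A, \zeta, W)$ already anticipates by permitting arbitrary abelian $A$, and which presumably underlies the more delicate treatment in the appendix.
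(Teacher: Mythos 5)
Your proposal is correct and follows essentially the same route as the paper's proof: reduce both $\Omega_i$ to the standard symplectic form to obtain a single change-of-basis matrix, then correct the second component by the quadratic form $\tfrac12 x\ut(T\ut W_2 T - W_1)x$ (the paper's $x\ut C x$), with multiplicativity reducing to the same polarisation computation and invertibility of $2$ for odd $d$. Your closing remark about the failure at $d=2$ and the square-root fix also matches the paper's subsequent treatment.
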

\begin{proof}
    Since $\Omega_i$ is a symplectic form, there is an invertible matrix $M_i \in M_n\w(\Z_d)$ such that $\Omega_i = M_i\ut \mathrm{\Omega}_n M_i$ for $i = 1, 2$, where $\mathrm{\Omega}_n = \smatp{0&\1\\-\1&0}$ is the standard symplectic form. Set $M = M_2\ui M_1$ (i.e, $\Omega_1 = M\ut \Omega_2 M$) and define $C = (M\ut W_2 M - W_1) / 2$. Note that $C$ is symmetric, since $2(C - C\ut) = M\ut \Omega_2 M - \Omega_1 = 0$, and therefore we have $C + C\ut + W_1 = M\ut W_2 M$. We state that the following mapping defines the wanted isomorphism
    \begin{equation}
        \phi: \gk_d^n(A, \zeta, W_1) \to \gk_d^n(A, \zeta, W_2)\;, (a, p, x) \mapsto (a, p +x\ut C x, M x)\;.
    \end{equation}
    Since $M$ is bijective, $\phi$ is bijective and for $(a, p, x), (b, q, y) \in \gk_d^n(A, \zeta, W_1)$ - without loss of generality, $a = b = 1, p = q = 0$ (since $(\cdot, \cdot, 0) \in \cen[\gk]$ these factors can be trivially factored out in $\phi$) - we have
    \begin{subalign}
        \phi\w((1, 0, x)_1 (1, 0, y)_1) &= \phi((1, x\ut W_1 y, x + y)_1)\\
        &= (1, (x+y)\ut C (x+y) + x\ut W_1 y, M(x + y))_2\\
        &= (1, x\ut C x + y\ut C y + x\ut \w(C + C\ut + W_1)y, M x + M y)_2\\
        &= (1, x\ut C x + y\ut C y + x\ut M\ut W_2 My, M x + M y)_2\\
        &= (1, x\ut C x, M x)_2 (1, y\ut C y, M y)_2\\
        &= \phi\w((1, 0, x)_1) \phi\w((1, 0, y)_1)\;.
    \end{subalign}
\end{proof}
\begin{remark}
    \Cref{.odd_prime_polar_commutator_ismorphism} fails for $d = 2$ since we cannot construct $C$ as we cannot divide by $2$. If the matrix $M\ut W_2 M - W_1$ has only zeros on the diagonal, we can fix the proof by defining $C$ to be the strictly lower (or upper) triangluar part of $M\ut W_2 M - W_1$. If this is not case, we can still construct the isomorphism with an additional requirement on $A$, namely that it contains the roots of $\zeta\w(\Z_d)$.
\end{remark}
\begin{proposition}[\cref{.odd_prime_polar_commutator_ismorphism} but allow $d =
2$]\label{.prime_polar_commutator_ismorphism}
    Let us be in the setting of \cref{.odd_prime_polar_commutator_ismorphism}, but now we allow $d = 2$, however, additionally require $\sqrt{a} \in A$ for all $a \in \zeta\w(\Z_d)$ (since $\Z_d$ is cyclic this is equivalent to $\sqrt{\zeta(0)} \in A$). Then the groups $\gk_d^n(A, \zeta, W_1)$ and $\gk_d^n(A, \zeta, W_2)$ are isomorphic (the isomorphism is constructed in the proof)
\end{proposition}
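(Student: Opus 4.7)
The plan is to emulate the construction from the proof of \cref{.odd_prime_polar_commutator_ismorphism}: choose an invertible $M \in \M_n\w(\Z_d)$ with $\Omega_1 = M\ut \Omega_2 M$, which exists by the same symplectic normal-form argument (valid over any field, including $\Z_2$). Setting $S = M\ut W_2 M - W_1$, the identity $M\ut \Omega_2 M = \Omega_1$ forces $S = S\ut$, so $S$ is symmetric. The obstruction for $d = 2$ is precisely that the cocycle $C = S/2$ used in the odd-$d$ case is undefined; the idea is to realise $S$ as a coboundary using data drawn from $A$ rather than from $\zeta(\Z_d)$ alone.

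To that end, I decompose $S = L + D + L\ut$ into its strictly lower triangular part $L$ and its diagonal $D$. The off-diagonal contribution $L + L\ut$ is a coboundary of the quadratic multiplier $\zeta(x\ut L x)$: a short expansion (with entries in $\Z_2$) gives $\zeta((x+y)\ut L (x+y)) = \zeta(x\ut L x)\zeta(y\ut L y)\zeta(x\ut(L+L\ut) y)$. For the diagonal $D$, I use the new hypothesis and fix square roots $\beta_i \in A$ with $\beta_i^2 = \zeta(D_{ii})$ for $i = 1, \ldots, n$, then combine everything into
\begin{equation}
    \alpha(x) = \zeta(x\ut L x) \prod_{i=1}^n \beta_i^{x_i},
\end{equation}
where the exponents $x_i \in \bc{0, 1}$ are read as non-negative integers. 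The key identity to verify is
\begin{equation}
    \alpha(x)\, \alpha(y) = \alpha(x + y)\, \zeta(x\ut S y),
\end{equation}
which follows by combining the $\zeta(x\ut L x)$ coboundary with the ``carry'' from the $\beta_i$: the ratio $\prod_i \beta_i^{x_i + y_i} / \prod_i \beta_i^{(x_i + y_i) \bmod 2} = \prod_i \beta_i^{2 x_i y_i} = \prod_i \zeta(D_{ii})^{x_i y_i} = \zeta(x\ut D y)$ contributes exactly the diagonal piece needed to upgrade $L + L\ut$ to $S$.

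With $\alpha$ in hand I define
\begin{equation}
    \phi: \gk_d^n(A, \zeta, W_1) \to \gk_d^n(A, \zeta, W_2), \quad (a, p, x) \mapsto \w(r(a \alpha(x)),\, u(a \alpha(x)) + p,\, M x),
\end{equation}
and verify the homomorphism property by pushing $a \alpha(x)\, b \alpha(y) = ab\, \alpha(x+y)\, \zeta(x\ut S y)$ through the polar decomposition; the check on the second component then reduces to $x\ut M\ut W_2 M y + x\ut S y = x\ut W_1 y$ in $\Z_2$, which is immediate from $M\ut W_2 M = S + W_1$ together with $2 = 0$ in $\Z_2$. Bijectivity follows from invertibility of $M$ on the third component and uniqueness of the polar decomposition on the first two, so $\phi$ is the desired isomorphism.

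The only genuinely new ingredient beyond \cref{.odd_prime_polar_commutator_ismorphism} is the treatment of $D$, and I expect this diagonal bookkeeping to be the main obstacle. Over $\Z_2$, a symmetric bilinear form with nonzero diagonal cannot be written as the coboundary of any $\Z_2$-valued $1$-cochain, which is precisely why the hypothesis $\sqrt{\zeta(\Z_d)} \subseteq A$ is required --- the square roots $\beta_i$ supply the $1$-cochain valued in the extension $A \supset \zeta(\Z_d)$ that trivialises $D$. Once the multiplier $\alpha$ is in place, the remainder of the argument parallels the odd-$d$ case essentially verbatim.
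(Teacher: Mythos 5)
Your proof is correct, and it follows the same overall strategy as the paper: choose $M$ with $\Omega_1 = M\ut \Omega_2 M$, twist the scalar component by a multiplier $\alpha(x)$ satisfying $\alpha(x)\alpha(y) = \alpha(x+y)\,\zeta\w(x\ut C y)$ for $C = M\ut W_2 M - W_1$, and push through the polar decomposition exactly as in \cref{.odd_prime_polar_commutator_ismorphism}. The only point of divergence is how the multiplier is built. The paper keeps $C$ whole, lifts the quadratic form $x\ut C x$ to $\Z_{2d}$, and applies a single square root $\zeta'(m) = \sqrt{\omega}^{\,m}$, which gives a construction uniform in $d$ but requires (implicitly) checking that the lifted quadratic form is well defined modulo $2d$. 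You instead split $C = L + D + L\ut$, absorb the off-diagonal part into the genuinely $\Z_2$-valued quadratic form $\zeta\w(x\ut L x)$ --- this is precisely the trick the paper mentions in the remark preceding the proposition for the zero-diagonal case --- and handle the diagonal with per-entry square roots $\beta_i$, the carry identity $\beta_i^{x_i+y_i} = \beta_i^{(x_i+y_i)\bmod 2}\,\zeta(D_{ii})^{x_i y_i}$ supplying exactly $\zeta\w(x\ut D y)$. Your version avoids the $\Z_{2d}$ lift at the cost of the diagonal/off-diagonal bookkeeping, and it isolates cleanly why the hypothesis $\sqrt{\zeta(\Z_d)} \subseteq A$ is needed only for the diagonal of $C$. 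One small caveat: your carry computation uses representatives $x_i \in \bc{0,1}$ and so is specific to $d = 2$; for odd $d$ you should simply defer to \cref{.odd_prime_polar_commutator_ismorphism}, which is anyway the content already covered there, so nothing essential is missing.
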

\begin{proof}
    Define $M$ analogously as in the proof of \cref{.odd_prime_polar_commutator_ismorphism}, set $C = M\ut W_2 M - W_1$ and set $\zeta': \Z_{2d} \to A, m \mapsto \sqrt{\zeta(0)}^m$. The isomorphism is then defined as
    \begin{equation}
        \phi: \gk_d^n(A, \zeta, W_1) \to \gk_d^n(A, \zeta, W_2)\;, (a, p, x) \mapsto \w(r\w(a\zeta'(x\ut Cx)), p + u\w(a\zeta'(x\ut Cx)), M x)\;,
    \end{equation}
    where the terms $x\ut C x$ are evaluated as quadractic form $\Z_{2d}^n \to \Z_{2d}$. Again, since $M$ is bijective, $f$ is bijective.
    Now let $(a, 0, x), (b, 0, y) \in \gk_d^n(A, \zeta, W_1)$ - we can trivially factor out $(1, \cdot, 0)$ - and set $d \coloneqq r(ab) \zeta'(x\ut C x + y\ut C y)$, $a' = a\zeta'(x\ut Cx)$, and $b' = b\zeta'(y\ut Cy)$; then
       \begin{subalign}
        \phi\w((a, 0, x)_1 (b, 0, y)_1) &= \phi((r(ab), u(ab) + x\ut W_1 y, x + y)_1)\\
        &= (r\w(d \zeta(x\ut C y)), u(ab) + u\w(d \zeta(x\ut C y)) + x\ut W_1 y, M(x + y))_2\\
        &\!\overset{(*)}{=} (r(d), u(ab) + u(d) + x\ut C y + x\ut W_1 y, M(x + y))_2\\
        &= (r(d), u(ab) + u(d) + x\ut M\ut W_2 My, M x + M y)_2\\
        &\!\!\overset{(**)}{=} (r\w(r(a')r(b')), u\w(r(a')r(b')) + u(a') + u(b') + x\ut M\ut W_2 My, M x + M y)_2\\
        &= (r(a'), u(a'), M x)_2 (r(b'), u(b'), M y)_2\\
        &= \phi\w((a, 0, x)_1) \phi\w((b, 0, y)_1)\;,
    \end{subalign}
    where in $(*)$ and $(**)$ we used that $r$ and $u$ are well defined together with
    \begin{subalign}
        r\w(d\zeta(x\ut Cy)) \zeta\w(u\w(d\zeta(x\ut Cy))) &= d\zeta(x\ut Cy)\\
        &= r(d) \zeta(u(d)) \zeta(x\ut Cy)\\
        &= r(d) \zeta(u(d) + x\ut Cy)
    \end{subalign}
    and
    \begin{subalign}
        r(d) \zeta\w(u(ab) + u(d)) &= d \zeta\w(u(ab))\\
        &= r(ab) \zeta'(x\ut Cx + y\ut Cy) \zeta\w(u(ab))\\
        &= ab \zeta'(x\ut Cx) \zeta'(y\ut Cy)\\
        &= a'b'\\
        &= r(a')r(b')\zeta\w(u(a') + u(b'))\\
        &= r\w(r(a')r(b')) \zeta\w(u\w(r(a')r(b')) + u(a') + u(b'))\;,
    \end{subalign}
    respectively.
\end{proof}
\begin{remark}
    If $W_1$ and $W_2$ are already symplectic, it holds $C = 0$ (for $d \neq 2$; for $d = 2$ it would be $\Omega = 0$).
\end{remark}
\begin{example}[\cite{room_synthesis_of_clifford_matrices}]\label{x"pauli_to_majorana}
    For $d = 2$ and \en, consider the case where $\Omega_1 = \mathrm{\Omega}_n$ (Pauli commutator) and $\Omega_2 = P - P\ut$ (Majorana commutator), where $P$ is the parity matrix, i.e, $P_{ij} = 1$ if $i > j$ and $0$ otherwise for all $i, j \in \bc{1, \ldots, n}$. Both matrices are symplectic, as discussed in \cref{s"conjugation_isomorphism} (or \cref{.parafermion_weyl_isomorphism}), and therefore, \cref{.prime_polar_commutator_ismorphism} holds. Reference~\cite{room_synthesis_of_clifford_matrices} gives us a direct construction of $M$: For $i \in \bc{1, \ldots, n}$ let $z_i, x_i \in \Z_2^{2n}$ be the $i$th and $n+i$th Euclidean basis vector, respectively. Then, $M$ is defined as follows
    \begin{equation}
        Mx_i = x_i + \sum_{j < i} z_j\;,\qquad Mz_i = z_i + Mx_i\;, \qquad \text{for all } i \in \bc{1, \ldots, n}\;.
    \end{equation}
    The image of $M$ is a generating set of $\Z_2^{2n}$ since we have $z_i = Mz_i + Mx_i$ and $x_i = Mx_i + \sum_{j < i} z_j$, $i \in \bc{1, \ldots, n}$, and with that, $M$ is indeed invertible. It is easy to check that $M\ut \Omega_1 M = \Omega_2$.
\end{example}

\begin{definition}[Polar commutator
representation]\label{.polar_commutator_representation}
    Let \en[d, n], $W \in \M_n\w(\Z_d)$, $R$ a commutative ring with a $d$th root of unity $\omega$ and $A \leq R^{\times}$ be a subgroup of the multiplicative unit group of $R$ (inclusion possibly via an embedding) with $\omega \in A$ (or $\sqrt{\omega} \in A$ if required). Set $\zeta: \Z_d \to A, m \mapsto \omega^m$. Furthermore, let $M$ be an associative $R$-algebra\footnote{That is, an $R$-module with a bilinear product as in an associative algebra over a field.} and $\tau: \Z_d^n \to M$ such that
    \begin{equation}
        \mu: \gk_d^n(A, \zeta, W) \to M, \; (a, p, x) \mapsto a \omega^p \tau(x)\;,
    \end{equation}
    is a monomorphism with respect to the multiplication in $M$.
\end{definition}
As shorthand, we write $\mu(\cdot, \cdot, \cdot) \coloneqq \mu((\cdot, \cdot, \cdot))$ and we call $\mu(\gk)$ the (isomorphic) ``representation'' of $\gk$ in $M$ via $\tau$. The commutators have a simple representation:
\begin{proposition}
    Let us be in the situation as in \cref{.polar_commutator_representation}. For $(a, p, x), (b, q, y) \in \gk$, we have
    \begin{equation}
        \tbk{\mu(a, p, x), \mu(b, q, y)} = \omega^{x\ut \Omega y}\;,
    \end{equation}
    and for the commutator Lie bracket, we have
    \begin{equation}
      \bk{\mu(a, p, x), \mu(b, q, y)} = \w(1 - \omega^{-x\ut \Omega y}) \mu(a, p, x)\mu(b, q,
      y)\;.
    \end{equation}
\end{proposition}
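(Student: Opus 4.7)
The plan is to leverage two facts already established earlier in the appendix: that $\mu$ is a multiplicative monomorphism (by the hypotheses of \cref{.polar_commutator_representation}), and that the group commutator in $\gk_d^n$ has the explicit form
\begin{equation}
    \tbk{(a, p, x), (b, q, y)} = (1, x\ut \Omega y, 0)\;,
\end{equation}
as computed in the proposition just above \cref{x"pauli_to_majorana}. Since the group commutator is preserved by any group homomorphism, applying $\mu$ to both sides immediately reduces the first claim to evaluating $\mu$ on the scalar element $(1, x\ut \Omega y, 0)$. By the definition of $\mu$, this equals $1 \cdot \omega^{x\ut \Omega y} \cdot \tau(0)$, and since $\mu$ is a monomorphism sending the identity $(1,0,0)$ to the multiplicative identity of $M$, we have $\tau(0) = 1$, yielding $\omega^{x\ut \Omega y}$ as required.

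For the Lie bracket, the first step is to rewrite the group-commutator relation multiplicatively: $\mu(a,p,x)\mu(b,q,y)\mu(a,p,x)\ui\mu(b,q,y)\ui = \omega^{x\ut\Omega y}$ translates to
\begin{equation}
    \mu(a,p,x)\mu(b,q,y) = \omega^{x\ut \Omega y}\,\mu(b,q,y)\mu(a,p,x)\;.
\end{equation}
Equivalently, $\mu(b,q,y)\mu(a,p,x) = \omega^{-x\ut\Omega y}\mu(a,p,x)\mu(b,q,y)$, using that $\omega^{x\ut\Omega y}$ is central in $M$ (it is a scalar from $R$ acting on the $R$-algebra $M$). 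Subtracting this from $\mu(a,p,x)\mu(b,q,y)$ then gives the stated formula directly, as the right-hand side factors cleanly as $(1 - \omega^{-x\ut\Omega y})\mu(a,p,x)\mu(b,q,y)$.

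There is no real obstacle here: the content of the statement is entirely in the earlier commutator computation in $\gk$ together with functoriality of $\mu$. The only subtlety worth flagging is the implicit use that $\omega$, as an element of $A \le R^\times$, is central in the $R$-algebra $M$, so that it may be pulled across $\mu(b,q,y)$ when reorganising terms; this is guaranteed by the $R$-algebra structure assumed in \cref{.polar_commutator_representation}.
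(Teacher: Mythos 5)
Your proposal is correct and follows essentially the same route as the paper: apply the monomorphism $\mu$ to the previously computed group commutator $(1, x\ut\Omega y, 0)$ (using $\tau(0)=1$), then deduce the Lie-bracket formula by elementary rearrangement. The only cosmetic difference is that the paper invokes the general identity $\bk{g,h} = \w(1-\tbk{h,g})gh$ for invertible elements together with skew-symmetry of $\Omega$, whereas you rearrange the relation $\mu(a,p,x)\mu(b,q,y) = \omega^{x\ut\Omega y}\mu(b,q,y)\mu(a,p,x)$ using centrality of the scalar, which amounts to the same computation.
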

\begin{proof}
    Since for any invertible $g, h \in M$, it holds $\bk{g, h} = \w(1 - \tbk{h, g}) gh$, we only have to show one of the above equations (remember that $\Omega\ut = -\Omega$). Let $(a, p, x), (b, q, y) \in \gk$. We have
    \begin{equation}
        \tbk{\mu(a, p, x), \mu(b, q, y)} = \mu\w(\tbk{(a, p, x), (b, q, y)}) =\mu\w(1, x\ut \Omega y, 0) = \omega^{x\ut \Omega y}\;.
    \end{equation}
    Here we used that $\tau(0) = 1 \cdot 1 \cdot \tau(0) = \mu(1, 0, 0) = 1$, since $\mu$ is a homomorphism.
\end{proof}
\begin{remark}
    Since the multiplicative commutator is a scalar, the Lie bracket is equivalent to multiplication in $M$ for non-commuting elements, up to an R-scalar. Therefore, closing a subset of $\mu(\gk)$ under the Lie bracket is equivalent to closing it under multiplication, potentially including scalar factors (including $0$ if there are commuting elements).
\end{remark}
\begin{corollary}\label{.polar_commutator_similarity}
    Let \en[d] be prime, \en, $K$ be a field with a $d$th root of unity $\omega$ and $A \leq K^{\times}$ such that $\omega \in A$, and $\sqrt{\omega} \in A$ if $d = 2$. Let $W_1, W_2 \in \M_n\w(\Z_d)$ such that $\Omega_1, \Omega_2$ are symplectic forms. Let $V_1, V_2$ be finite-dimensional $K$-vector spaces, and $\zeta$ and $\mu_i: \gk_d^n(A, \zeta, W_i) \to \GL\w(V_i)$ be representations as defined in \cref{.polar_commutator_representation}.

    Then, it holds $\gk \cong \gk_d^n(A, \zeta, W_1) \cong \gk_d^n(A, \zeta, W_2)$, and if, and only if, the representations have the same character, then the representations are similar, i.e., with $V \cong V_1 \cong V_2$, it exists $S \in \GL(V)$ such that $\mu_2(g) = S \mu_1(g) S\ui$ for all $g \gk$.
\end{corollary}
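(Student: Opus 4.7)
The plan is to bootstrap from the two earlier propositions. First I would invoke Proposition~\ref{.prime_polar_commutator_ismorphism} (which subsumes Proposition~\ref{.odd_prime_polar_commutator_ismorphism} for odd $d$, and uses the hypothesis $\sqrt{\omega}\in A$ precisely when $d=2$) to produce an explicit group isomorphism $\phi:\gk_d^n(A,\zeta,W_1)\to \gk_d^n(A,\zeta,W_2)$. This already settles the first half of the claim. I would then define the pulled-back representation $\tilde\mu_2 = \mu_2\circ\phi:\gk_d^n(A,\zeta,W_1)\to \GL(V_2)$, so the problem reduces to showing that two representations $\mu_1$ and $\tilde\mu_2$ of the \emph{same} group $\gk=\gk_d^n(A,\zeta,W_1)$ on the $K$-vector spaces $V_1,V_2$ are similar iff they share a character. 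The ``only if'' direction is immediate from cyclicity of the trace, so everything comes down to the forward direction.

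For that forward direction I would first show that both $\mu_1$ and $\tilde\mu_2$ are irreducible and that their images span the full endomorphism algebra. The key input is that $\Omega_i$ is symplectic (hence non-degenerate over $\Z_d$, with $d$ prime), so for every nonzero $x\in\Z_d^n$ there exists $y$ with $x\ut\Omega_i y\neq 0$; by the commutator identity $[\mu_i(1,0,x),\mu_i(1,0,y)]=(1-\omega^{-x\ut\Omega_i y})\mu_i(1,0,x)\mu_i(1,0,y)$, a straightforward orthogonality argument (pair each operator against the character $y\mapsto \omega^{x\ut\Omega_i y}$ of $\Z_d^n$) shows the $d^n$ operators $\{\mu_i(1,0,x)\}_{x\in\Z_d^n}$ are $K$-linearly independent in $\M_{\dim V_i}(K)$. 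In particular $(\dim V_i)^2\geq d^n$. Equality of characters at the identity forces $\dim V_1=\dim V_2=:N$, and since no representation of a group generated by pairwise (anti)commuting elements can exceed the pertinent dimension bound, $N^2=d^n$, so the $d^n$ operators form a $K$-basis of $\M_N(K)$. By Burnside's theorem both representations are absolutely irreducible.

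To produce $S$, I would exploit the fact that on this basis the pullback correspondence $\mu_1(1,0,x)\mapsto \tilde\mu_2(1,0,x)$ is well-defined and, after extending $K$-linearly, gives a $K$-algebra isomorphism of $\M_N(K)$ to itself: the multiplicative property holds because both sides compute $\mu_i\circ\phi^{?}$ applied to the same group-theoretic product, and the scalar cocycle contributions match by construction of $\phi$. Skolem-Noether then supplies $S\in\GL(V)$ (with $V\cong V_1\cong V_2$) implementing the algebra automorphism by conjugation, i.e.\ $\tilde\mu_2(g)=S\mu_1(g)S^{-1}$ on the linear span, hence on every element of $\gk$. Translating back via $\phi$ gives $\mu_2(\phi(g))=S\mu_1(g)S^{-1}$ for all $g\in\gk$, as required.

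The step I expect to be delicate is verifying the algebra-homomorphism property in the intertwiner construction over a field $K$ that need not be algebraically closed: one cannot simply invoke the usual $\mathbb{C}$-valued character theory. The cleanest way around this is the Burnside plus Skolem-Noether route above, which avoids any averaging over $\gk$ (this would be problematic since $F$ may be infinite and $\mathrm{char}\,K$ need not be coprime to anything in sight). The symplecticity of $\Omega$ enters twice --- once to guarantee linear independence (irreducibility) and once to guarantee that the center of $\gk$ acts by scalars so that the intertwiner relation descends well to matrix algebras --- and these are exactly the places where the hypothesis is essential.
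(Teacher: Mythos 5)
Your route coincides with the paper's only in its first half: the group isomorphism $\phi$ is taken from \cref{.prime_polar_commutator_ismorphism} (resp.\ \cref{.odd_prime_polar_commutator_ismorphism}) exactly as the paper does. For the similarity statement the paper simply invokes the standard fact that (semisimple) representations with equal characters are intertwined, whereas you attempt a hands-on construction via linear independence, Burnside and Skolem--Noether. The constructive part has a genuine gap at the step ``$N^2=d^n$''. \Cref{.polar_commutator_representation} only requires $\mu_i$ to be a \emph{monomorphism} into $\GL(V_i)$; nothing in the corollary's hypotheses forces irreducibility or minimal dimension. For instance, the direct sum of two copies of the Weyl representation is a legitimate polar commutator representation with $\dim V=2d^{n/2}$, so $(\dim V)^2=4d^n$, the $d^n$ operators $\mu(1,0,x)$ do \emph{not} span $\M_{\dim V}(K)$, Burnside does not apply, and the map obtained by linearly extending $\mu_1(1,0,x)\mapsto\mu_2(\phi(1,0,x))$ is not a surjection onto a full matrix algebra, so Skolem--Noether cannot be invoked. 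There is no ``dimension bound for groups generated by (anti)commuting elements'' for faithful representations: that bound is a statement about \emph{irreducible} representations (it is exactly the Stone--von Neumann uniqueness), and assuming it here begs the question for the reducible case that the corollary also covers. A telltale symptom is that in your argument the equal-character hypothesis is only used to equate $\dim V_1$ and $\dim V_2$, whereas its real role is to match multiplicities once reducible representations are admitted.

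The repair is not hard, and your machinery is the right tool for the irreducible building block: your independence argument (valid since $\omega$ is a primitive $d$th root of unity, so $\operatorname{char}K\neq d$ and the character sums over $\Z_d^n$ can be divided by $d^n$) shows that any representation of the prescribed form restricted to the finite subgroup $H=\set{(1,p,x)}{p\in\Z_d,\,x\in\Z_d^n}$ is semisimple with the centre acting through the fixed character determined by $\omega$; your Burnside/Skolem--Noether argument then shows that all \emph{irreducible} constituents with this central character are pairwise similar of dimension $d^{n/2}$. Decomposing $V_1$ and $V_2$ into isotypic components and using equality of characters to equate the multiplicities (this is where the hypothesis is genuinely consumed, and where one should flag the characteristic-zero or coprimality caveat that the paper's one-line citation also quietly relies on), one obtains the intertwiner $S$ block by block. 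As written, however, your proof only establishes the corollary under an unstated irreducibility assumption.
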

\begin{proof}
    This is just \cref{.odd_prime_polar_commutator_ismorphism,.prime_polar_commutator_ismorphism} and a well known result from representation character theory, namely, that two representations are intertwining (similar, module-isomorphic) if, and only if, they have the same character.
\end{proof}

\subsection{Complex Representations}

    We now consider the special case of \cref{.polar_commutator_representation} where $R = \C$. Throughout this section let $\gk$, $\Omega$, $\omega$, $\zeta$ and $\mu$ be defined as in \cref{.polar_commutator_representation} (for given $d$, $A$, $n$, $W$ and $\tau$).
\begin{proposition}\label{.polar_commutator_similarity_unitary}
    Let us be in the situation of \cref{.polar_commutator_similarity} with $K = \C$ and similar representations via $S \in \GL(V)$. If $\gk$ contains a set $B$, such that $\mu_1(B)$ is a set of hermitian generators of the vector space $M_n(\C)$ and $\mu_2(g)$ is hermitian for all $g \in B$, then $S$ is unitary (after a potential normalisation). Instead of requiring that the operators in $\mu_1(B) \cup \mu_2(B)$ are hermitian, we can also require that they are unitary.
\end{proposition}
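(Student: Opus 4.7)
The plan is to exploit the hypothesis that $\mu_1(B)$ linearly spans $M_N(\C)$, together with hermiticity (respectively unitarity), to show that $S^\dagger S$ is a scalar multiple of the identity; rescaling $S$ then produces the desired unitary. Throughout, I would use the similarity relation $\mu_2(g) = S\mu_1(g)S^{-1}$ that is already supplied by \cref{.polar_commutator_similarity}, and work entirely on the subset $B$ where the extra structure is given.

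First, I would take adjoints of the similarity relation. For $g \in B$ in the hermitian case, $\mu_i(g)^\dagger = \mu_i(g)$ for $i=1,2$, so
\begin{equation}
    S\mu_1(g)S^{-1} = \mu_2(g) = \mu_2(g)^\dagger = (S^{-1})^\dagger \mu_1(g) S^\dagger\;.
\end{equation}
Rearranging gives $S^\dagger S\, \mu_1(g) = \mu_1(g)\, S^\dagger S$ for all $g \in B$. Because $\mu_1(B)$ spans $M_N(\C)$ as a $\C$-vector space, linearity yields $S^\dagger S\, A = A\, S^\dagger S$ for every $A \in M_N(\C)$, i.e., $S^\dagger S$ lies in the center of the full matrix algebra. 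Hence $S^\dagger S = c\, \mathbb{1}$ for some $c \in \C$. Since $S$ is invertible, $S^\dagger S$ is positive definite, so $c \in \R_{>0}$, and $S/\sqrt{c}$ is unitary; replacing $S$ by this rescaling leaves the conjugation action unchanged.

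For the alternative hypothesis that each $\mu_i(g)$ with $g \in B$ is unitary, I would first note that $\mu_i(g)^{-1} = \mu_i(g)^\dagger$ and that $g^{-1} \in \gk$ as well, so $\mu_1(g^{-1}) = \mu_1(g)^\dagger$ also belongs to $\mu_1(B \cup B^{-1})$, which still spans $M_N(\C)$. Then from $S\mu_1(g)S^{-1} = \mu_2(g)$, taking adjoints and using unitarity of both $\mu_1(g)$ and $\mu_2(g)$ gives
\begin{equation}
    (S^{-1})^\dagger \mu_1(g)^\dagger S^\dagger = \mu_2(g)^\dagger = \mu_2(g)^{-1} = S\mu_1(g)^\dagger S^{-1}\;,
\end{equation}
which upon rearrangement yields $S^\dagger S\, \mu_1(g)^\dagger = \mu_1(g)^\dagger\, S^\dagger S$. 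Running the same spanning/centrality argument as before on the adjoint generators concludes the proof identically.

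The only conceptual subtlety, and the main point to check carefully, is that the spanning hypothesis on $\mu_1(B)$ is actually \emph{linear} spanning of $M_N(\C)$ rather than merely generation as a group or algebra; it is this linearity that allows us to extend the commutation from $B$ to all of $M_N(\C)$ and invoke the triviality of the center. A secondary point worth recording is that since $c > 0$, one may equivalently state the conclusion as $S^\dagger S \propto \mathbb{1}$, so $S$ is automatically unitary up to an overall real-positive normalization -- which is consistent with unitary conjugation being well-defined only up to a global phase.
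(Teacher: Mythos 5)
Your proof is correct and follows essentially the same route as the paper: you show $S^\dagger S$ commutes with every element of the spanning set $\mu_1(B)$ (the paper phrases this as conjugation by $S^\dagger S$ fixing each generator, which is the same statement), conclude by centrality of scalars that $S^\dagger S \propto \1$, and normalise, handling the unitary variant by passing to adjoints/inverses exactly as the paper does. The only cosmetic difference is that you justify $c>0$ via positive definiteness where the paper notes $\lambda\neq 0$ from invertibility; both are fine.
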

\begin{proof}
    Let $X = \mu_1(g)$, $g \in B$, and $Y = \mu_2(g) = S * X$. In the hermitian case, we have
    \begin{equation}
       \w(S\ur S) * X = S\ur * (S * X) = S\ur * Y = S\ur * Y\ur = (S\ui * Y)\ur = X \ur = X\;.
    \end{equation}
    and in the unitary case, we have
    \begin{equation}
        \w(S\ur S) * X = S\ur * (S * X) = S\ur * Y = S\ur * \w(Y\ui)\ur = (S\ui * Y\ui)\ur = \w((S\ui * Y)\ui)\ur = \w(X\ui)\ur = X\;.
    \end{equation}
    Therefore we have $S\ur S = \lambda \1$, with $\lambda \in \C$, since an operator is uniquely (up to a factor) defined by how it conjugates a generating set (elementary proof by using that euclidean matrix basis elements are mapped to themselves). After normalizing $S$ we have $S\ur S = \1$ (note that $\lambda = 0$ is not possible since both, $S$ and $S\ur$ have full rank).
\end{proof}
\begin{remark}
    \begin{enumerate}
       \item \cref{.odd_prime_polar_commutator_ismorphism,.prime_polar_commutator_ismorphism,.polar_commutator_similarity,.polar_commutator_similarity_unitary} give \cref{.main_isomorphism} in the main text.
      \item In the specific case where $\gk$ is the Heisenberg group, this is essentially the Stone-von Neumann theorem.
    \end{enumerate}
\end{remark}

\subsubsection{The Weyl-Heisenberg Group and
Parafermions}\label{s"weyl_heisenberg_and_parafermions}

We end the discussion on the discrete Stone-von Neumann theorem by quickly discussing the well known Weyl-Heisenberg group and some of its properties, as this is probably the most canonical polar commutator group and representation. Furthermore, we give an abstract definition of parafermions - as well as an explicit construction via the Weyl-Heisenberg group - as an important unitary equivalent group to the Weyl-Heisenberg group.
\begin{remark}
    Reminder: For odd $d$, $2$ has a multiplicative inverse in $\Z_d$; and for even $d$, $2$ does not have a multiplicative inverse in $\Z_d$.
\end{remark}
\begin{definition_proposition}[Weyl-Heisenberg group]\label{.weyl_heisenberg_group}
    Let \en[d] be odd with $d>2$, $n=2m$ for an \en[m], $A \leq \C^{\times}$ such that $\omega \in A$, and $W = \mathrm{\Omega}_n / 2 \in \M_n\w(\Z_d)$. $\gk_d^n(A, \zeta, W)$ is (sometimes) called the Heisenberg group (or Clifford collineation group). Define $\mu$ and $\tau$ linearly via
    \begin{equation}
       \tau: \Z_d^n \cong \Z_d^m \times \Z_d^m \to \M_{d^m}\w(\C), \; (z, x) \mapsto \w(\tau(z, x): \C^{d^m} \to \C^{d^m},\; \sum_{s \in \Z_d^m} \lambda_s e_s \mapsto \sum_{s \in \Z_d^m} \lambda_s \omega^{\braket{z, s + x/2}} e_{s + x})\;,
    \end{equation}
    where $\lambda_0, \ldots, \lambda_{d^m-1} \in \C$ and $e_0, \ldots, e_{d^m-1}$ is the euclidean basis (indices are encoded in base $d$), is its Weyl representation.\\
    Furthermore, we have $\Omega = \mathrm{\Omega}_n$, which is a symplectic form.
\end{definition_proposition}
\begin{proof}
    We have to show that $\tau$ is well defined and induces $\mu$ as a injective homomorphism. It is clear that the operators defined by $\tau$ are linear (per definition). To proof the homomorphism characteristic, let $(a, p, u = (z_u, x_u)), (b, q, v = (z_v, x_v)) \in \gk$ - without loss of generality, $a = b = 1, p = q = 0$ (since $(\cdot, \cdot, 0) \in \cen[\gk]$ these factors can be trivially factored out in $\mu$) - and $s \in \Z_d^m$. We have
    \begin{subalign}
        \mu((1, 0, u)(1, 0, v))e_s &= \mu(1, u\ut W v, u + v)e_s\\
        &= \omega^{(\braket{z_u, x_v} - \braket{x_u, z_v})/2} \omega^{\braket{z_u + z_v, s +(x_u + x_v)/2}} e_{s + x_u + x_v}\\
        &= \omega^{\braket{z_u, s + x_v + x_u/2}} \omega^{\braket{z_v, s + x_v/2}} e_{(s + x_v) + x_u}\\
       &= \mu(1, 0, u)\mu(1, 0, v)e_s\;.
    \end{subalign}
    Finally, it is clear, that the kernel of $\mu$ is $\bc{(1, 0, 0)}$.
\end{proof}
\begin{definition_proposition}[Continue \cref{.weyl_heisenberg_group}; alternative
definition]\label{.weyl_heisenberg_group_alt}
    Instead of $W = \mathrm{\Omega}_n / 2$, set $W = \smatp{0 & 0 \\ -\1^{m\times m} & 0} \in \M_n\w(\Z_d)$ and define $\mu$ via
    \begin{equation}
        \tau: \Z_d^n \cong \Z_d^m \times \Z_d^m \to \M_{d^m}\w(\C), \; (z, x) \mapsto \w(\tau(z, x): \C^{d^m} \to \C^{d^m},\; \sum_{s \in \Z_d^m} \lambda_s e_s \mapsto \sum_{s \in \Z_d^m} \lambda_s \omega^{\braket{z, s + x}} e_{s + x})\;.
    \end{equation}
    Furthermore, we can drop the condition that $d$ is odd and only require that $d \geq 2$.\\
    If $\sqrt{\omega} \in A$, this defines an isomorphic group to the group defined in \cref{.weyl_heisenberg_group} (if $d$ is odd); specifically, the isomorphism is given by
    \begin{equation}
        \begin{split}
            f: \gk_d^n\w(A, \zeta, \smatp{0 & \1^{m\times m} \\ -\1^{m\times m} & 0}) &\to \gk_d^n\w(A, \zeta, \smatp{0 & 0 \\ -\1^{m\times m} & 0})\\
             \w(a, p, (z, x)) &\mapsto \w(r\w(a\omega^{-\braket{z, x}/2}), p + u\w(a\omega^{-\braket{z, x}/2}), (z, x))\;.
        \end{split}
    \end{equation}
    Moreover, the representation via $\mu$ has the same image as the one in \cref{.weyl_heisenberg_group_alt}. Furthermore, we still have $\Omega = \mathrm{\Omega}_n$, which is a symplectic form.
\end{definition_proposition}
\begin{proof}
    Firstly, one proves analogously that $\mu$ is indeed a well-defined monomorphism. Now let $\gk_s, \mu_s$ be the group and representation defined in \cref{.weyl_heisenberg_group}, and $\gk_a, \mu_a$ the group and representation defined in this current definition. The isomorphy statement follows directly from \cref{.prime_polar_commutator_ismorphism} ($M$ being the identity). However, as an exercise, we also show it manually via the representations: It suffices to show that $\mu_s\w(\gk_s) = \mu_a\w(\gk_a)$, because then we can compose $f = \mu_a\ui \circ \mu_s$. Let $(a, p, v = (z, x))_s \in \gk_s$. It holds $\w(r\w(a\omega^{-\braket{z, x}/2}), p + u\w(a\omega^{-\braket{z, x}/2}), v) \in \gk_a$ and for $s \in \Z_d^m$ we have
    \begin{subalign}
        \mu_s\w(a, p, v)e_s &= a \omega^{p + \braket{z, s + x/2}}e_{s + x}\\
        &= a\omega^{-\braket{z, x}/2} \omega^{p + \braket{z, s + x}}e_{s + x}\\
        &= r\w(a\omega^{-\braket{z, x}/2}) \omega^{u\w(a\omega^{-\braket{z, x}/2})} \omega^{p + \braket{z, s + x}}e_{s + x}\\
        &= \mu_a\w(r\w(a\omega^{-\braket{z, x}/2}), p + u\w(a\omega^{-\braket{z, x}/2}), v)e_s\;.
    \end{subalign}
    Therefore we have $\mu_s\w(\gk_s) \subseteq \mu_a\w(\gk_a)$ and analogously one shows $\mu_a\w(\gk_a) \subseteq \mu_s\w(\gk_s)$.
\end{proof}
\begin{remark}
    \begin{enumerate}
        \item The Heisenberg-Weyl group is a generalization of the Pauli group for $d > 2$. Both definitions above have its advantages and disadvantages: In \cref{.weyl_heisenberg_group}, the group multiplication is already defined by a symplectic form, however, \cref{.weyl_heisenberg_group_alt} works for all $d \geq 2$ (not just odd numbers; but cf. next point) and especially for $d = 2$ it reduces to the Pauli group.
        \item One can adjust the definition in \cref{.weyl_heisenberg_group} to also work for even $d$ by putting the action of $W$ into the first tuple argument via $\zeta$ and taking the square root of $\omega$ (requiring $\sqrt{\omega} \in A$, analogously how it is done in \cref{.prime_polar_commutator_ismorphism} for $C$). This definition is, for example, used in ~\cite{gross_hudsons_theorem_finite_quantum_system}.
        \item For prime $d$, the Heisenberg group defines in a certain sense the only polar commutator group with symplectic $\Omega$; cf. \cref{.odd_prime_polar_commutator_ismorphism,.prime_polar_commutator_ismorphism}. Furthermore as we shall see below, the representation is unitary and provides a basis; therefore other unitary representations with the same character (trace zero, except for the neutral element) are unitary conjugations of this representation. The normaliser of this representation is the Clifford (transform) group, which is unitary (up to scalar factors) and isomorphic to the symplectic group (taken modulo the Weyl group) ~\cite{bolt_clifford_stuff_1,bolt_clifford_stuff_2,bolt_clifford_stuff_3}.
        \item In general, these groups have been extensively studied, e.g., ~\cite{bolt_clifford_stuff_1,bolt_clifford_stuff_2,bolt_clifford_stuff_3} and ~\cite{heinrich_stabiliser_techniques,gross_hudsons_theorem_finite_quantum_system}.
    \end{enumerate}
\end{remark}
\begin{proposition}\label{.weyl_heisenberg_properties}
    Let \en[d \geq 2], \en[n = 2m] and $\gk = \gk_d^n(A, \zeta, W)$ be the Heisenberg group with Weyl representation $\mu$ as defined in \cref{.weyl_heisenberg_group_alt} (or \cref{.weyl_heisenberg_group}). The representation has the following properties; let $(a, p, u) \in \gk$:
    \begin{enumerate}
        \item $\mu(a, p, u)$ is unitary up to the factor $a$, i.e., $\mu(a, p, u) \mu(a, p, u)\ur = \abs{a}^2$,
        \item $\mu(a, p, u)$ is traceless if, and only if, $u \neq 0$,
        \item $\tr \mu(a, p, 0) = a \omega^p d^m$,
        \item $\mu(\gk)$ is irreducible,
        \item the representatives of $\mu(\gk) / \cen\w(\mu(\gk)))$ form an orthogonal basis with respect to the Hilbert-Schmidt inner product, with norm $\abs{a}^2 d^m$ for $\mu(a, p, u) \in \mu(\gk)$.
        \item for $d = 2$, $\mu(a, p, u)$ is hermitian up to the factor $a^*a\ui \omega^{u\ut Wu} = \ee^{i\w(u\ut W u \pi - 2 \arg(a))}$,
        \item for $d > 2$, only $\mu(a, p, 0)$ is hermitian, up to the factor $a^*a\ui \omega^{-2p} = \ee^{-2i\w(p \arg(\omega) + \arg(a))}$.
    \end{enumerate}
\end{proposition}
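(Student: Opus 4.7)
The plan is to verify each of the seven properties by direct computation, exploiting the explicit Weyl form $\mu(a,p,u) = a\omega^p \tau(u)$ in which $\tau(z,x)$ acts on $\C^{d^m}$ as the composition of the diagonal phase $e_s \mapsto \omega^{\braket{z, s+x}} e_s$ with the shift permutation $e_s \mapsto e_{s+x}$. Two identities will be kept at hand throughout: the inverse formula $(1,0,u)\ui = (1, u\ut W u, -u)$ in $\gk$ --- which, combined with the unitarity of $\tau$ established below, yields $\tau(u)\ur = \omega^{u\ut W u}\tau(-u)$ --- together with the character orthogonality $\sum_{s \in \Z_d} \omega^{cs} = d\,\delta_{c,0}$.

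Properties (1)--(3) are essentially immediate. For (1), $\tau(u)$ is a unitary diagonal times a permutation matrix, so it is unitary, and $\mu(a,p,u)\mu(a,p,u)\ur = |a|^2 \tau(u)\tau(u)\ur = |a|^2\,\1$. For (2), split $u = (z,x)$: if $x \neq 0$ then the shift sends every basis vector off the diagonal and the trace vanishes; if $x = 0$ and $z \neq 0$ then the trace factorises as $\prod_{i=1}^m \sum_{s_i \in \Z_d} \omega^{z_i s_i}$, and at least one factor vanishes by character orthogonality. Property (3) is just $\tr(a\omega^p\tau(0)) = a\omega^p d^m$.

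For (5), I will observe that $\mu(g)\ur\mu(h)$ is a scalar multiple of $\mu(g\ui h)$, so the Hilbert--Schmidt inner product $\braket{\mu(g),\mu(h)}_{\mathrm{HS}}$ is a phase times the trace of a single group element. By (2)--(3) this vanishes unless $g\ui h$ is central; since $\Omega = \mathrm{\Omega}_n$ is symplectic, $\cen(\mu(\gk))$ consists exactly of the elements with third component zero, so there are precisely $d^{2m} = \dim \M_{d^m}(\C)$ cosets, each with squared norm $|a|^2 d^m$, and the coset representatives form an orthogonal basis. Property (4) then follows as an immediate corollary: since $\mu(\gk)$ contains a basis of $\M_{d^m}(\C)$, it spans the full matrix algebra, whose action on $\C^{d^m}$ admits no proper invariant subspaces.

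Finally, for (6) and (7), I will compute $\mu(a,p,u)\ur = a^*\omega^{-p+u\ut W u}\tau(-u)$ via the identity above. When $d=2$, one has $-u = u$ in $\Z_2^n$ and $\omega^{-2p} = 1$, so the right-hand side rearranges to $\mu(a,p,u)\ur = (a^* a\ui)\omega^{u\ut W u}\,\mu(a,p,u)$, giving the stated prefactor $\ee^{i(u\ut W u\,\pi - 2\arg(a))}$. For $d > 2$, the injectivity of $\mu$ (equivalently, the orthogonality established in (5)) forces $\tau(-u)$ and $\tau(u)$ to be linearly independent unless $-u = u$, i.e.\ $2u = 0$; since $d>2$ is prime, $2$ is invertible in $\Z_d$ and this reduces to $u = 0$, in which case $\mu(a,p,0) = a\omega^p\,\1$ and the hermiticity factor simplifies to $(a^*a\ui)\omega^{-2p} = \ee^{-2i(p\arg(\omega)+\arg(a))}$. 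The main obstacle will be tracking the polar factors $r(\cdot), u(\cdot)$ consistently through the inverse formula and guarding against the $d=2$ collapses $-p = p$, $-u = u$ masking sign issues; the argument for (5) also depends essentially on the non-degeneracy of $\Omega$, which keeps the center as small as possible so that the coset count matches the dimension of $\M_{d^m}(\C)$.
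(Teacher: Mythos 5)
Your proposal is correct, and for items (1)--(3), (5), (6) it follows essentially the same computations as the paper (unitarity of the diagonal-times-shift operator, the root-of-unity character sum for the trace, the coset count $d^{2m}=\dim\M_{d^m}(\C)$ for the orthogonal-basis claim, and the inverse formula for the hermiticity factors). The genuine difference is item (4): the paper proves irreducibility independently, by the character-norm criterion $\frac{1}{|H|}\sum_{h\in H}|\tr\mu(h)|^2=1$ applied to the phase-free subgroup, whereas you reverse the logical order and deduce (4) from (5) --- since $\mu(\gk)$ contains an orthogonal (hence spanning) set of $d^{2m}$ elements of $\M_{d^m}(\C)$, any invariant subspace is invariant under the whole matrix algebra and is therefore trivial. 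Your route is more elementary (no representation-theoretic input beyond linear algebra) but makes (4) depend on the non-degeneracy bookkeeping in (5); the paper's character argument is self-contained and standard. One further point in your favour: for item (7) the paper simply asserts from $\mu(a,p,v)^\dagger \propto \mu(a,p,-v)$ that $v=0$ is forced, while you supply the missing step ($\tau(-u)\propto\tau(u)$ iff $2u=0$, then invert $2$ in $\Z_d$). Note, however, that you invoke primality of $d$, which the proposition does not assume (it only states $d\ge 2$): the argument really needs $2$ invertible in $\Z_d$, i.e.\ $d$ odd, and indeed for composite even $d$ (e.g.\ $d=4$, $u=(2e_1,0)$, giving a real diagonal $\pm1$ operator) the statement of (7) fails as written --- a defect of the paper's own statement and proof as much as of yours, so it does not count against your proposal.
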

\begin{proof}
    Regarding the unitarity and the trace, let $(a, p, u = (z, x)) \in \gk$, $s, t \in \Z_d^m$, and set $k = \abs{\supp z}$. We have
    \begin{equation}
        \braket{\mu(a, p, u)e_s, \mu(a, p, u)e_t} = \abs{a}^2 \omega^{\braket{z, s - t}}
        \braket{e_{s + x}, e_{t + x}}
        = \abs{a}^2 \omega^{\braket{z, s - t}} \delta_{s + x, t + x}
        = \abs{a}^2 \delta_{s, t}
        = \abs{a}^2 \braket{e_s, e_t}\;,
    \end{equation}
    and
    \begin{subalign}
        \tr \mu(a, p, u) &= \sum_{s \in \Z_d^m} \braket{e_s, \mu(a, p, u)e_s}\\
        &= a \omega^p \delta_{x, 0} \sum_{s \in \Z_d^m} \omega^{\braket{z, s}}\\
        &\!\overset{(*)}{=} a \omega^p \delta_{x, 0} d^{m-k} \w(\prod_{i \in \supp z} \w(\sum_{l=0}^{d-1} \omega^{z_i l}))\\
        &\!\!\overset{(**)}{=} a \omega^p \delta_{x, 0} d^{m-k} 0^k\\
        &= a \omega^p \delta_{x, 0} d^{m-k} \delta_{k, 0}\\
        &= a \omega^p d^m \delta_{u, 0}\;.
    \end{subalign}
    where in $(*)$ we decomposed the sum over $\Z_d^m$ into multiple sums over $\Z_d$ sorted w.r.t., the support of $z$, and in $(**)$ we used that $\omega$ is $d$th-root of unity, i.e., we know that for any $j \in \N_{>0}$ $\omega^j \neq 1$ solves the polynomial (in $X$; use the telescope sum)
    \begin{subalign}
        0 &= X^d - 1\\
        &= (X - 1) \sum_{l = 0}^{d-1} X^l\;.
    \end{subalign}
    Regarding the irreducibility, consider the subgroup $H = \set{\mu(0, p, u)}{p \in \Z_d, u \in \Z_d^n} \leq \gk$ we have
    \begin{equation}
        \frac{1}{\abs{H}} \sum_{h \in H} \abs{\tr \mu\w(h)}^2 = \frac{d \w(d^m)^2}{d^{n+1}} = 1\;,
    \end{equation}
    i.e., its character is normalised, and therefore $\mu(H)$ is irreducible and with that $\mu(\gk)$ is irreducible too.

    The orthogonality and norm follow from the unitarity and the previous trace calculation. Furthermore, we know that $\abs{\mu(\gk)/\cen\w(\mu\gk))} = d^n = d^{2m}$, and therefore the representatives form a maximal linearly independent set.

    Regarding the hermiticity for $d = 2$, let $(a, p, v) \in \gk_2$, then
    \begin{subalign}
        \mu(a, p, v)\ur &= \mu(a, p, v)\ui \abs{a}^2\\
        &= \mu\w((a, p, v)\ui) \abs{a}^2\\
        &= \mu\w(r\w(a\ui), p + u\w(a\ui) + v\ut W v, v) \abs{a}^2\\
        &= \mu\w(a, p + v\ut W v, v) a^* a\ui \omega^{v\ut Wv}\;,
    \end{subalign}
    Now for $d > 2$, let $(a, p, v) \in \gk_d$, we have
    \begin{equation}
        \mu(a, p, v)\ur = \mu\w(r\w(a\ui), -p + u\w(a\ui) + v\ut W v, -v) = \mu\w(a, p, -v) a^*a\ui \omega^{-2p + v\ut Wv}\;,
    \end{equation}
    so for hermicity (up to a factor), we require $v = 0$.
\end{proof}
\begin{definition}[Parafermions]\label{.parafermions}
Let \en[d] with $d\geq2$ and $n=2m$ for an \en[m]. The group of parafermions, $\mathcal{M}(d, m)$ is defined to be generated by the operators $\gamma_1, \ldots, \gamma_n \in \C^{d^m \times d^m}$, and by scalars in $A$, such that the following relations hold for all $i, j \in \{1, \ldots, n\}$ and $k, l \in \bc{0, \ldots, d-1}$:
\begin{equation}
  \gamma_i\ur = \gamma_i\ui = \gamma_i^{d-1}\;, \qquad \gamma_i^k \gamma_j^l =
  \omega^{\sign(j-i) k l} \gamma_j^l \gamma_i^k\;.
\end{equation}
Furthermore, we require that the trace of any product of those generators that does not yield the identity (up to a scalar factor) is zero.
\end{definition}
\begin{remark}
For $d = 2$, parafermions are Majorana fermions.
\end{remark}
\begin{proposition}\label{.parafermion_weyl_isomorphism}
Let \en[d] with $d\geq2$ and $n=2m$ for an \en[m]. The groups of parafermions, Define $W\in \Z_d^{n\times n}$ to be the negative lower left parity matrix, i.e., $W_{ij} = -1$ if $i > j$ and $0$ otherwise for all $i, j \in \bc{1, \ldots, n}$ and
\begin{equation}
  \tau: \Z_d^n \to \M_{d^m}\w(\C), \; x \mapsto \gamma_1^{x_1} \cdots \gamma_n^{x_n}\;.
\end{equation}
Then we have $\mathcal{M}(d, m) = \mu(\gk_d^n(A, \zeta, W))$ via $\tau$. Furthermore, if $d$ is prime, then $\Omega = P - P\ut$ is a symplectic form and we the parafermion group is unitarily equivalent to the Weyl-Heisenberg group.
\end{proposition}
\begin{proof}
It is an easy exercise to check that $\mu$ via $\tau$ is well-defined, which then proves the first statement. Now let $d$ be prime. One can convince oneself that
\begin{equation}
  \Omega = \matp{
    0 & 1 & \cdots & \cdots & 1 \\
    -1 & \ddots & \ddots & & \vdots \\
    \vdots & \ddots & \ddots & \ddots & \vdots \\
    \vdots & & \ddots & \ddots & 1 \\
    -1 & \cdots & \cdots & -1 & 0
  }
\end{equation}
has the inverse (alternating signs on the diagonals; note that this works only because $n$
is an even number)
\begin{equation}
  \Omega\ui = \matp{
    0 & -1 & 1 & -1 & \cdots & 1 & -1 \\
    1 & 0 & -1 & 1 & -1 & \ddots & 1 \\
    -1 & 1 & \ddots & \ddots & \ddots & \ddots & \vdots \\
    1 & -1 & \ddots & \ddots & \ddots & \ddots & -1 \\
    \vdots & 1 & \ddots & \ddots & \ddots & \ddots & 1 \\
    -1 & \ddots & \ddots & \ddots & \ddots & 0 & -1 \\
    1 & -1 & \cdots & 1 & -1 & 1 & 0
  }\;,
\end{equation}
which proves that $\Omega$ is symplectic. Furthermore, both parafermions and Weyl operators have a vanishing trace for non-trivial elements and it is easy to see that the unitarity condition in \cref{.polar_commutator_similarity_unitary} is fulfilled. Therefore, the parafermion group is unitary equivalent to the Weyl-Heisenberg group.
\end{proof}
\begin{proposition}[Explicit parafermions \cite{fendley_parafermionic_edge_zero_modes}]
Let \en[d] with $d\geq2$, $n=2m$ for an \en[m] and $\tau: \Z_d^m \times \Z_d^m \to \M_{d^m}\w(\C)$ be defined as in \cref{.weyl_heisenberg_group_alt}. Then, the operators $\gamma_1, \ldots, \gamma_n$ in \cref{.parafermions} can be explicitly constructed via (compare \cref{x"pauli_to_majorana})
\begin{align}
  \gamma_{2i-1} &= \tau\w(e_i, 0) \prod_{j<i} \tau(0, e_j)\;,\\
  \gamma_{2i} &= \tau\w(0, e_i) \gamma_{2i-1} \;,
\end{align}
for $i \in \bc{1, \ldots, m}$, where $e_1, \ldots, e_m$ are the Euclidean basis vectors of $\Z_d^m$.
\end{proposition}
\begin{proof}
We first prove the commutation relations. Let $i, j \in \bc{1, \ldots, m}$ with $i < j$; then
\begin{align}
  \gamma_{2i-1} \gamma_{2j-1} &= \tbk{\tau(e_i, 0), \tau(0, e_i)} \gamma_{2j-1}
  \gamma_{2i-1} = \omega \gamma_{2j-1}\\
  \gamma_{2i} \gamma_{2j} &= \tbk{\gamma_{2i-1}, \gamma_{2j-1}} \gamma_{2j} \gamma_{2i} =
  \omega \gamma_{2j} \gamma_{2i}\\
  \gamma_{2i-1} \gamma_{2j} &= \tbk{\gamma_{2i-1}, \gamma_{2j-1}} \gamma_{2j}
  \gamma_{2i-1} = \omega \gamma_{2j} \gamma_{2i-1}\\
  \gamma_{2i} \gamma_{2j-1} &= \tbk{\gamma_{2i-1}, \gamma_{2j-1}} \gamma_{2j-1}
  \gamma_{2i} = \omega \gamma_{2j-1} \gamma_{2i}\;.
\end{align}
Now let $i = j$:
\begin{align}
  \gamma_{2i-1} \gamma_{2i} &= \tbk{\tau(e_i, 0), \tau(0, e_i)} \gamma_{2i}
  \gamma_{2i-1} = \omega \gamma_{2i} \gamma_{2i-1}\\
  \gamma_{2i} \gamma_{2i-1} &= \tbk{\tau(0, e_i), \tau(e_i, 0)} \gamma_{2i-1}
  \gamma_{2i} = \omega^{-1} \gamma_{2i-1} \gamma_{2i}\;.
\end{align}
This proves the commutation relations. It is clear that $\gamma_{2i-1}^d = \1$ and furthermore we have
\begin{equation}
  \gamma_{2i}^d = \tbk{\gamma_{2i-1}, \tau(0, e_i)}^{\sum_{l=0}^{d-1} l} \1 =
  \omega^{d(d-1)/2} \1 = \1\;.
\end{equation}
Finally, the trace condition is fulfilled since it is fulfilled for the Weyl operators.
\end{proof}

\begin{acronym}[\textsf{QMA}]
\acro{scf}[SCF]{simplicial, claw-free}
\acro{qma}[\textsf{QMA}]{Quantum Merlin-Arthur\acroextra{ complexity class}}
\acro{np}[\textsf{NP}]{Nondeterministic polynomial time\acroextra{ complexity class}}
\end{acronym}

\end{document}